\documentclass[11pt,a4paper]{article}

\usepackage{amsmath, latexsym, amsfonts, amssymb, amsthm}
\usepackage{graphicx, color,hyperref,dsfont,epsfig,caption,wrapfig,subfig}
\usepackage{pstricks}
\usepackage{movie15}
\usepackage{geometry}
 \geometry{
 a4paper,
 total={174mm,250mm},
 left=20mm,
 top=20mm,
 }
\hypersetup{
    linktoc=page,
    linkcolor=red,          
    citecolor=blue,        
    filecolor=blue,      
    urlcolor=cyan,
    colorlinks=true           
}
\usepackage[makeroom]{cancel}
\usepackage{cite}
\usepackage[affil-it]{authblk}

\definecolor{mygreen}{RGB}{34,139,34} 
\definecolor{myblue}{RGB}{0,0,205}
\definecolor{myred}{RGB}{178,34,34}
\definecolor{myorange}{RGB}{255,127,80}
\definecolor{mylilas}{RGB}{170,55,241}

\numberwithin{equation}{section}
\newtheorem{theo}{Theorem}[section]
\newtheorem{lem}[theo]{Lemma}
\newtheorem{prop}[theo]{Proposition}

\newtheorem{rem}[theo]{Remark}
\newtheorem{definition}[theo]{Definition}

\theoremstyle{definition}

\newcommand{\E}{\mathbb{E}}
\newcommand{\intc}{\int_{\mathbb{C}}}
\newcommand{\intr}{\int_{\mathbb{R}}}
\newcommand{\hg}{\hat{g}}
\renewcommand{\lg}{\langle}
\newcommand{\rg}{\rangle}
\newcommand{\alp}{\boldsymbol{\alpha}}
\newcommand{\x}{\mathbf{x}}
\newcommand{\n}{\mathbf{n}}
\newcommand{\y}{\mathbf{y}}
\newcommand{\z}{\mathbf{z}}
\newcommand{\q}{\mathbf{q}}

\numberwithin{equation}{section}
\hypersetup{
    colorlinks,
    linkcolor={red!50!black},
    citecolor={blue!50!black},
    urlcolor={blue!80!black}
}

\title{\textsc{Higher order BPZ equations for Liouville Conformal Field Theory}}
\author{Tunan Zhu\footnote{This work was supported by grants from R\'egion Ile-de-France.}}
\affil{D\'epartement de Math\'ematiques et Applications,\\ \'Ecole normale sup\'erieure de Paris.}

\begin{document}

\maketitle

\begin{abstract}
Inspired by some intrinsic relations between Coulomb gas integrals and Gaussian multiplicative chaos, this article introduces a general mechanism to prove BPZ equations of order $(r,1)$ and $(1,r)$ in the setting of probabilistic Liouville conformal field theory, a family of conformal field theory which depends on a parameter $\gamma \in (0,2)$. The method consists in regrouping singularities on the degenerate insertion, and transforming the proof into an algebraic problem. With this method we show that BPZ equations hold on the sphere for the parameter $\gamma \in [\sqrt{2},2)$ in the case $(r,1)$ and for $\gamma \in (0,2)$ in the case $(1,r)$. The same technique applies to the boundary Liouville field theory when the bulk cosmological constant $\mu_{\text{bulk}} = 0$, where we prove BPZ equations of order $(r,1)$ and $(1,r)$ for $\gamma \in (0,2)$.  \\

{\textbf{Key words:} BPZ equations; Gaussian multiplicative chaos; Coulomb gas integrals; Liouville quantum gravity; Conformal field theory.}
\end{abstract}

\tableofcontents

\section{Introduction}

Liouville conformal field theory (LCFT) falls within the general framework of conformal field theory (CFT). One of the main goals of the theory is to characterize the correlation functions, which can be considered as probability amplitudes for some interacting particle system. A direct relevance with probability theory is their conjectured relation to the scaling limit of large planar maps via the so-called KPZ relation \cite{Knizhnik1988}.

The purpose of this paper is to show that certain correlation functions of LCFT satisfy the Belavin-Polyakov-Zamolodchikov (BPZ) equations, which were first proposed in 1984 \cite{belavin1984infinite} in the general context of CFT. The BPZ equations are indexed by two parameters $ (r, s) $, with $ r, s $ positive integers. The equation associated with the parameter $ (r, s) $ is a partial differential equation of order $ rs $ in several complex variables. There is no general combinatorial formula for the BPZ equations of all orders $(r,s)$. Nevertheless, in 1988, Beno\^it and Saint-Aubin (BSA, \cite{Benoit1988}) found an explicit formula for the BPZ equations of order $ (r, 1) $ and $(1,r)$. The approach that the authors employed is based on the theory of representations. Despite the simplicity, this approach lacks rigorous definitions of the objects involved. 

Recently, in a rigorous mathematical framework, a probabilistic approach to LCFT has been proposed in David-Kupiainen-Rhodes-Vargas \cite {David2016}: the authors construct the correlation functions of LCFT on the sphere using Gaussian multiplicative chaos (GMC). The challenge is to show that the probabilistic setting allows to prove the conjectures made in the physics literature. In this direction, the BPZ equations of order $(2,1)$ and $(1,2)$ have been proved in \cite{Kupiainen2015local}, which constitutes an important step in proving the remarkable DOZZ formula \cite{Kupiainen2017DOZZ}, first proposed by Dorn-Otto-Zamolodchikov-Zamolodchikov \cite{Dorn1994, Zamolodchikov1995}.

Other results on GMC are also proved in different geometries based on the BPZ equations, such as the Fyodorov Bouchaud's formula \cite{Remy2017}, the probabilistic distribution of GMC on the unit interval \cite{Remy2018} and exact formulas for the boundary Liouville structure constants \cite{Fateev2000, Ponsot2002} in an upcoming work. All these series of projects prove the BPZ equations of order $(2,1)$ and $(1,2)$ in a different setting and use this to deduce non trivial shift equations of the object in question, which corresponds to the conformal bootstrap method in physics. 

As a matter of fact, higher order BPZ equations can also be used to deduce exact formulas of certain correlation functions of LFCT, such as the integral forms introduced by Fateev-Litvinov \cite{Fateev2007coulomb, Fateev2008multipoint}. The idea is to first show that the solution space of higher order BPZ equations is of dimension $1$ using monodromy arguments \cite{Dotsenko1984conformal, Dotsenko1985four-point}. It is not hard to verify that the integral forms of Fateev-Litvinov satisfy higher order BPZ equations using its relation with Coulomb gas integrals and analycity of its parameters, especially analycity in $\gamma$. The hard part is to show higher order BPZ equations for Liouville correlation functions where the analycity in $\gamma$ is an open problem. 

In this article, we investigate the intrinsic problem lying in the BPZ equations for LCFT on the sphere and on the unit disk, and we prove that the BSA formula for BPZ equations of order $(r,1)$ and $(1,r)$ holds true for these two cases under some constraints.

\subsection{Basic notions}
The Gaussian free field with vanishing mean on the Riemannian sphere $(\mathbb{C},\hg)$, with $\hg (x) := \frac{4}{(|x|^2+1)^2}$, has covariance given by \cite {David2016}:
\begin{equation}
\E[X(x)X(y)] = \ln \frac{1}{|x-y|}-\frac{1}{4}\left( \ln \hg(x) +\ln \hg(y) \right) +\ln 2 - \frac{1}{2}.
\end{equation}  
Because of the singularity of its covariance, $X$ is not defined pointwise and lives in the space of distributions. We use a regularization for the Gaussian free field $X_{\epsilon} = X*\eta_{\epsilon}$, where the function $\eta_{\epsilon}$ is defined by $\eta_{\epsilon}=\frac{1}{\epsilon^2}\eta(\frac{|x|^2}{\epsilon^2})$, and $\eta \in \mathcal{C}^{\infty}$ is a non-negative smooth function defined on $\mathbb{R}_+$ with compact support in $[\frac{1}{2},1]$ that satisfies $ \pi\int_0^{\infty} \eta (t) dt = 1$. The variance of the regularized field $X_{\epsilon}$ is given by:
\begin{equation}
\E[X_{\epsilon}(x)^2]  = -\frac{1}{2}\ln \hg_{\epsilon}(x)+\ln 2 - \frac{1}{2}, 
\end{equation}
where $\hg_{\epsilon} = \hg * \eta_{\epsilon}$. 

We define the associated GMC measure \cite{Kahane1985} by a standard regularization procedure: for $\gamma \in (0,2)$,
\begin{equation}
e^{\gamma X(x)} \hg(x) d^2x := \lim_{\epsilon \to 0} e^{\gamma X_{\epsilon}(x) - \frac{\gamma^2}{2} \E[X_{\epsilon}(x)^2]} \hg_{\epsilon}(x) d^2x.
\end{equation}
The above convergence is in probability in the weak topology of measures, i.e. for any continuous test function $f:\mathbb{C} \cup \{\infty \} \to \mathbb{R}$, the following limit holds in probability:
\begin{equation}
\intc f(x) e^{\gamma X(x)} \hg(x) d^2x := \lim_{\epsilon \to 0} \intc f(x) e^{\gamma X_{\epsilon}(x) - \frac{\gamma^2}{2} \E[X_{\epsilon}(x)^2]} \hg_{\epsilon}(x) d^2x.
\end{equation}
For an elementary proof of this, see \cite{Berestycki2017}.

Denote $\z = (z_1,\dots, z_N)$, and 
\begin{equation}
U_{N}:=\{(z_1',\dots, z_N') \in \mathbb{C}^N: \forall i \neq j, z_i' \neq z_j' \}.
\end{equation}
We define
\begin{equation}
Q= \frac{\gamma}{2}+\frac{2}{\gamma},
\end{equation}
which is related to the central charge of the LCFT by the formula $c =  1+6Q^2$.
Let us introduce the probabilistic Liouville correlation functions first defined in \cite{David2016}. The definition we give here is coherent with the physics literature and is different from the definition in \cite{David2016} by a multiplicative factor that is of no importance in the setting of this paper.
 
\begin{definition}[Liouville correlations]
For $N \in \mathbb{N}^*$, $\alp \in \mathbb{R}^N$ and $\z \in U_N$, the correlation functions are defined as follows:
	\begin{equation}\label{correlation limit}
	\begin{split}
	&\lg \prod_{l=1}^N V_{\alpha_l}(z_l)\rg := \lim_{\epsilon \to 0} \lg \prod_{l=1}^N V_{\alpha_l,\epsilon}(z_l)\rg\\
	=& \lim_{\epsilon\to 0} 2 e^{(\ln 2 -\frac{1}{2})(\frac{1}{2}\sum_{l=1}^N \alpha_l^2 -\frac{\gamma}{2} \sum_{l=1}^N \alpha_l -\frac{4Q}{\gamma})}\intr e^{-2Qc}\,\E\Big[\prod_{l=1}^N g_{\epsilon}(z_l)^{\Delta_{\alpha_l}} e^{\alpha_l (X_{\epsilon}(z_l)+c) -\frac{\alpha_l^2}{2} \E[X_{\epsilon}(z_l)^2]} \\
	& \quad \times e^{-\mu e^{\gamma c}\intc e^{\gamma X_{\epsilon}(x) -\frac{\gamma^2}{2}\E[X_{\epsilon}(x)^2]} \hg_{\epsilon}(x) d^2x}\Big]dc,
	\end{split}
	\end{equation}
where $\Delta_{\alpha} := \frac{\alpha}{2}(Q-\frac{\alpha}{2})$ is the conformal weight, and $\mu > 0$ is the cosmological constant.

When the Seiberg bounds $\sum_{i=1}^N \alpha_i >2Q$ and $\forall i, \alpha_i < Q$ are satisfied, the limit above exists and converges to the following expression:
\begin{equation}\label{correlation expression}
 Z(\alp)\prod_{1 \le i < j \le N }\frac{1}{|z_i-z_j|^{\alpha_i\alpha_j}} \E\left[\left(\int_{\mathbb{C}} \frac{e^{\gamma X(x)}\hg (x)^{1-\frac{\gamma}{4} \sum_{i=1}^N \alpha_i}d^2x}{\prod_{k=1}^N |x-z_k|^{\gamma\alpha_k} }\right)^{-\frac{\sum_{i=1}^N \alpha_i -2Q}{\gamma}}\right], \\
\end{equation}
where 
\begin{equation}\label{equation Z}
Z(\alp) := 2e^{-\frac{(\ln 2-1/2)}{2}(\sum_{i=1}^N \alpha_i - 2Q)(\sum_{i=1}^N \alpha_i - \frac{4}{\gamma})} \gamma^{-1}\Gamma\left(\frac{\sum_{i=1}^N \alpha_i -2Q}{\gamma},\mu \right).
\end{equation}
\end{definition}

Note that with different conventions, the constant term $Z(\alp)$ can differ, but this will not have any impact on the differential equations. In the notation of $Z(\alp)$ we ignore the dependence on $\gamma$ because the parameter $\gamma$ should be fixed at first to define the background geometry of Liouville fields. The constraint $\sum_{i=1}^N \alpha_i >2Q$ is actually subject to the pole at $0$ of the Gamma function. If we remove the gamma function, the domain of existence can be extended \cite{David2016}:

\begin{lem}[Existence]
$\lg \prod_{l=1}^N V_{\alpha_l}(z_l)\rg/Z(\alp)$ is non trivial if and only if 
\begin{equation}\label{bounds alpha}
(\forall i, \alpha_i < Q) \wedge \left( -\frac{\sum_{i=1}^N \alpha_i -2Q}{\gamma} < \frac{4}{\gamma^2} \wedge \min_i \left\{ \frac{2}{\gamma}(Q-\alpha_i) \right\} \right).
\end{equation}
\end{lem}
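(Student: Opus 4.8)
The plan is to strip away the deterministic factors and reduce the statement to a pure integrability question for a single Gaussian multiplicative chaos (GMC) integral. Writing $M(d^2x) := e^{\gamma X(x)}\hg(x)\,d^2x$ for the GMC measure and
\[ F(x) := \frac{\hg(x)^{-\frac{\gamma}{4}\sum_{i}\alpha_i}}{\prod_{k=1}^N|x-z_k|^{\gamma\alpha_k}}, \qquad I := \intc F(x)\,M(d^2x), \qquad s := -\frac{\sum_{i=1}^N\alpha_i-2Q}{\gamma}, \]
the right-hand side of \eqref{correlation expression} divided by $Z(\alp)$ equals $\prod_{i<j}|z_i-z_j|^{-\alpha_i\alpha_j}\,\E[I^s]$. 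Since $\z\in U_N$ the prefactor is finite and nonzero, and since $M$ charges every open set while $F>0$, one has $I\in(0,\infty]$ and hence $I^s>0$ almost surely; thus positivity of $\E[I^s]$ is automatic and the whole content is the finiteness of $\E[I^s]\in(0,\infty]$. I would separate this into two questions: (a) almost sure finiteness $I<\infty$, and (b) finiteness of the $s$-th moment once $I<\infty$.

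For (a) I would perform a dyadic decomposition around each singular point. Near $z_i$, writing $A_n^{(i)}$ for the annulus $\{2^{-n-1}\le|x-z_i|<2^{-n}\}$, the local contribution is comparable to $\sum_n 2^{n\gamma\alpha_i}M(A_n^{(i)})$. The scaling relation for GMC gives $\E[M(A_n^{(i)})^q]\asymp 2^{-n\xi(q)}$ with $\xi(q)=\gamma Q\,q-\frac{\gamma^2}{2}q^2$ (valid for $q<\tfrac{4}{\gamma^2}$), and in particular the almost sure exponent of $M(A_n^{(i)})$ is $\gamma Q$. Hence the general term behaves like $2^{n\gamma(\alpha_i-Q)}$, whose sum converges almost surely exactly when $\alpha_i<Q$ and diverges when $\alpha_i\ge Q$. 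The point at infinity requires no separate condition: under the inversion $x\mapsto 1/x$ the factors $\hg(x)^{-\frac{\gamma}{4}\sum\alpha}\sim|x|^{\gamma\sum\alpha}$ and $\prod_k|x-z_k|^{-\gamma\alpha_k}\sim|x|^{-\gamma\sum\alpha}$ cancel, so $F$ tends to a constant there and $I$ inherits the finiteness of the total GMC mass. This yields the first clause $\forall i,\ \alpha_i<Q$ together with its necessity.

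For (b) the case $s\le 0$ is immediate, since positive GMC integrals possess all negative moments, so I would only treat $s>0$ (that is, $\sum_i\alpha_i<2Q$), where the two thresholds in \eqref{bounds alpha} become genuine constraints. I would split $I$ into a bulk part, supported away from the $z_i$, and local parts $I_i$ near each $z_i$. The bulk part is a bounded weight times a GMC mass, whose $s$-th moment is finite iff $s<\frac{4}{\gamma^2}$, the standard positive-moment threshold for GMC. For the local part, with $T_n:=2^{n\gamma\alpha_i}M(A_n^{(i)})$ one computes $\E[T_n^s]\asymp 2^{n[s\gamma\alpha_i-\xi(s)]}$ and
\[ s\gamma\alpha_i-\xi(s)=s\gamma\Big[(\alpha_i-Q)+\tfrac{\gamma}{2}s\Big], \]
so the exponent is negative precisely when $s<\frac{2}{\gamma}(Q-\alpha_i)$. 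For $0<s\le 1$, subadditivity $(\sum_n T_n)^s\le\sum_n T_n^s$ turns summability of $\E[T_n^s]$ into $\E[I_i^s]<\infty$, and combining the bulk and local bounds (again by subadditivity) gives $\E[I^s]<\infty$ under both thresholds; the conjunction over $i$ and with $\frac{4}{\gamma^2}$ is exactly the second clause of \eqref{bounds alpha}. For necessity I would use the reverse inequalities $I\ge c_0\,I_i\ge c_0\,T_n$ for every $n$, whence $\E[I^s]\ge c_0^s\sup_n\E[T_n^s]=\infty$ as soon as $s>\frac{2}{\gamma}(Q-\alpha_i)$ (using $\E[M(A_n^{(i)})^s]\asymp 2^{-n\xi(s)}$, valid when $s<\frac{4}{\gamma^2}$), while $\E[I^s]=\infty$ directly whenever $s\ge\frac{4}{\gamma^2}$ since then even the bulk GMC mass has infinite $s$-th moment; the two sub-cases interlock to cover every violation.

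The main obstacle is the regime $s>1$, where the elementary subadditivity used above fails and one must instead invoke Minkowski's inequality together with the sharp multifractal moment estimates for GMC to bound $\E[I^s]$ by the contributions of the individual scales and of the bulk, thereby upgrading the scale-by-scale comparison to a statement about the full integral. The second delicate point is the behaviour exactly at the boundaries $s=\frac{2}{\gamma}(Q-\alpha_i)$ and $s=\frac{4}{\gamma^2}$, where the leading exponents vanish and logarithmic corrections decide divergence; there I expect to need a finer truncation or second-moment argument (or a Gaussian comparison via Kahane's inequality) to confirm that the moment is already infinite at the threshold, which is what forces the strict inequalities appearing in \eqref{bounds alpha}.
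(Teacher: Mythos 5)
The paper does not actually prove this lemma: it is quoted from \cite{David2016}, and your argument is essentially the standard one given there — reduce to finiteness of $\E[I^s]$ for the weighted GMC integral, analyse each insertion by a dyadic annulus decomposition with the multifractal exponent $\xi(q)=\gamma Qq-\frac{\gamma^2}{2}q^2$, handle $0<s\le 1$ by subadditivity and $s>1$ by Minkowski's inequality in $L^s$, and obtain necessity from a single-annulus lower bound together with the $\frac{4}{\gamma^2}$ moment threshold for the total mass. Two places in your sketch are thinner than a complete write-up requires: the almost-sure divergence of $I$ when some $\alpha_i\ge Q$ needs a quantitative lower bound on $M(A_n^{(i)})$ along a subsequence of scales (knowing only that the almost-sure exponent equals $\gamma Q$ does not by itself decide the convergence of $\sum_n 2^{n\gamma(\alpha_i-Q)}$-type sums when $\alpha_i=Q$), and the borderline cases $s=\frac{2}{\gamma}(Q-\alpha_i)$ and $s=\frac{4}{\gamma^2}$ — which you correctly flag — do require the finer argument you anticipate in order to justify the strict inequalities in \eqref{bounds alpha}. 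Neither is a structural gap; the decomposition, the exponents, and the threshold computations all match the cited reference.
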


This bound is actually the constraint on the moment of total mass of GMC with log singularities. It allows to have positive moments in the expectation. 

Now let us discuss briefly about the regularity of each parameter. It is not hard to show that $\lg \prod_{l=1}^N V_{\alpha_l}(z_l)\rg$ is continuous in $\gamma$, $\alp$ and $\z$ respectively, but we can go further. Correlation functions are actually analytic in $\alp$, proved in \cite{Kupiainen2017DOZZ} for $\alpha_i$ with small imaginary part (the domain of analyticity was then extended in \cite{2018Huang}). The correlation functions are also smooth in $\z$, as proved in a recent work by Oikarinen \cite{Oikarinen2018Smoothness}:

\begin{lem}[Smoothness]\label{lem smoothness}
$U_N \ni \z \mapsto \lg \prod_{l=1}^N V_{\alpha_l}(z_l)\rg$ is $\mathcal{C}^{\infty}$.
\end{lem}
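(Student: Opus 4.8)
The statement concerns the full correlation as a function on $U_N$; since on any compact $K\subset U_N$ the insertions stay uniformly separated, the prefactor $\prod_{i<j}|z_i-z_j|^{-\alpha_i\alpha_j}$ is already $\mathcal C^\infty$ there, and the whole problem reduces to the smoothness of
\begin{equation}
G(\z):=\E\big[I(\z)^{-s}\big],\qquad
I(\z):=\intc\frac{e^{\gamma X(x)}\,\hg(x)^{1-\frac{\gamma}{4}\sum_i\alpha_i}}{\prod_{k=1}^N|x-z_k|^{\gamma\alpha_k}}\,d^2x,\qquad
s:=\frac{\sum_i\alpha_i-2Q}{\gamma}.
\end{equation}
The first ingredient I would establish is that the positive random variable $I$ has all negative moments finite, locally uniformly on $K$: $\sup_{\z\in K}\E[I(\z)^{-p}]<\infty$ for every $p>0$. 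This comes from the small-deviation (left-tail) estimates for the total mass of GMC, together with a lower bound on $I$ obtained by restricting the integral to a fixed region bounded away from all the $z_k$ uniformly over $K$. These negative moments are what will make the H\"older estimates below summable.

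The natural attempt is to differentiate $G$ under the expectation and under the $x$-integral. Using Wirtinger derivatives and $|x-z_j|^{-\gamma\alpha_j}=(x-z_j)^{-\gamma\alpha_j/2}(\bar x-\bar z_j)^{-\gamma\alpha_j/2}$, the operator $\partial_{z_j}^{\,n}\partial_{\bar z_j}^{\,m}$ turns this kernel into a constant times $(x-z_j)^{-\gamma\alpha_j/2-n}(\bar x-\bar z_j)^{-\gamma\alpha_j/2-m}$, of modulus $|x-z_j|^{-\gamma\alpha_j-n-m}$ and angular phase $e^{-\mathrm i(n-m)\theta}$ in polar coordinates $x-z_j=re^{\mathrm i\theta}$. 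By the Fa\`a di Bruno formula the corresponding derivative of $G$ is a finite linear combination of terms $\E\big[I^{-s-l}\prod_p J_p\big]$, where each $J_p$ is a GMC integral of such a singular, oscillatory kernel. When $n\neq m$ the phase $e^{-\mathrm i(n-m)\theta}$ produces angular cancellation that renders these integrals convergent well beyond what their modulus alone would allow; the amount of extra convergence can be quantified through moment estimates on the angular oscillation of the GMC mass of dyadic annuli around $z_j$.

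The main obstacle is the purely radial contribution, $n=m$: here the kernel has modulus $|x-z_j|^{-\gamma\alpha_j-2n}$ with no phase, and for large $n$ this singular integral is almost surely infinite, so $\z\mapsto I(\z)^{-s}$ is in general \emph{not} smooth pathwise and one cannot differentiate term by term. Smoothness of $G$ must instead come from the regularizing effect of the expectation, i.e.\ from cancellations among the Fa\`a di Bruno terms after integrating out the randomness. I would make this rigorous by excising a disc $B(z_j,\epsilon)$ around each insertion to define a regularized $I_\epsilon$, smooth in $\z$ with derivatives given by the now convergent truncated integrals, and then show that the $\epsilon$-derivatives of $\E[I_\epsilon^{-s}]$ converge as $\epsilon\to0$, uniformly on $K$: the divergent radial pieces coming from different terms cancel in this limit, and the surviving quantities are controlled by combining the finiteness of all negative moments of $I$, H\"older's inequality, and the multifractal scaling of the GMC mass of the annuli $\{2^{-k-1}\le|x-z_j|/\delta<2^{-k}\}$. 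An equivalent and perhaps cleaner route is to realize each $z_j$-derivative as a Cameron--Martin (Gaussian integration-by-parts) derivative, which transfers the singular spatial derivative directly onto the field and manufactures the required cancellation; either way, uniform convergence of all derivatives over $K$ upgrades the conclusion from $\mathcal C^k$ to $\mathcal C^\infty$ by induction on the order.
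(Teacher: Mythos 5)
First, a point of reference: the paper does not prove this lemma at all --- it is quoted from Oikarinen \cite{Oikarinen2018Smoothness}, whose argument runs through the Gaussian-integration-by-parts representation of the derivatives (the same derivative rule recalled in Lemma \ref{lem derivation G} and Appendix \ref{section proof derivative}), iterated, together with fusion estimates for correlation functions carrying extra $V_\gamma$ insertions near $z_j$. Your primary route --- differentiating the kernel $\prod_k|x-z_k|^{-\gamma\alpha_k}$ directly and expanding $\E[I^{-s}]$ by Fa\`a di Bruno --- is genuinely different; you mention the integration-by-parts route only as a closing aside and do not develop it either.

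Your diagnosis of where the difficulty sits is accurate, but the two steps that would constitute the actual proof are asserted rather than established. (i) For $n\neq m$ you appeal to ``angular cancellation \dots quantified through moment estimates on the angular oscillation of the GMC mass of dyadic annuli.'' That quantitative near-rotational-invariance of the GMC mass around an insertion, with errors that remain summable after multiplication by $I^{-s-l}$ and the other singular factors, \emph{is} the hard estimate (it is the analogue of Oikarinen's fusion lemmas), and nothing in the proposal produces it. (ii) For $n=m$ your observation is correct that, e.g., $\partial_{z_j}\partial_{\bar z_j}I=\tfrac{\gamma^2\alpha_j^2}{4}\int|x-z_j|^{-\gamma\alpha_j-2}e^{\gamma X(x)}\hg(x)^{\cdots}\prod_{k\neq j}|x-z_k|^{-\gamma\alpha_k}d^2x$ is almost surely infinite as soon as $\alpha_j\ge \gamma/2$, so that in $s(s+1)\E\big[I^{-s-2}|\partial_{z_j}I|^2\big]-s\,\E\big[I^{-s-1}\partial_{z_j}\partial_{\bar z_j}I\big]$ both terms diverge individually and only their combination can be finite. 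But the claim that after excising $B(z_j,\epsilon)$ ``the divergent radial pieces coming from different terms cancel in this limit'' is precisely the statement to be proved: it requires the $\epsilon\to 0$ asymptotics, with uniform error control, of quantities such as $\E\big[I_\epsilon^{-s-1}\int_{B(z_j,\epsilon)}|x-z_j|^{-\gamma\alpha_j-2}e^{\gamma X}\cdots\big]$, i.e.\ the precise fusion behaviour of $\lg V_{\gamma}(y)V_{\alpha_j}(z_j)\cdots\rg$ as $y\to z_j$ including the subleading term. Your first ingredient (all negative moments of $I$, locally uniformly) is correct and standard, but negative moments plus H\"older are blind to these cancellations. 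As written, the proposal is a correct map of the obstruction, not a proof of the lemma.
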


Let us give the definition for boundary Liouville correlations on the unit disk represented by $(\mathbb{H},\hg_{\mathbb{H}})$, where $\hg_{\mathbb{H}}(x) = \frac{4}{|x+i|^4}$ is the background metric. It was studied in \cite{2015Huang} and here we give a version that adds different boundary cosmological constants $\mu_i$
\begin{definition}[Boundary Liouville correlations]
Define the Gaussian free field with Newmann boundary conditions and vanishing mean on the boundary:
\begin{equation}
\E[X_{\mathbb{H}}(x) X_{\mathbb{H}}(y)] = \ln \frac{1}{|x-y||x-\bar{y}|} - \frac{1}{2} \ln \hg_{\mathbb{H}}(x) - \frac{1}{2} \ln \hg_{\mathbb{H}} (y).
\end{equation}
Let $z_1,\dots,z_N \in \mathbb{H}$ pairwise distinct, $-\infty <t_1<\dots <t_M<\infty$ and $\mu_1, \dots, \mu_M >0$ with $\mu_0=\mu_M$ by convention, then the boundary Liouville correlations with $\mu_{\text{bulk}} = 0$ are defined as
\begin{align}
&\lg \prod_{i=1}^NV_{\alpha_i}(z_i)\prod_{j=1}^M B^{\mu_{j-1},\mu_{j}}_{\beta_j}(t_j) \rg_{\mathbb{H}} \nonumber\\
:=&Z(\alp; \boldsymbol{\beta})\prod_{1\le i<i' \le N}(|z_i-z_{i'}||z_i-\overline{z}_{i'}|)^{-\alpha_i\alpha_{i'}}  \prod_{1\le i\le N, 1\le j \le M}|z_i-t_j|^{-\alpha_i\beta_j}\prod_{1\le j<j' \le M}|t_j-t_{j'}|^{-\beta_j\beta_{j'}/2}\nonumber\\
& \times \E\left[ \left(\int_{\mathbb{R}}\frac{1}{\prod_{i=1}^N|u-z_i|^{\gamma\alpha_i} \prod_{j=1}^M|u-t_j|^{\gamma\beta_j/2}}e^{\frac{\gamma}{2}X_{\mathbb{H}}(u)} \hg_{\mathbb{H}}(u)^{\frac{\gamma^2}{8}(p+1)}  d\mu_{\partial}(u) \right)^{-p} \right],
\end{align}
where $p = \frac{2(\sum_{i=1}^N \alpha_i +\frac{1}{2}\sum_{j=1}^M \beta_j-Q)}{\gamma}$ and 
\begin{equation}
\frac{d\mu_{\partial}(u)}{du} = \sum_{j=1}^{M-1} \mu_j \mathbf{1}_{t_j<u < t_{j+1}} + \mu_M \mathbf{1}_{u \notin (t_1, t_M)}.
\end{equation}
\end{definition}
\begin{rem}
$\lg \prod_{i=1}^NV_{\alpha_i}(z_i)\prod_{j=1}^M B^{\mu_{j-1},\mu_{j}}_{\beta_j}(t_j) \rg_{\mathbb{H}}/Z(\alp; \boldsymbol{\beta})$ is well defined if and only if
\begin{equation}
(\forall i, \alpha_i < Q) \wedge ( \forall j, \beta_j < Q) \wedge  \left( -p < \frac{4}{\gamma^2} \wedge \min_{j} \left\{ \frac{2}{\gamma}(Q-\beta_j) \right\} \right).
\end{equation}
\end{rem}

The different values of $\mu_j$ represent boundary cosmological constants on each piece of the boundary. We can send some of the $\mu_j$ to $0$ as long as $d\mu_{\partial}$ is non trivial. The expression of normalization factor $Z(\alp; \boldsymbol{\beta})$ is of no importance since we are only interested in differential equations in $\z$. 

Let us also introduce Coulomb gas integrals that will be useful for proving the BPZ equations. We will explain later in section \ref{section strategy} how these integrals are related to Liouville correlation functions.

\begin{definition}[Coulomb gas integrals]\label{definition Coulomb}
Let $\z \in U_N$, $l \in \mathbb{N}^*$. Define the complex Coulomb gas integrals
\begin{equation}
\mathfrak{C}^{(l)}_{\alp}(\z):= \prod_{1\le i< j \le N} |z_i-z_j|^{-\alpha_i\alpha_j} \int_{\mathbb{C}^{l}}  \prod_{1 \le i \le N , 1\le s \le l}|y_s-z_i|^{-\gamma\alpha_i}\prod_{1\le s<s' \le l }|y_s-y_{s'}|^{-\gamma^2} d^2 \mathbf{y}.
\end{equation}
The integral converges when $\gamma^2< \frac{4}{l}$, $ \forall i\,\,\,\alpha_i < \frac{2}{\gamma}-\frac{(l-1)\gamma}{2}$ and $\sum_{i=1}^N \alpha_i > \frac{2}{\gamma}-\frac{(l-1)\gamma}{2}$.
In the proof of the BPZ equations, we will need real Coulomb gas integrals: for $t_0 < t_1 <\dots < t_N (N \ge 2)$, 
\begin{equation}
C^{(l)}_{\alpha_0,\alp}(t_0,\mathbf{t}):= \prod_{0\le i< j \le N} (t_j-t_i)^{-\frac{\alpha_i\alpha_j}{2}} \int_{t_{N-1}< x_1 < \dots < x_l < t_N} \prod_{0 \le i \le N , 1\le s \le l}(x_s-t_i)^{-\frac{\gamma\alpha_i}{2}} \prod_{1\le s'<s \le l }(x_s-x_{s'})^{-\frac{\gamma^2}{2}} d \mathbf{x},
\end{equation}
where $(-1)^{\alpha}$ depends on the choice of contour and is set to be $e^{i \alpha \pi}$. Especially, when $(\gamma,\alpha_{N-1},\alpha_N) \in (i\mathbb{R}_+)^{3}$, $C^{(l)}_{\alpha_0,\alp}(t_0,\mathbf{t})$ is always well defined.
\end{definition}
\begin{rem}
In section \ref{section proof BPZ}, we will consider $C^{(l)}_{-(r-1)\chi,\alp}(t,\mathbf{t})$ for $\chi=\frac{\gamma}{2} \text{ or } \frac{2}{\gamma}$, and we will work with $\alpha_{N-1},\alpha_N \in i \mathbb{R}_+$ sufficiently large in absolute value to have enough differentiability.
\end{rem}

\subsection{Main results}

\begin{definition}\label{def L}
Denote $\mathcal{L}_-(z;\z)$ the algebra generated by the differential operators $(L_{-n})_{n \ge 1}$ and the identity operator id, where
\begin{equation}
	L_{-1}:=\partial_{z}, \quad L_{-n}:=\sum_{l=1}^N \left(-\frac{1}{(z_l-z)^{n-1}}\partial_{z_l}+\frac{\Delta_{\alpha_l}(n-1)}{(z_l-z)^n}\right) \quad n\ge 2.
	\end{equation}
\end{definition}

In the literature, $V_\alpha$ are called local fields, we also call it an insertion. A field $V_{\alpha}$ is degenerate if $\alpha = -\frac{(r-1)\gamma}{2}-\frac{2(s-1)}{\gamma}$ for $r,s \in \mathbb{N}^*$, in this case we call it a $(r,s)-$degenerate insertion. When there is a degenerate field, physicists have predicted that correlation functions of CFT satisfy certain partial differential equations with highest order $\partial^{rs}_z$ known as the BPZ equations. Although it is theoretically possible to construct this differential equation from operators of $\mathcal{L}_-(z;\z)$, there is no general formula to achieve this. Only in the case when $r=1$ or $s=1$, Beno\^it and Saint-Aubin \cite{Benoit1988} found an explicit and compact formula:

\begin{theo}\label{theo BSA}
Let $r\ge 2$ an integer and
\begin{equation}
\chi = \frac{\gamma}{2} \text{ or } \frac{2}{\gamma}.
\end{equation}
The BPZ equations of order $r$ hold true for $\gamma \in ( \sqrt{\frac{2(r-2)}{r-1}},2)$ when $\chi = \frac{\gamma}{2}$ and for $\gamma \in (0,2)$ when $\chi = \frac{2}{\gamma}$:
\begin{equation}
\mathcal{D}_{r} \lg V_{-(r-1)\chi}(z) \prod_{l=1}^{N}V_{\alpha_l}(z_l)\rg=0,
\end{equation}
where the differential operator $\mathcal{D}_{r} $ is given by the Beno\^it and Saint-Aubin's formula:
	\begin{equation}\label{equation Dr}
	\mathcal{D}_{r}=\sum_{k=1}^r \sum_{
	\substack{(n_1,\,\dots,\,n_k)\in (\mathbb{N^*})^k \\
		n_1+\dots+n_k=r
	}} \frac{(\chi^2)^{r-k}}{\prod_{j=1}^{k-1}(\sum_{i=1}^j n_i)(\sum_{i=j+1}^k n_i)} L_{-n_1}\dots L_{-n_k},
	\end{equation}
with $L_{-n}$ defined in definition \ref{def L}. 
\end{theo}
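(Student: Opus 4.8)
The plan is to convert the analytic statement on the Liouville side into a total-derivative identity for Coulomb gas integrals, and ultimately into a coefficient identity that no longer sees the probabilistic content of the correlation. Throughout, $\mathcal{D}_r$ is meant to act on the correlation written in its explicit GMC form \eqref{correlation expression}, which is legitimate because Lemma \ref{lem smoothness} makes $\z\mapsto\lg V_{-(r-1)\chi}(z)\prod_l V_{\alpha_l}(z_l)\rg$ smooth.

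First I would localize the $z$-dependence. With the degenerate weight $\alpha_0=-(r-1)\chi$ at $z$, the correlation depends on $z$ only through the prefactor $\prod_i|z-z_i|^{(r-1)\chi\alpha_i}$ and, inside the chaos integral, through the weight $|x-z|^{-\gamma\alpha_0}$ carried by the degenerate point --- a zero of order $\tfrac{(r-1)\gamma^2}{2}$ when $\chi=\tfrac{\gamma}{2}$ and of order $2(r-1)$ when $\chi=\tfrac{2}{\gamma}$. The manoeuvre advertised as \emph{regrouping singularities on the degenerate insertion} is to apply each monomial $L_{-n_1}\cdots L_{-n_k}$ of \eqref{equation Dr}, expand the factors $(z_l-z)^{-n}$ they generate, and reorganize all the resulting $z_l$-derivatives into contributions attached to $z$ via the chaos weight. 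The upshot is that $\mathcal{D}_r$ sends the correlation to a single chaos expectation whose integrand is an explicit rational function of the charge positions around $z$.

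The heart of the matter is then algebraic, and I would run it on the real Coulomb gas integral $C^{(l)}_{-(r-1)\chi,\alp}(t,\mathbf{t})$ of Definition \ref{definition Coulomb}, the one-dimensional object on which the integration by parts is cleanest and to which the sphere correlation is linked through the intrinsic GMC--Coulomb gas correspondence (an integer-order GMC moment being, by Fubini and Wick's formula, a Coulomb gas integral such as $\mathfrak{C}^{(l)}$). In this picture I would prove that $\mathcal{D}_r$ applied to the integrand is a genuine total derivative in the charges, $\mathcal{D}_r(\mathrm{integrand})=\sum_{s=1}^l\partial_{x_s}G_s$ with explicit $G_s$. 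The coefficients $\frac{(\chi^2)^{r-k}}{\prod_{j=1}^{k-1}(\sum_{i\le j}n_i)(\sum_{i>j}n_i)}$ of the Beno\^it--Saint-Aubin operator are precisely those for which this cancellation takes place; this is the integral incarnation of the Virasoro singular-vector relation behind \eqref{equation Dr}, and I expect to obtain it by induction on $r$ from the recursion satisfied by these coefficients. Granting the identity, integrating over the real contour gives $\mathcal{D}_r C^{(l)}=0$, provided the boundary terms $G_s$ vanish at the endpoints.

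The transfer back to genuine Liouville correlations, and the boundary bookkeeping, carry the real difficulty. To secure convergence and the $r$ continuous $z$-derivatives needed, I would follow the Remark after Definition \ref{definition Coulomb} and take $\alpha_{N-1},\alpha_N\in i\mathbb{R}_+$ large, recovering the correlation \eqref{correlation expression} by continuing these weights back --- this sidesteps the open problem of analyticity in $\gamma$. The parameter restriction then emerges from the boundary term produced when a charge collides with the degenerate insertion $z$: I expect the order-$r$ operator $\mathcal{D}_r$ to lower the local exponent $-\gamma\alpha_0$ of the degenerate weight by $r-2$, so that the boundary contribution dies exactly when $-\gamma\alpha_0>r-2$. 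For $\chi=\tfrac{2}{\gamma}$ this is $2(r-1)>r-2$, true for all $\gamma\in(0,2)$; for $\chi=\tfrac{\gamma}{2}$ it is $\tfrac{(r-1)\gamma^2}{2}>r-2$, i.e. $\gamma^2>\tfrac{2(r-2)}{r-1}$, which is exactly the stated threshold. I anticipate that most of the labour lies in this convergence and boundary analysis and in justifying the interchange of $\mathcal{D}_r$ with the expectation and the screening integration, whereas the algebraic total-derivative identity, once correctly normalized, should fall out of the combinatorics of the BSA coefficients.
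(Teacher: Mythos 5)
Your Coulomb-gas half is essentially the paper's: the statement that $\mathcal{D}_r$ applied to the integrand is a total derivative in the screening charges is equivalent to the combination of Lemma \ref{lem BSA f} and Proposition \ref{prop BSA Coulomb} (the difference $\mathcal{D}_r-\mathcal{D}_r^{\lg l\rg}$ acting on $f^{(l)}$ consists of exact $\partial_{x_s}$-terms), and your numerology for the constraint, $(r-1)\gamma\chi>r-2$, is exactly the exponent that appears in Proposition \ref{prop R to 0}. But there is a genuine gap in the bridge from the Coulomb gas identity back to the Liouville correlation, and that bridge is the actual content of the paper's proof. Your proposed transfer --- invoke the GMC--Coulomb gas correspondence and ``continue the weights back'' --- cannot work as stated: the identity $\lg\prod_l V_{\alpha_l}(z_l)\rg=2\,\mathfrak{C}^{(n)}_{\alp}(\z)$ holds only on the discrete family of hyperplanes $\sum_i\alpha_i=2Q-n\gamma$, $n\in\mathbb{N}^*$, which do not accumulate in the domain of analyticity in $\alp$, so no identity theorem lets you propagate $\mathcal{D}_r(\cdot)=0$ off that family; and continuation in $\gamma$ is precisely the open problem you say you are sidestepping. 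Moreover the real Coulomb gas integrals $C^{(l)}$ are used in the paper with $\gamma,\alpha_{N-1},\alpha_N$ purely imaginary --- they are never equal to, nor analytically connected to, the Liouville correlation.

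The missing idea is the following. After regrouping singularities (via the regularized derivative rule of Lemma \ref{lem derivation G} and the KPZ identity), the paper shows that $\mathcal{D}_r$ applied to the correlation is not ``a single chaos expectation'' but a finite linear combination $\sum_{|\n|+|\q|=r}\lambda_{\n,\q}(\gamma)P_{\n}Q_{\q}+\mathfrak{R}_\delta$, where the $Q_{\q}$ carry extra $V_\gamma$-insertions integrated against the kernels $(y_j-z)^{-q_j}$, and --- this is the crucial point --- the coefficients $\lambda_{\n,\q}(\gamma)$ are \emph{universal rational functions of $\gamma$}, produced by the purely algebraic recursion of Proposition \ref{prop L_-nQ} and therefore identical for the Liouville correlation and for the real Coulomb gas integrals. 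Since $\mathcal{D}_rC^{(r)}=0$ and the family $(P_{\n}Q^{(r)}_{\q})$ is shown to be linearly independent (Proposition \ref{prop lambda=0}, via asymptotics as the $t_i$ collide with $t$ and symmetric-function arguments), one gets $\lambda_{\n,\q}\equiv 0$ for infinitely many imaginary $\gamma$, hence identically; only then does the Liouville side reduce to the perturbation terms $\mathfrak{R}_\delta$, which is where your constraint on $\gamma$ enters. Without the recursion, the linear-independence statement, and the rationality-in-$\gamma$ observation, your outline does not close; note also that the charge-collision producing the constraint occurs only on the Liouville side (where the GMC integral covers a neighbourhood of $z$, hence the cutoff $\theta_\delta$), not on the real Coulomb gas contour $t_{N-1}<x_s<t_N$, which stays away from $t$.
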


\begin{rem}
In this article we use BPZ of order $r$ to represent the two cases of order $(r,1)$ and $(1,r)$. We remark that BPZ equations of order 2 has been well investigated by Kupiainen-Rhodes-Vargas in \cite{Kupiainen2015local}. The theorem above generalizes their result to all $r \ge 2$, with a constraint on $\gamma$ when $\chi = \frac{\gamma}{2}$. With some slight efforts, the order $3$ BPZ equations can be proved for all $\gamma \in (0,2)$. This will be shown in section \ref{section illustration}.
\end{rem}

\begin{rem}
The constraint on $\gamma$ when $\chi = \frac{\gamma}{2}$ is a purely technical condition and is only required by Proposition \ref{prop R to 0}. In particular, we have BPZ equations of all orders for $\gamma \in [\sqrt{2}, 2)$ when $\chi = \frac{\gamma}{2}$.
\end{rem}

In the boundary LCFT case we can have a boundary degenerate insertion or a bulk degenerate insertion. When it comes to a boundary degenerate insertion $B_{-(r-1)\chi}^{\pm}(t)$ defined as below, we will work with an extended definition where $t$ lives in the upper-half plane:
\begin{definition}
Let $\mu_1 ,\dots \mu_{M} >0$, $-\infty <t_1 <\dots < t_M < \infty$ and $t \in \mathbb{H} \backslash \{z_i, 1\le i \le N\} \cup \mathbb{R}$. We define the extended correlation function $\lg B^{+}_{-(r-1)\chi}(t) \prod_{i=1}^NV_{\alpha_i}(z_i)\prod_{j=1}^M B_{\beta_j}^{\mu_{j-1},\mu_j}(t_j) \rg_{\mathbb{H}}$ by
\begin{small}
\begin{align}\label{equation boundary degenerate}
&\prod_{i=1}^N((z_i-t)(\overline{z}_i-t))^{\frac{(r-1)\chi\alpha_i}{2}} \prod_{j=1}^M (t_i-t)^{\frac{(r-1)\chi\beta_j}{2}}\prod_{1\le i<i' \le N}(|z_i-z_{i'}||z_i-\overline{z}_{i'}|)^{-\alpha_i\alpha_{i'}} \prod_{1\le i \le N, 1\le j \le M}|z_i-t_j|^{-\alpha_i\beta_j}\nonumber\\
&\prod_{1\le j<j'\le M}|t_j-t_{j'}|^{-\beta_j\beta_{j'}/2}   
\E\left[ \left(\int_{\mathbb{R}}\frac{(t-u)^{\frac{(r-1)\gamma\chi}{2}}}{|u-z_i|^{\gamma\alpha_i}|u-t_j|^{\gamma\beta_j/2}}e^{\frac{\gamma}{2}X_{\mathbb{H}}(u)} \hg_{\mathbb{H}}(u)^{\frac{\gamma^2}{8}(p-\frac{(r-1)\chi}{\gamma}+1)}  d\mu_{\partial}(u) \right)^{-p+\frac{(r-1)\chi}{\gamma}} \right].
\end{align}
\end{small}
Similarly, we define $\lg B^{-}_{-(r-1)\chi}(t) \prod_{i=1}^NV_{\alpha_i}(z_i)\prod_{j=1}^M B_{\beta_j}^{\mu_{j-1},\mu_j}(t_j) \rg_{\mathbb{H}}$ by replacing the term $(t-u)^{\frac{(r-1)\gamma\chi}{2}}$ in the above integral by $(u-t)^{\frac{(r-1)\gamma\chi}{2}}$.
\end{definition}

\begin{rem}
When $t\in \mathbb{R}$ satisfies $t_{i_0} < t < t_{i_0+1}$ for certain $1\le i_0 \le M-1$, we have 
\begin{align*}
\lg B^{\pm}_{-(r-1)\chi}(t) \prod_{i=1}^NV_{\alpha_i}(z_i)\prod_{j=1}^M B_{\beta_j}^{\mu_{j-1},\mu_j}(t_j) \rg_{\mathbb{H}} = \lg B^{\mu_{i_0},\mu_{i_0}e^{\pm i \pi \frac{\gamma\chi}{2}}}_{-(r-1)\chi}(t) \prod_{i=1}^NV_{\alpha_i}(z_i)\prod_{j=1}^M B_{\beta_j}^{\mu_{j-1},\mu_j}(t_j) \rg_{\mathbb{H}}.
\end{align*} 
We also have similar results when $t < t_1$ and $t > t_M$. This explains the reason that we call it an extended correlation function.
\end{rem}

Now we state the BPZ equations for boundary LCFT, where we can prove the result without constraint on $\gamma$.

\begin{theo}\label{theo boundary BPZ}
Let $r\ge 2$ an integer and $\chi = \frac{\gamma}{2} \text{ or } \frac{2}{\gamma}$. Let $\mu_1, \dots, \mu_{M} > 0$, $t_1 < \dots < t_M$ and $t \in \mathbb{H} \backslash \{z_i, 1 \le i \le N\}$. The BPZ equations of order $r$ for a boundary degenerate insertion hold true for $\gamma \in (0,2)$:
\begin{equation}
\mathcal{D}_{r}^{\mathbb{H}} \lg B^{\pm}_{-(r-1)\chi}(t) \prod_{i=1}^NV_{\alpha_i}(z_i)\prod_{j=1}^M B_{\beta_j}^{\mu_{j-1},\mu_j}(t_j) \rg_{\mathbb{H}}=0,
\end{equation}
where the expression of the differential operator $\mathcal{D}_{r}^{\mathbb{H}} $ is given by \eqref{equation Dr}, where we replace the operators $L_{-n}$ by $L_{-n}^{\mathbb{H}}$ defined as $L_{-1}^{\mathbb{H}}:=\partial_{t}$, and for $n \ge 2$:
\begin{align}
L_{-n}^{\mathbb{H}}:=&\sum_{l=1}^N \left(-\frac{1}{(z_l-t)^{n-1}}\partial_{z_l}-\frac{1}{(\overline{z}_l-t)^{n-1}}\partial_{\overline{z}_l}+\frac{\Delta_{\alpha_l}(n-1)}{(z_l-t)^n}+\frac{\Delta_{\alpha_l}(n-1)}{(\overline{z}_l-t)^n}\right) \nonumber\\
& +\sum_{l=1}^M \left(-\frac{1}{(t_l-t)^{n-1}}\partial_{t_l}+\frac{\Delta_{\beta_l}(n-1)}{(t_l-t)^n}\right).
\end{align}

The BPZ equations of order $r$ also hold true when we insert a bulk degenerate insertion: for $\gamma \in (0,2)$,
\begin{equation}
\mathcal{D}_{r}^{\mathbb{H},z} \lg V_{-(r-1)\chi}(z) \prod_{i=1}^NV_{\alpha_i}(z_i)\prod_{j=1}^M B_{\beta_j}^{\mu_{j-1},\mu_j}(t_j) \rg_{\mathbb{H}}=0,
\end{equation}
where $\mathcal{D}_{r}^{\mathbb{H},z}$ is defined by the expression \eqref{equation Dr} where we replace $L_{-n}$ by $L_{-n}^{\mathbb{H},z}$ defined as $L_{-1}^{\mathbb{H},z}:=\partial_{z}$, and for $n \ge 2$,
\begin{align}
L_{-n}^{\mathbb{H},z}:=&\sum_{l=1}^N \left(-\frac{1}{(z_l-z)^{n-1}}\partial_{z_l}-\frac{1}{(\overline{z}_l-z)^{n-1}}\partial_{\overline{z}_l}+\frac{\Delta_{\alpha_l}(n-1)}{(z_l-z)^n}+\frac{\Delta_{\alpha_l}(n-1)}{(\overline{z}_l-z)^n}\right) \nonumber\\
& -\frac{1}{(\overline{z}-z)^{n-1}}\partial_{\overline{z}}+\frac{\Delta_{-(r-1)\chi}(n-1)}{(\overline{z}-z)^n}+\sum_{l=1}^M \left(-\frac{1}{(t_l-z)^{n-1}}\partial_{t_l}+\frac{\Delta_{\beta_l}(n-1)}{(t_l-z)^n}\right).
\end{align}
\end{theo}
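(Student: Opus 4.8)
The plan is to run, in the boundary geometry, the same mechanism that proves Theorem \ref{theo BSA} on the sphere, working directly with the explicit integrand of \eqref{equation boundary degenerate}. First I would keep the degenerate insertion at a point $t\in\mathbb{H}$ (resp. $z\in\mathbb{H}$ for the bulk case): the purpose of the extended definition is precisely that $t$ never meets the real line carrying the screening measure $d\mu_{\partial}$, so the factor $(t-u)^{(r-1)\gamma\chi/2}$ stays smooth and non-vanishing along the integration and the correlation function is $\mathcal{C}^{\infty}$ in $t$ and in all insertion points, by the argument of Lemma \ref{lem smoothness}. This licenses applying $\mathcal{D}_r^{\mathbb{H}}$ (resp. $\mathcal{D}_r^{\mathbb{H},z}$) term by term and differentiating under the expectation; once the equation holds for $t\in\mathbb{H}$, the boundary case $t\in\mathbb{R}\setminus\{t_j\}$ follows by continuity.

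The core of the proof is to transform the identity into an algebraic statement about the real Coulomb gas integrals $C^{(l)}_{-(r-1)\chi,\boldsymbol{\alpha}}(t,\mathbf{t})$ of Definition \ref{definition Coulomb}, with $l=r-1$ boundary screening variables $x_1,\dots,x_{r-1}$, through the relation between these integrals and boundary correlation functions (section \ref{section strategy}). Concretely, I would expand $\mathcal{D}_r^{\mathbb{H}}$ acting on the integrand: each $L_{-n}^{\mathbb{H}}$ contributes coefficients $(z_l-t)^{-(n-1)}$, $(\overline{z}_l-t)^{-(n-1)}$, $(t_l-t)^{-(n-1)}$ with the matching $\Delta$-terms, while the derivatives $\partial_{z_l},\partial_{\overline{z}_l},\partial_{t_l}$ fall on the explicit prefactors and on the screening kernel. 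The decisive step --- the ``regrouping of singularities on the degenerate insertion'' announced in the abstract --- is to collect all these contributions into a single rational function of $x_1,\dots,x_{r-1}$ and to recognise, using the Beno\^it--Saint-Aubin coefficients of \eqref{equation Dr} together with the degenerate value $\chi=\frac{\gamma}{2}$ or $\frac{2}{\gamma}$, that it is an exact form $\sum_{s=1}^{r-1}\partial_{x_s}\Omega_s$. This purely algebraic identity is the same one that underlies Theorem \ref{theo BSA}; I would prove it once at the level of $C^{(r-1)}$ and transport it back to the correlation function by analyticity in $\boldsymbol{\alpha}$ and $\gamma$, taking $\alpha_{N-1},\alpha_N\in i\mathbb{R}_+$ large (as in the Remark after Definition \ref{definition Coulomb}) to guarantee the differentiability and convergence needed to integrate by parts.

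Integrating the exact form over the screening domain then leaves only the endpoint contributions in the variables $x_s$, and the equation reduces to their vanishing. Here the boundary geometry is favourable: because the screening is one-dimensional and confined between consecutive boundary insertions, these endpoint terms can be controlled for every $\gamma\in(0,2)$, with no counterpart of the delicate $R\to 0$ estimate (Proposition \ref{prop R to 0}) that forces $\gamma\in(\sqrt{2(r-2)/(r-1)},2)$ on the sphere; this is why the theorem holds on the whole interval $(0,2)$.

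The genuinely new feature, and the step I expect to be the main obstacle, is the bulk degenerate insertion. The method of images for the Neumann field $X_{\mathbb{H}}$ makes $V_{-(r-1)\chi}(z)$ behave as a charge at $z$ together with a mirror charge at $\overline{z}$, which is exactly what the extra summand $-\frac{1}{(\overline{z}-z)^{n-1}}\partial_{\overline{z}}+\frac{\Delta_{-(r-1)\chi}(n-1)}{(\overline{z}-z)^n}$ in $L_{-n}^{\mathbb{H},z}$ encodes. I would have to check that, after distributing $\mathcal{D}_r^{\mathbb{H},z}$, the self-interaction between $z$ and its mirror $\overline{z}$ assembles into the same exact-form structure, i.e.\ that the algebraic identity still closes once the reflected charge is treated on the same footing as the genuine insertions $z_i,\overline{z}_i,t_j$. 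Verifying that the poles in $\overline{z}-z$ produced by this self-interaction cancel, rather than obstruct the total-derivative representation, is the crux; the remaining bookkeeping is identical to the boundary-insertion case.
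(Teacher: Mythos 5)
There is a genuine gap in the overall architecture. You propose to prove the BPZ identity at the level of the Coulomb gas \emph{integrand} (exact form $\sum_s\partial_{x_s}\Omega_s$, Stokes, endpoint terms), obtaining it for $C^{(l)}_{-(r-1)\chi,\boldsymbol{\alpha}}$, and then to ``transport it back to the correlation function by analyticity in $\boldsymbol{\alpha}$ and $\gamma$.'' That transfer is precisely what is not available: the boundary correlation function is $\E\bigl[\bigl(\int\cdots e^{\frac{\gamma}{2}X_{\mathbb{H}}}d\mu_{\partial}\bigr)^{-p}\bigr]$ with a non-integer moment $-p$, it coincides with a Coulomb gas integral only on the discrete family of charge hyperplanes where the screening number is a positive integer, and its analyticity in $\gamma$ is an open problem (the introduction states this explicitly as the reason the Fateev--Litvinov route does not settle the Liouville case). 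There is no finite-dimensional integrand on which to distribute $\mathcal{D}_r^{\mathbb{H}}$; the derivatives $\partial_{z_k},\partial_{\overline{z}_k},\partial_{t_k}$ must be computed by Gaussian integration by parts, and each produces a new term $\mathcal{Q}_{\mathbf{q}}$ which is itself a correlation function with extra $B_{\gamma}(y)$ insertions integrated over $\mathbb{R}$ --- not a screening integral with a fixed number of variables.

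The missing idea is the paper's two-track mechanism. Track one: using the Girsanov derivative rules, one shows \emph{directly on the correlation function} that $L_{-n}^{\mathbb{H}}\mathcal{P}_{\mathbf{n}}\mathcal{Q}_{\mathbf{q}}$ obeys the recursion of Proposition \ref{prop L_-nQ}, so that $\mathcal{D}_r^{\mathbb{H}}\mathcal{Q}_0=\sum_{|\mathbf{n}|+|\mathbf{q}|=r}\lambda_{\mathbf{n},\mathbf{q}}(\gamma)\mathcal{P}_{\mathbf{n}}\mathcal{Q}_{\mathbf{q}}$ with coefficients that are \emph{universal rational functions of $\gamma$}, independent of the geometry. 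Because $t\in\mathbb{H}$ while $d\mu_{\partial}$ is supported on $\mathbb{R}$, the kernels $(y-t)^{-q}$ are nonsingular, no cutoff $\theta_{\delta}$ is needed, and therefore no perturbation terms $\mathfrak{R}_{\delta}$ appear --- this, and not an endpoint estimate for the screening integral, is why the constraint on $\gamma$ of Proposition \ref{prop R to 0} disappears. Track two: the real Coulomb gas integrals (with $l=r$ screenings, not $r-1$ --- you need $Q^{(l)}_{1,\dots,1}$ of length $r$ to be nonzero for the linear-independence step) satisfy the \emph{same} recursion with the \emph{same} $\lambda_{\mathbf{n},\mathbf{q}}(\gamma)$ and are shown to satisfy BPZ via your integrand/Stokes argument at purely imaginary $\gamma$; Proposition \ref{prop lambda=0} then forces every $\lambda_{\mathbf{n},\mathbf{q}}(\gamma)$ to vanish as a rational function, which applies to the correlation function for all $\gamma\in(0,2)$ with no continuation argument. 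Finally, the bulk insertion is not the ``main obstacle'' you anticipate: by the method of images the reflected charge at $\overline{z}$ enters the derivative rules exactly as an ordinary insertion, so the same recursion closes verbatim; what you correctly identify as needing verification is dispatched by the observation that the derivative rules for $z_i,\overline{z}_i,t_j$ have literally the sphere form.
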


In the boundary LCFT case, the proof of BPZ equations is very similar to the sphere case but there is no more technical difficulties, see section \ref{section boundary BPZ}.

\subsection{Strategy of the proof}\label{section strategy}

Let us start by explaining the motivation of introducing Coulomb gas integrals and how it relates to Liouville correlations in a natural way. We consider $-\frac{\sum_{i=1}^N \alpha_i -2Q}{\gamma}=n \in \mathbb{N}^*$. Under this condition, the moment of Liouville correlations can be expanded by Fubini (the rigorous way is to take a regularization for $X$):

\begin{align*}
&\E\left[\left(\int_{\mathbb{C}} \frac{e^{\gamma X(x)}\hg (x)^{1-\frac{\gamma}{4} \sum_{i=1}^N \alpha_i}d^2x}{\prod_{k=1}^N |x-z_k|^{\gamma\alpha_k} }\right)^{n}
\right] =\int_{\mathbb{C}^n} \prod_{j=1}^n \frac{\hg (x_j)^{1-\frac{\gamma}{4} \sum_{i=1}^N \alpha_i}}{\prod_{k=1}^N |x_j-z_k|^{\gamma\alpha_k} } \prod_{i<j} e^{\gamma^2 \E[X(x_i)X(x_j)]} d^2\x\\
&=e^{\frac{n(n-1)\gamma^2}{2}(\ln 2 -\frac{1}{2})}\int_{\mathbb{C}^n} \prod_{j=1}^n \frac{\hg (x_j)^{1-\frac{\gamma}{4} \sum_{i=1}^N \alpha_i - \frac{(n-1)\gamma^2}{4}}}{\prod_{k=1}^N |x_j-z_k|^{\gamma\alpha_k} } \prod_{i<j} \frac{1}{|x_i-x_j|^{\gamma^2}} d^2\x\\
&= e^{\frac{n(n-1)\gamma^2}{2}(\ln 2 -\frac{1}{2})}\int_{\mathbb{C}^n} \prod_{j=1}^n \frac{1}{\prod_{k=1}^N |x_j-z_k|^{\gamma\alpha_k} } \prod_{i<j} \frac{1}{|x_i-x_j|^{\gamma^2}} d^2\x.
\end{align*}
Together with the expression of $Z(\alp)$ \eqref{equation Z}, we deduce that when $-\frac{\sum_{i=1}^N \alpha_i -2Q}{\gamma}=n$,
\begin{equation}\label{equation corr coulomb}
\lg \prod_{l=1}^N V_{\alpha_l}(z_l)\rg = 2 \mathfrak{C}^{(n)}_{\alp}(\z).
\end{equation}
It is explained in \cite{Vargas2017} how physicists use this relation to predict exact formulas on correlation functions of LCFT.

Now we explain the strategy. Consider the Liouville correlation function on the sphere with a degenerate insertion: $\lg V_{-(r-1)\chi}(z) \prod_{l=1}^{N}V_{\alpha_l}(z_l)\rg$. By taking successive derivatives following the operators $L_{-n}$ (see section \ref{section derivative}), we will have integrals that have singularities at $z$ and $z_l$. Using integration by parts and some identities we can regroup all the singularities on $z$. By doing so we observe some repeating terms $P_{\n}Q_{\q}$ (Definition \ref{definition Q}). This allows us to transform the proof of the BPZ equation into an algebraic problem where we search to cancel the coefficients before each $P_{\n}Q_{\q}$.  

On the other hand, we can prove directly that real Coulomb gas integrals satisfy BPZ equations. This is based on the fact that the integrand satisfies BPZ equations (see section \ref{section integrand}). Furthermore, real Coulomb gas integrals have the same algebraic development into $P_{\n}Q_{\q}$ as Liouville correlations, but with a different definition for the quantities $P_{\n}$ and $Q_{\q}$. This is not a surprising fact from the previous explanation of their relations. A study of linear independence of this family allows to show that all the coefficients are actually zero, which means that Liouville correlations satisfy BPZ equations. Remark that we use real Coulomb gas integrals instead of complex ones in order to avoid the problem of integrating against the singularities.

For the organization of this paper, we will present a detailed proof for Theorem \ref{theo BSA} in section \ref{section BPZ}, and in \ref{section boundary BPZ} we give the proof of Theorem \ref{theo boundary BPZ}. Section \ref{section integrand} provides an original and elementary proof showing that the integrand of Coulomb gas integrals satisfy BPZ equations, which implies as a consequence that real Coulomb gas integrals also satisfy the BPZ equations. 
\vspace{5mm}

\textbf{Acknolwedgements:} I would first like to thank R\'emy Rhodes and Vincent Vargas for making me discover LCFT. I also very warmly thank Yichao Huang, Joona Oikarinen, Eveliina Peltola, and Guillaume Remy for many fruitful discussions.

\section{Proof of the BPZ equations}\label{section BPZ}

The subsections \ref{section derivative} to \ref{section proof BPZ} are devoted to the proof of the BPZ equations on the sphere. In section \ref{section boundary BPZ} we will see that the BPZ equations for boundary LCFT can be proved in exactly the same manner as the sphere case, but without constraint on $\gamma$ since the technical problem is avoided by taking $\mu_{bulk} = 0$.

\subsection{Derivatives of correlation functions}\label{section derivative}

We shall first understand how to derive the correlation functions. A proof for the derivative rule is recalled in the appendix \ref{section proof derivative}. 

In this subsection, we will consider the correlation functions $\lg \prod_{l=0}^N V_{\alpha_l,\epsilon}(z_l)\rg$ with $z_0 = z$, $\alpha_0 = -(r-1)\chi$. This is to stay consistent in notations with the later proof of the BPZ equations, but all the results in this subsection hold true for general values of $\alpha_0$ and $z_0$.  Let $\theta: \mathbb{R_+} \to [0,1]$ be a smooth function that equals $0$ in $[0,\frac{1}{2}]$ and $1$ in $[1,\infty)$ and define $\theta_{\delta}=\theta (\frac{|\cdot|}{\delta})$ a regularization function. We introduce the notations:
\begin{definition} 
Define for $\delta > 0$, and $(z,\z) \in U_{N+1}$:
\begin{small}
\begin{align}
\lg \prod_{l=0}^N V_{\alpha_l,\epsilon}(z_l)\rg_{\delta}=&2 e^{(\ln 2 -\frac{1}{2})(\frac{1}{2}\sum_{l=0}^N \alpha_l^2 -\frac{\gamma}{2} \sum_{l=0}^N \alpha_l -\frac{4Q}{\gamma})}\intr e^{-2Qc}\,\E\Big[\prod_{l=0}^N g_{\epsilon}(z_l)^{\Delta_{\alpha_l}} e^{\alpha_l (X_{\epsilon}(z_l)+c) -\frac{\alpha_l^2}{2} \E[X_{\epsilon}(z_l)^2]} \nonumber\\
	& \quad \times e^{-\mu e^{\gamma c}\intc \theta(x-z_0) e^{\gamma X_{\epsilon}(x) -\frac{\gamma^2}{2}\E[X_{\epsilon}(x)^2]} \hg_{\epsilon}(x) d^2x}\Big]dc,
\end{align}
\end{small}
where we add a regularization around $z_0$ while integrating the GMC measure compared to the expression of $\lg \prod_{l=0}^N V_{\alpha_l,\epsilon}(z_l)\rg$ defined in \eqref{correlation limit}. 
\end{definition}

We denote $\lg \prod_{l=0}^N V_{\alpha_l}(z_l)\rg_{\delta}$ the limit of $\lg \prod_{l=0}^N V_{\alpha_l,\epsilon}(z_l)\rg_{\delta}$ when $\epsilon$ goes to $0$, which equals
\begin{equation}\label{equation 2.2}
 Z(\alpha_0, \alp)\prod_{0\le i < j \le N }|z_i-z_j|^{-\alpha_i\alpha_j} \E[(\int_{\mathbb{C}} \frac{\theta_{\delta}(x-z_0)e^{\gamma X(x)}\hg (x)^{1-\frac{\gamma}{4} \sum_{i=0}^N \alpha_i}d^2x}{\prod_{k=0}^N |x-z_k|^{\gamma\alpha_k} })^{-\frac{\sum_{i=0}^N \alpha_i-2Q}{\gamma}}]. 
\end{equation}
The notation $\alp$ still stands for $(\alpha_1, \dots, \alpha_N)$.  Note that $\lg \prod_{l=0}^N V_{\alpha_l}(z_l)\rg_{\delta}$ converges in the weak topology to $\lg \prod_{l=0}^N V_{\alpha_l}(z_l)\rg$.

\begin{lem}[Derivative rule]\label{lem derivation G}
For  $(z_0,\z) \in U_{N+1}$ and $0 \le i \le N$,
\begin{align}
\partial_{z_i}\lg \prod_{l=0}^N V_{\alpha_l,\epsilon}(z_l)\rg_{\delta} = &\sum_{\substack{j=0\\j\neq i}}^N \frac{\alpha_i\alpha_j}{2(z_j-z_i)_{\epsilon}}\lg \prod_{l=1}^N V_{\alpha_l,\epsilon}(z_l)\rg_{\delta}-\frac{\mu \gamma \alpha_i}{2} \intc \frac{\theta_{\delta}(y-z_0)}{(y-z_i)_{\epsilon}}\lg V_{\gamma,\epsilon}(y)\prod_{l=0}^N V_{\alpha_l,\epsilon}(z_l)\rg_{\delta}d^2y \nonumber\\
&+\mathbf{1}_{\{i=0\}} \mu \intc \partial_z \theta_{\delta}(y-z_0)\lg V_{\gamma,\epsilon}(y)\prod_{l=0}^N V_{\alpha_l,\epsilon}(z_l)\rg_{\delta} d^2y,
\end{align}
where
\begin{equation}\label{equation parenthesis}
\frac{1}{(z)_{\epsilon}}:=\intc\intc \frac{1}{z-x_1+x_2}\eta_{\epsilon}(x_1)\eta_{\epsilon}(x_2)d^2x_1 d^2x_2.
\end{equation}
\end{lem}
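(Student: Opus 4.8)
The plan is to fix $\epsilon,\delta>0$ and differentiate the integrand defining $\lg \prod_{l=0}^N V_{\alpha_l,\epsilon}(z_l)\rg_{\delta}$ directly, interchanging $\partial_{z_i}$ with both the expectation $\E$ and the integral $\intr(\cdot)\,dc$. For fixed $\epsilon$ the mollified field $X_\epsilon=X*\eta_\epsilon$ is almost surely smooth, so $X_\epsilon(z_i)$ and its spatial derivative $\partial_{z_i}X_\epsilon(z_i)$ are honest jointly Gaussian variables, the cutoff GMC is a smooth functional, and every factor is smooth in $z_i$; dominated convergence then legitimizes the interchange. Writing the integrand as $\Phi_\epsilon := \prod_{l=0}^N g_\epsilon(z_l)^{\Delta_{\alpha_l}} e^{\alpha_l(X_\epsilon(z_l)+c)-\frac{\alpha_l^2}{2}\E[X_\epsilon(z_l)^2]}\mathcal{M}_\epsilon$, with $\mathcal M_\epsilon$ the GMC exponential, the variable $z_i$ enters through exactly four places: the metric prefactor $g_\epsilon(z_i)^{\Delta_{\alpha_i}}$, the Gaussian weight $e^{\alpha_i X_\epsilon(z_i)}$, the variance $\E[X_\epsilon(z_i)^2]=-\frac12\ln\hg_\epsilon(z_i)+\mathrm{const}$, and (only if $i=0$) the cutoff $\theta_\delta(x-z_0)$ inside $\mathcal M_\epsilon$.

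The three deterministic dependencies differentiate by the chain rule; the delicate term is $\alpha_i\,\E[\partial_{z_i}X_\epsilon(z_i)\,\Phi_\epsilon]$, which I would handle by Gaussian integration by parts. Since $\mathrm{Cov}(\partial_{z_i}X_\epsilon(z_i),X_\epsilon(w))=\partial_1 G_\epsilon(z_i,w)$, where $G_\epsilon(z,w)=\E[X_\epsilon(z)X_\epsilon(w)]$ and $\partial_1$ is the derivative in the first slot, this produces one contribution per field value in $\Phi_\epsilon$. The insertions $X_\epsilon(z_l)$ give $\alpha_i\alpha_l\,\partial_1 G_\epsilon(z_i,z_l)$, whose singular part is exactly $\frac1{2(z_l-z_i)_\epsilon}$ and reassembles the first sum of the statement. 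The GMC part contributes a term with an extra factor $e^{\gamma X_\epsilon(x)-\frac{\gamma^2}{2}\E[X_\epsilon(x)^2]}\hg_\epsilon(x)$ weighted by $-\mu\gamma e^{\gamma c}\alpha_i\,\theta_\delta(x-z_0)\,\partial_1 G_\epsilon(z_i,x)$. Using $\Delta_\gamma=1$ together with the fact that the constant prefactor of $\lg V_\gamma\prod_l V_{\alpha_l}\rg_\delta$ coincides with that of $\lg\prod_l V_{\alpha_l}\rg_\delta$, this extra factor builds a $V_{\gamma,\epsilon}(x)$ insertion, so the singular part yields the second term $-\frac{\mu\gamma\alpha_i}{2}\intc\frac{\theta_\delta(y-z_0)}{(y-z_i)_\epsilon}\lg V_{\gamma,\epsilon}(y)\prod_l V_{\alpha_l,\epsilon}\rg_\delta d^2y$; differentiating $\theta_\delta(x-z_0)$ when $i=0$ produces the boundary term in the same manner.

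The hard part will be to verify that all metric contributions—the terms proportional to $\partial_{z_i}\ln g_\epsilon(z_i)$—cancel, since none appears in the statement. Collecting the coefficients of $\partial_{z_i}\ln g_\epsilon(z_i)$: the prefactor contributes $\Delta_{\alpha_i}$, the variance $+\frac{\alpha_i^2}{4}$, the diagonal term $\alpha_i^2\,\partial_1 G_\epsilon(z_i,z_i)$ contributes $-\frac{\alpha_i^2}{4}$ (by symmetry $\partial_1 G_\epsilon(z,z)=\tfrac12\partial_z\E[X_\epsilon(z)^2]$), and the off-diagonal terms contribute $-\frac{\alpha_i}{4}\sum_{l\neq i}\alpha_l$; multiplying $\lg\prod_l V_{\alpha_l}\rg_\delta$, these sum to $\frac{\alpha_i Q}{2}-\frac{\alpha_i}{4}\sum_{l=0}^N\alpha_l$. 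The metric part of the GMC term instead multiplies $\mu\intc\theta_\delta(x-z_0)\lg V_{\gamma,\epsilon}(x)\prod_l V_{\alpha_l,\epsilon}\rg_\delta d^2x$ by $\frac{\gamma\alpha_i}{4}\partial_{z_i}\ln g_\epsilon(z_i)$, so the two families a priori multiply different objects.

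To reconcile them I would use the identity obtained by integrating the total $c$-derivative $\partial_c\big(e^{-2Qc}\E[\Phi_\epsilon]\big)$ over $\mathbb R$: the boundary terms vanish and $\partial_c\Phi_\epsilon=\Phi_\epsilon\big(\sum_{l=0}^N\alpha_l-\mu\gamma e^{\gamma c}\intc\theta_\delta(x-z_0)e^{\gamma X_\epsilon(x)-\frac{\gamma^2}{2}\E[X_\epsilon(x)^2]}\hg_\epsilon(x)\,d^2x\big)$, so that
\[ \mu\intc \theta_\delta(x-z_0)\lg V_{\gamma,\epsilon}(x)\prod_{l=0}^N V_{\alpha_l,\epsilon}(z_l)\rg_\delta\,d^2x=\frac{\sum_{l=0}^N\alpha_l-2Q}{\gamma}\,\lg\prod_{l=0}^N V_{\alpha_l,\epsilon}(z_l)\rg_\delta. \]
Substituting this turns the GMC metric term into $\big(\frac{\alpha_i}{4}\sum_{l=0}^N\alpha_l-\frac{\alpha_i Q}{2}\big)\partial_{z_i}\ln g_\epsilon(z_i)$ times the base correlation, which cancels $\frac{\alpha_i Q}{2}-\frac{\alpha_i}{4}\sum_{l=0}^N\alpha_l$ exactly. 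After this cancellation the three surviving families—singular insertion terms, singular GMC term, and the $i=0$ boundary term—assemble into the claimed formula. Besides this metric bookkeeping, the only other technical point is justifying the Gaussian integration by parts for the GMC functional, which one can do rigorously via a Cameron–Martin shift of $X_\epsilon$ (equivalently, by expanding $\mathcal M_\epsilon$ before differentiating); I expect the metric cancellation, rather than this justification, to be the conceptual crux.
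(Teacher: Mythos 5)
Your proposal is correct and follows essentially the same route as the paper's appendix proof: Gaussian integration by parts for the regularized field, followed by the KPZ identity to cancel the metric ($\partial_{z_i}\ln\hg_\epsilon$) contributions. The paper merely packages the same bookkeeping differently --- it observes $\partial_{z_i}V_{\alpha_i,\epsilon}(z_i)=\alpha_i\,\partial_{z_i}\bigl(X_\epsilon(z_i)+\tfrac{Q}{2}\ln\hg_\epsilon(z_i)\bigr)\,V_{\alpha_i,\epsilon}(z_i)$ and integrates by parts against the test function $f=\partial_z\eta_\epsilon(z_i-\cdot)$, letting the KPZ identity absorb the $\ln\hg$ terms at the level of the combination $(X+\tfrac{Q}{2}\ln\hg)(f)$, whereas you track the coefficients of $\partial_{z_i}\ln\hg_\epsilon$ explicitly; the resulting cancellation is identical.
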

\begin{rem}
The functions $\frac{1}{(z)_{\epsilon}}$ and $\lg \prod_{l=0}^N V_{\alpha_l,\epsilon}(z_l)\rg_{\delta}$ are smooth. The only difference in this derivative rule with \cite{Kupiainen2015local} is that we take the regularization $\theta_{\delta}$.
\end{rem}
Let us explain briefly how to understand this derivative rule from the expression \eqref{equation 2.2}. There are three terms. The first comes from the preceding term $\prod_{i < j }|z_i-z_j|^{-\alpha_i\alpha_j} $ with regularization. The other two terms appear whenever we take derivatives on moment of Gaussian multiplicative chaos. We can consider them as a simple derivative under expectation and then an application of the Girsanov's theorem. Finally we state an identity that will be useful:

\begin{lem}[KPZ Identity]
For $\delta,\epsilon \ge 0$, the integral $\intc  \theta_{\delta}(y-z) \lg V_{\gamma,\epsilon}(y)\prod_{l=0}^N V_{\alpha_l,\epsilon}(z_l)\rg_{\delta} d^2y$ is well defined and
\begin{equation}\label{equation KPZ}
\mu \gamma \intc \theta_{\delta}(y-z) \lg  V_{\gamma,\epsilon}(y)\prod_{l=0}^N V_{\alpha_l,\epsilon}(z_l)\rg_{\delta} d^2y = (\sum_{l=0}^N \alpha_l-2Q)\lg \prod_{l=0}^N V_{\alpha_l,\epsilon}(z_l)\rg_{\delta}.
\end{equation}
\end{lem}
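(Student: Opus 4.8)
The identity is, at its core, an integration by parts in the zero-mode variable $c$. Set $\Sigma := \sum_{l=0}^N \alpha_l$ and
\[
M_{\delta,\epsilon} := \intc \theta_{\delta}(x-z)\,e^{\gamma X_{\epsilon}(x)-\frac{\gamma^2}{2}\E[X_{\epsilon}(x)^2]}\hg_{\epsilon}(x)\,d^2x .
\]
In the definition of $\lg \prod_{l=0}^N V_{\alpha_l,\epsilon}(z_l)\rg_{\delta}$ the factor $\prod_{l=0}^N e^{\alpha_l(X_{\epsilon}(z_l)+c)}$ together with $e^{-2Qc}$ contributes exactly $e^{(\Sigma-2Q)c}$, so the $c$-integrand has the form $\E[G\,e^{(\Sigma-2Q)c}e^{-\mu e^{\gamma c}M_{\delta,\epsilon}}]$ with $G$ independent of $c$. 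The plan is to exploit the elementary identity
\[
\frac{d}{dc}\Big(e^{(\Sigma-2Q)c}\,e^{-\mu e^{\gamma c}M_{\delta,\epsilon}}\Big)=\big((\Sigma-2Q)-\mu\gamma\,e^{\gamma c}M_{\delta,\epsilon}\big)\,e^{(\Sigma-2Q)c}\,e^{-\mu e^{\gamma c}M_{\delta,\epsilon}} .
\]
Integrating in $c$ over $\mathbb{R}$, and using that the primitive on the left vanishes at both endpoints, produces $(\Sigma-2Q)\lg \prod_{l=0}^N V_{\alpha_l,\epsilon}(z_l)\rg_{\delta}$ on one side and, on the other, $\mu\gamma$ times the same $c$-integral with the additional factor $e^{\gamma c}M_{\delta,\epsilon}$ inserted inside the expectation.

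It remains to identify this second integral with the inserted correlation on the left of \eqref{equation KPZ}. The two algebraic facts that make this work are that $\Delta_{\gamma}=\frac{\gamma}{2}(Q-\frac{\gamma}{2})=1$, so inserting $V_{\gamma,\epsilon}(y)$ contributes the weight $\hg_{\epsilon}(y)^{\Delta_\gamma}=\hg_{\epsilon}(y)$, and that the Gaussian prefactor is unchanged in passing from $N+1$ to $N+2$ insertions, since the extra contribution $\frac12\gamma^2-\frac{\gamma}{2}\gamma$ to its exponent vanishes. Hence
\[
e^{\gamma c}M_{\delta,\epsilon}=\intc \theta_{\delta}(y-z)\,\hg_{\epsilon}(y)\,e^{\gamma(X_{\epsilon}(y)+c)-\frac{\gamma^2}{2}\E[X_{\epsilon}(y)^2]}\,d^2y
\]
is exactly the $\theta_{\delta}$-weighted integral over $y$ of the factor generated by the insertion $V_{\gamma,\epsilon}(y)$. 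Swapping the $y$-integral with $\intr dc$ and $\E$ by Fubini then rewrites the right-hand side as $\mu\gamma\intc\theta_{\delta}(y-z)\lg V_{\gamma,\epsilon}(y)\prod_{l=0}^N V_{\alpha_l,\epsilon}(z_l)\rg_{\delta}\,d^2y$, which is \eqref{equation KPZ}. The well-definedness of this $y$-integral comes from the same Fubini step: the factor $e^{\gamma c}M_{\delta,\epsilon}e^{-\mu e^{\gamma c}M_{\delta,\epsilon}}$ is bounded by $(e\mu)^{-1}$ and decays doubly exponentially as $c\to+\infty$, so the $c$-integral converges just as for the original correlation.

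The main obstacle is to justify the integration by parts rigorously: the vanishing of the boundary terms at $c\to\pm\infty$ and the interchange of $\frac{d}{dc}$, $\intr dc$ and $\E$. At $c\to+\infty$ the factor $e^{-\mu e^{\gamma c}M_{\delta,\epsilon}}$ decays doubly exponentially, using that $M_{\delta,\epsilon}>0$ almost surely (the cutoff $\theta_{\delta}$ is positive on the complement of a ball, and the GMC charges every open set); at $c\to-\infty$ the factor $e^{(\Sigma-2Q)c}$ forces decay precisely under the Seiberg-type bound $\Sigma>2Q$, outside of which the identity must be read through the analytic continuation encoded in the Gamma function of \eqref{equation Z}. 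The interchange of limits is harmless for $\epsilon>0$, where $X_{\epsilon}$ is a smooth field with all exponential moments finite by Gaussian concentration; I would therefore carry out the argument first for $\epsilon>0$ and recover the case $\epsilon=0$ by passing to the GMC limit.
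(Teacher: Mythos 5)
Your proof is correct, but it takes a different mechanical route from the paper's. The paper proves the identity by the change of variables $c' = c + \tfrac{\ln\mu}{\gamma}$ in the zero-mode integral, which exhibits $\lg\prod_l V_{\alpha_l,\epsilon}(z_l)\rg_{\delta}$ as $\mu^{-(\Sigma-2Q)/\gamma}$ times a quantity independent of $\mu$; differentiating both representations in $\mu$ (the derivative hitting $e^{-\mu e^{\gamma c}M_{\delta,\epsilon}}$ brings down exactly the inserted $V_{\gamma,\epsilon}(y)$ correlation after the same Fubini step you use) yields \eqref{equation KPZ} with no boundary terms to control. Your argument instead integrates $\partial_c$ of the integrand over $\mathbb{R}$ and discards the boundary contributions. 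The two are really the same dilation covariance in disguise --- the integrand depends on $(\mu,c)$ only through $\mu e^{\gamma c}$ apart from the prefactor $e^{(\Sigma-2Q)c}$, so $\partial_c$ and $\gamma\mu\partial_\mu$ act identically on the exponential of the GMC mass --- but what each buys is different: the paper's version trades the boundary analysis for a one-line differentiation under the integral sign, while yours is self-contained in the $c$-variable and makes the role of the Seiberg bound $\Sigma>2Q$ (needed at $c\to-\infty$) completely explicit. Your identification of the inserted field via $\Delta_\gamma=1$ and the vanishing of the extra $\tfrac12\gamma^2-\tfrac{\gamma}{2}\gamma$ in the Gaussian prefactor is exactly the bookkeeping the paper leaves implicit, and your reduction to $\epsilon>0$ followed by the GMC limit matches the paper's closing remark. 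One small caution: your domination of the boundary term at $c\to+\infty$ should be phrased via the uniform bound $e^{(\Sigma-2Q)c}e^{-\mu e^{\gamma c}M}\le C\,M^{-(\Sigma-2Q)/\gamma}$ and the finiteness of the resulting negative moment, rather than by $e^{(\Sigma-2Q)c}$ alone, but this is a routine fix and does not affect the argument.
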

\begin{rem}
When $\delta$ or $\epsilon$ equal $0$, it simply means that there is no regularization. By applying the lemma multiple times, we obtain in particular that for $p \ge 1$, the integral
\begin{align*}
\int_{\mathbb{C}^p}  \prod_{j=1}^p \theta_{\delta}(y_j-z) \lg \prod_{j=1}^p V_{\gamma,\epsilon}(y_j)\prod_{l=0}^N V_{\alpha_l,\epsilon}(z_l)\rg_{\delta} d^2 \y
\end{align*}
is well defined. An important information to extract from this is the integrability at infinity of the above integral. 
\end{rem}
\begin{proof}
For $\epsilon > 0$ and $\delta \ge 0$, by a change of variable $c' = \frac{\ln \mu}{\gamma}+c$, we have
\begin{align*}
\lg \prod_l V_{\alpha_l,\epsilon}(z_l)\rg_{\delta} =& \mu^{-\frac{\sum_l\alpha_l-2Q}{\gamma}}2 e^{(\ln 2 -\frac{1}{2})(\frac{1}{2}\sum_{l=1}^N \alpha_l^2 -\frac{\gamma}{2} \sum_{l=1}^N \alpha_l -\frac{4Q}{\gamma})}\\
&\intr e^{-2Qc'}\E\Big[\prod_{l=1}^N g_{\epsilon}(z_l)^{\Delta_{\alpha_l}} e^{\alpha_l (X_{\epsilon}(z_l)+c') -\frac{\alpha_l^2}{2} \E[X_{\epsilon}(z_l)^2]} e^{- e^{\gamma c'}\intc e^{\gamma X_{\epsilon}(x) -\frac{\gamma^2}{2}\E[X_{\epsilon}(x)^2]} \hg_{\epsilon}(x) d^2x}\Big]dc'.
\end{align*}
We obtain the lemma by taking the derivative with respect to $\mu$ on both sides. The case $\epsilon = 0$ can be obtained by sending $\epsilon \to 0$.
\end{proof}

In \cite{Kupiainen2015local}, the authors proved that the functions $$y \to \sup_{\epsilon}\lg V_{\gamma,\epsilon}(y)\prod_{l=1}^N V_{\alpha_l,\epsilon}(z_l)\rg \text{ and } (x,y)\to\sup_{\epsilon}\lg V_{\gamma,\epsilon}(x)V_{\gamma,\epsilon}(y)\prod_{l=1}^N V_{\alpha_l,\epsilon}(z_l)\rg$$ are integrable. The results generalize easily to the case with $\theta_{\delta}$. This fact will be useful later to justify the convergences.

\subsection{Repeating patterns}

Let us introduce some notations for the terms that will play a central role in the proof of BPZ equations.

\begin{definition}\label{definition Q}
For $n\in \mathbb{N}^*$, we define
\begin{equation}
P_n(z,\mathbf{z}):=\sum_{l=1}^N\frac{\alpha_l}{2(z_l-z)^n}
\end{equation}
with $\mathbf{z}=(z_1,\dots,z_N)$. For $\n = (n_1,\dots,n_m)$, note $P_{\n}=\prod_{i=1}^{m} P_{n_i} $. 

Let $p\in \mathbb{N}$, and $\mathbf{q}=(q_1, \dots, q_p) \in (\mathbb{N}^*)^p$, we define
\begin{equation}
Q_{\mathbf{q}}(z,\mathbf{z}):=(\frac{\mu \gamma}{2})^p \int_{\mathbb{C}^p} \prod_{j=1}^p\frac{\theta_{\delta}(y_j-z)}{(y_j-z)^{q_j}}\lg V_{-(r-1)\chi}(z) \prod_{i=1}^p V_{\gamma}(y_i) \prod_{l=1}^N V_{\alpha_l}(z_l)\rg_{\delta} \,d^2\mathbf{y}
\end{equation} 
We also provide an operator $T_k$ on $Q_{\mathbf{q}}$, with $k \in \mathbb{N}^*$:
\begin{equation}
T_k{Q_{\mathbf{q}}}=Q_{q_1, \dots, q_k+1, \dots, q_p}
\end{equation}
\end{definition}

Later we will show that proving the BPZ equations is equivalent to a combinatorial problem in the algebra generated by $P_{\n}$ and $Q_{\q}$. 

\begin{definition}
Denote $\mathfrak{R}_{\delta}$ for an arbitrary term in the functional vectorial space: 
\begin{align}
\text{Vect} \Bigg( (z,\z)\mapsto D\left[ P_{\n}\int_{\mathbb{C}^p}  \frac{ \partial_z\theta_{\delta}(y_1-z)}{(y_1-z)^{q_1}} \prod_{i=2}^{p}\frac{\theta_{\delta}(y_i-z)}{(y_i-z)^{q_i}} \lg V_{-(r-1)\chi }(z) \prod_{i=1}^p V_{\gamma}(y_i) \prod_{l=1}^N V_{\alpha_l}(z_l)\rg_{\delta} \,d^2\mathbf{y}\right],\nonumber\\
D\in \mathcal{L}_-(z;\y), \n \in \mathbb{N}_*^m (m\ge 0), \q \in \mathbb{N}_*^{p}(p \ge 1), \sum_{j=1}^p q_j \le r-1 \Bigg). 
\end{align}
\end{definition}
\begin{rem}
$D\mathfrak{R}_{\delta} = \mathfrak{R}_{\delta}$ for all $D \in \mathcal{L}_-(z;\y)$.
\end{rem}
The reason of introducing $\mathfrak{R}_{\delta}(z,\z)$ is that they appear in the calculus of derivatives as perturbation terms and we want to control these terms. We show that they do not have contribution:

\begin{prop}\label{prop R to 0}
When $\chi = \frac{\gamma}{2}$ and $\gamma \in (\sqrt{\frac{2(r-2)}{r-1}},2)$ or when $\chi = \frac{2}{\gamma}$ and $\gamma \in (0,2)$, $\mathfrak{R}_{\delta}(z,\z)$ converges weakly to $0$ in the sense of distributions as $\delta \to 0$.
\end{prop}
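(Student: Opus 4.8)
The plan is to show that the terms collected in $\mathfrak{R}_\delta$ vanish in the weak limit $\delta \to 0$ by exploiting the single derivative $\partial_z \theta_\delta(y_1-z)$ that appears in every generator of the space. The key observation is that $\partial_z \theta_\delta$ is supported on the annulus $\{\frac{\delta}{2} \le |y_1 - z| \le \delta\}$ and is of size $O(1/\delta)$ there, so each generating integrand is localized to a thin region shrinking onto $z$. The goal is to prove that the mass this localization captures goes to zero faster than the $1/\delta$ blowup, so that the product converges to $0$. Since $D\mathfrak{R}_\delta = \mathfrak{R}_\delta$ for all $D \in \mathcal{L}_-(z;\y)$, I only need to estimate the base generating term and then note the differential operators preserve the class; weak convergence to $0$ then follows by testing against smooth compactly supported functions and passing the limit inside, which is justified by the integrability bounds recalled at the end of Section~\ref{section derivative}.

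\medskip

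The first step is to write the generating integral and integrate out the variable $y_1$ carrying the $\partial_z\theta_\delta$ factor. On the support of $\partial_z\theta_\delta(y_1-z)$ we have $|y_1 - z| \asymp \delta$, so the factor $(y_1-z)^{-q_1}$ contributes a size $\delta^{-q_1}$, and $\partial_z \theta_\delta$ contributes $\delta^{-1}$; after integrating $d^2 y_1$ over an annulus of area $O(\delta^2)$, the naive scaling gives $\delta^{1-q_1}$. The crucial refinement is that as $y_1 \to z$, the insertion $V_\gamma(y_1)$ fuses with the degenerate insertion $V_{-(r-1)\chi}(z)$, and the correlation function develops a power-law behavior in $|y_1 - z|$ governed by the conformal weights. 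The relevant exponent comes from the OPE-type estimate: the expectation behaves like $|y_1-z|^{-\gamma\alpha_0} = |y_1-z|^{(r-1)\gamma\chi}$ up to the contribution of the remaining $p-1$ chaos points, and this positive power is what must overwhelm the $\delta^{1-q_1}$ factor. Combining with the constraint $\sum_j q_j \le r-1$ and the precise value of $\chi$, I expect the total exponent of $\delta$ to be strictly positive, yielding convergence to $0$.

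\medskip

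The main obstacle, and the reason for the constraint $\gamma \in (\sqrt{2(r-2)/(r-1)},2)$ in the case $\chi = \frac{\gamma}{2}$, is controlling the fusion of the remaining $p-1$ integration variables $y_2,\dots,y_p$ with $z$ as $y_1 \to z$ on the shrinking annulus. When several $V_\gamma$ insertions cluster near the degenerate point together with $y_1$, the multifractal behavior of the GMC moment can produce a competing \emph{negative} contribution to the power of $\delta$. Quantitatively, one must bound the conditional mass $\int \prod_{i\ge 2} \frac{\theta_\delta(y_i-z)}{(y_i-z)^{q_i}} \E[\cdots]\, d^2 y_{\ge 2}$ uniformly as $y_1$ ranges over the annulus, and this requires a moment estimate for GMC integrated against singular weights concentrated near $z$. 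The threshold $\gamma^2 > \frac{2(r-2)}{r-1}$ is exactly what guarantees that the net exponent stays positive when $\chi = \frac{\gamma}{2}$ and all $q_j$ are pushed to their extreme; when $\chi = \frac{2}{\gamma}$ the relevant fusion exponent is $(r-1)\gamma \cdot \frac{2}{\gamma} = 2(r-1)$, which is large enough to dominate for all $\gamma \in (0,2)$, explaining why no constraint is needed there.

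\medskip

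To organize the estimate I would split the annular integral according to whether the remaining points $y_2,\dots,y_p$ are close to $z$ or bounded away from it, bound the near-diagonal region using the GMC multifractal scaling exponent $\zeta(q) = (1+\frac{\gamma^2}{4})q - \frac{\gamma^2}{4}q^2$ applied to the clustered insertions, and treat the far region by the uniform integrability of $\sup_\epsilon \lg \prod V_\gamma(y_i) \prod V_{\alpha_l}(z_l)\rg$ recalled above. Throughout, the regularization $\theta_\delta$ and the limit in $\epsilon$ must be handled in the right order: I would first fix $\delta > 0$, pass $\epsilon \to 0$ using the already-established convergence of $\lg \cdot \rg_\delta$, and only then send $\delta \to 0$, with dominated convergence under the test-function integral justified by the moment bound. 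The routine but lengthy bookkeeping of collecting exact exponents I would relegate to the computation; the conceptual heart is the single fusion estimate pinning down the power of $\delta$.
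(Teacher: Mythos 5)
Your core strategy is the same as the paper's: localize to the annulus $\{\delta/2\le|y_1-z|\le\delta\}$ where $\partial_z\theta_\delta$ lives, pick up $\delta^{-1}$ from $\|\partial_z\theta_\delta\|_\infty$, $\delta^{-\sum_j q_j}\ge\delta^{1-r}$ from the singular weights, $\delta^2$ from the area, and the fusion exponent $(r-1)\gamma\chi$ from $V_\gamma(y_1)$ meeting $V_{-(r-1)\chi}(z)$; the net exponent $(r-1)\gamma\chi+2-r$ is exactly the paper's, and your reading of both constraints on $\gamma$ is correct. Note also that the fusion estimate here is not an OPE asymptotic but an exact prefactor: from \eqref{equation 2.2} the correlation with the extra insertion carries the explicit factor $|y_1-z|^{(r-1)\gamma\chi}$ times a GMC moment that is bounded uniformly for $y_1$ near $z$, so no genuine fusion analysis is needed.

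Where your plan would stall is the treatment of $y_2,\dots,y_p$. You propose a near/far splitting with a multifractal moment estimate for clustered singular insertions; that lemma is not established anywhere and is not needed. Two observations make the problem disappear. First, each remaining variable carries $\theta_\delta(y_i-z)$, which vanishes on $B(z,\delta/2)$, so these points never approach $z$ below scale $\delta$ and the weights satisfy $\theta_\delta(y_i-z)|y_i-z|^{-q_i}\le(2/\delta)^{q_i}\theta_\delta(y_i-z)$ pointwise --- there is no clustering regime to control. Second, once the singular weights are replaced by these constants, the KPZ identity \eqref{equation KPZ} integrates out $y_2,\dots,y_p$ \emph{exactly}, each integration producing the constant $(\sum_l\alpha_l-2Q)/(\mu\gamma)$ and removing one insertion, until only the $y_1$-integral over the annulus remains. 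This reduces the whole estimate to the single-point fusion bound you correctly identified as the conceptual heart. So the constraint $\gamma^2>\frac{2(r-2)}{r-1}$ does not come from a competition with multifractal clustering; it comes solely from the worst case $\sum_j q_j=r-1$ in the elementary power counting. Your proposal would be complete, and considerably shorter, if you replaced the third and fourth paragraphs by these two remarks.
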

\begin{proof}
We can set $D$ to identity since if we have weak convergence to $0$ when $D = \textnormal{id}$, then applying differential operators from $\mathcal{L}_{-}(z;\z)$ will not affect its weak convergence to $0$. Without loss of generality, we take $P_{\n}=1$. It is easy to see that  $\partial_z\theta_{\delta}$ is supported in $B(0,\delta)\backslash B(0,\frac{\delta}{2})$, and $\| \partial_z \theta_{\delta} \|_{\infty} \le \frac{c}{\delta} $ for a constant $c>0$. Then it suffices to control
\begin{align*}
&\quad \left|\int_{\mathbb{C}^p}  \frac{ \partial_z\theta_{\delta}(y_1-z)}{(y_1-z)^{q_1}} \prod_{i=2}^{p}\frac{\theta_{\delta}(y_i-z)}{(y_i-z)^{q_i}} \lg V_{-(r-1)\chi }(z) \prod_{i=1}^p V_{\gamma}(y_i) \prod_{l=1}^N V_{\alpha_l}(z_l)\rg_{\delta} \,d^2\mathbf{y} \right|\\
&\le  c' \delta^{-1-\sum_{i=1}^p q_i} \int_{B(z,\delta)\backslash B(z,\frac{\delta}{2})}\int_{\mathbb{C}^{p-1}} \prod_{i=2}^{p}\theta_{\delta}(y_i-z) \lg V_{-(r-1)\chi }(z) \prod_{i=1}^p V_{\gamma}(y_i) \prod_{l=1}^N V_{\alpha_l}(z_l)\rg_{\delta} \,d^2\mathbf{y}\\
&\overset{\eqref{equation KPZ}}{\le} c(\alp, \gamma, \mu)\delta^{-r} \int_{B(z,\delta)\backslash B(z,\frac{\delta}{2})}\lg V_{-(r-1)\chi }(z) V_{\gamma}(y_1) \prod_{l=1}^N V_{\alpha_l}(z_l)\rg_{\delta} d^2 y_1 .
\end{align*}
Consider $(z,\z)$ in a compact of $U_{N+1}$. We take $\delta_0 <1 \wedge \min_{i\neq j}|z_i -z_j| \wedge \min_{i}|z_i-z|$, then
\begin{align*}
\lg V_{-(r-1)\chi }(z) V_{\gamma}(y_1) \prod_{l=1}^N V_{\alpha_l}(z_l)\rg_{\delta} &\le |y_1-z|^{(r-1)\gamma\chi} Z(-(r-1)\chi,\gamma,\alp)  \prod_{i=1}^N |z_i-z|^{(r-1)\chi\alpha_i}\prod_{i < j }\frac{1}{|z_i-z_j|^{\alpha_i\alpha_j}}\\
&\quad \sup_{y' \in B(z,\delta_0)} \E[(\int_{B(0,1)^c} \frac{|x-z|^{(r-1)\chi} e^{\gamma X(x)}\hg (x)^{1-\frac{\gamma}{4} \sum_{i=1}^N \alpha_i}d^2x}{|x-y'|^{\gamma^2}\prod_{k=1}^N |x-z_k|^{\gamma\alpha_k} })^{-\frac{\sum_{i=1}^N \alpha_i-2Q}{\gamma}}]\\
&\le c(\alp, \gamma, \mu) |y_1-z|^{(r-1)\gamma\chi}.
\end{align*}
Then we can bound the whole term by $c(\alp, \gamma, \mu)\delta^{(r-1)\gamma\chi+2-r}$, which converges to $0$ when the condition on $\gamma$ is satisfied. 

\end{proof}

\subsection{Recursive formulas}\label{section recursive formulas}

This subsection is devoted to proving a recursive formula that allows to transform the higher BPZ equations into a combinatorial form. The main result is the following proposition.

\begin{prop}\label{prop L_-nQ}
The following relation holds when $n + |\n|+|\q| \le r$:
\begin{small}
\begin{align}
	L_{-n}P_{\n}Q_{\mathbf{q}}=& \Bigg[ \sum_{i} n_i \frac{P_{n_i+n}}{P_{n_i}}-\sum_{i=1}^{n-1}P_iP_{n-i}+ ( (n-1)Q - (r-1)\chi )P_n+2\sum_{i=1}^{n-1}P_iT_{p+1}^{n-i} + ((r-1)\chi-\frac{2(n-1)}{\gamma})T_{p+1}^n\nonumber\\
	&-\sum_{i=1}^{n-1}T_{p+2}^{n-i} T_{p+1}^i+\sum_{j=1}^p \Bigg(-\gamma\sum_{i=1}^{n-1}P_iT_j^{n-i}+(\frac{(n-1)\gamma Q}{2} -\frac{(r-1)\gamma\chi}{2} +q_j)T_j^n\nonumber \\
	&+ \gamma\sum_{i=1}^{n-1}T_{p+1}^{n-i}T_j^i - \frac{\gamma^2}{4} \sum_{j'=1}^p \sum_{i=1}^{n-1}T_{j'}^{n-i}T_{j}^i \Bigg) \Bigg] P_{\n}Q_{\mathbf{q}}+\mathfrak{R}_{\delta}
\end{align}
\end{small}
\end{prop}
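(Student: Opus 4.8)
The plan is to reduce the identity to a direct (if lengthy) computation resting on three ingredients: the derivation/Leibniz structure of $L_{-n}$, the derivative rule (Lemma \ref{lem derivation G}) applied inside the integral defining $Q_{\q}$, and a single telescoping identity combined with integration by parts and symmetrization of the integration variables. First I would peel off the action on $P_{\n}$. For $n\ge 2$ write $L_{-n}=D_n+M_n$, where $D_n=-\sum_l(z_l-z)^{-(n-1)}\partial_{z_l}$ is a first-order derivation and $M_n$ is multiplication by $\sum_l \Delta_{\alpha_l}(n-1)(z_l-z)^{-n}$. A one-line computation gives $D_nP_m=mP_{n+m}$, so the Leibniz rule yields $L_{-n}(P_{\n}Q_{\q})=\big(\sum_i n_i\frac{P_{n_i+n}}{P_{n_i}}\big)P_{\n}Q_{\q}+P_{\n}\,L_{-n}Q_{\q}$. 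This isolates the first term of the stated formula and reduces everything to computing $L_{-n}Q_{\q}$.

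Next I would bring $L_{-n}$ inside the integral and differentiate the correlation function by the derivative rule, sorting the output by the insertion that the rule connects to. The degenerate insertion at $z$ contributes $-(r-1)\chi P_n$. The bulk insertions $z_{l'}$ produce a double sum which, after being combined with $M_n$ and using $\Delta_{\alpha}=\frac{\alpha Q}{2}-\frac{\alpha^2}{4}$, collapses via the partial-fraction identity $\frac{1}{a^{n-1}(b-a)}=\frac{1}{b^{n-1}(b-a)}+\sum_{k=0}^{n-2}\frac{1}{a^{n-1-k}b^{k+1}}$ (with $a=z_l-z$, $b=z_{l'}-z$) and antisymmetrization in $(l,l')$; the spurious terms $\sum_l\alpha_l^2(z_l-z)^{-n}$ cancel and one is left with $-\sum_{i=1}^{n-1}P_iP_{n-i}+(n-1)Q\,P_n$. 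All remaining interactions -- those hitting the existing $V_{\gamma}(y_i)$ and the new Girsanov variable coming from the GMC term of the derivative rule -- generate the $T$-terms.

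The core of the argument, and what I expect to be the main obstacle, is the treatment of the $T$-terms. Applying the same telescoping identity to $\frac{1}{(z_l-z)^{n-1}(y-z_l)}$ and to $\frac{1}{(y_i-z)^{n-1}(y_{j'}-y_i)}$ splits each contribution into a \emph{clean} part depending only on differences $y-z$ -- read off directly as the blocks $2\sum P_iT_{p+1}^{n-i}$, $+\gamma\sum T_{p+1}^{n-i}T_j^i$ and $-\frac{\gamma^2}{4}\sum T_{j'}^{n-i}T_j^i$ -- and a \emph{singular} remainder carrying a factor $\frac{1}{y-y'}$. To handle the factors $\sum_l\frac{\alpha_l}{y-z_l}$ appearing against an integration variable $y$, I would use the derivative rule in reverse: since $y$ is itself a $V_{\gamma}$-insertion, $\sum_l\frac{\gamma\alpha_l}{2(z_l-y)}\lg\cdots\rg$ equals $\partial_y\lg\cdots\rg$ minus the Coulomb terms attached to $z$ and to the other variables $y_{j'}$, plus a further GMC contribution. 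The $\partial_y$ piece is integrated by parts (holomorphic integration by parts on $\mathbb{C}$, boundary terms vanishing by the decay established through the KPZ identity \eqref{equation KPZ} and the integrability bounds of \cite{Kupiainen2015local}); the derivative falling on $\frac{\theta_{\delta}(y-z)}{(y-z)^{q+n-1}}$ produces the coefficients $q_jT_j^n$ and $((r-1)\chi-\frac{2(n-1)}{\gamma})T_{p+1}^n$ together with a $\partial_z\theta_{\delta}$-term that is exactly of type $\mathfrak{R}_{\delta}$, while the further GMC term introduces a second Girsanov variable and, after symmetrizing in it, the block $-\sum_{i=1}^{n-1}T_{p+2}^{n-i}T_{p+1}^i$.

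The most delicate step is the symmetrization that closes the computation. The singular remainders with $\frac{1}{y-y'}$ coming from the two distinct Girsanov sources are not individually admissible, but summing them and exploiting that the integrand is symmetric in the $V_{\gamma}$-variables -- so that $\frac{1}{(y-z)^{n-1}(y'-y)}$ may be replaced by its $y\leftrightarrow y'$ symmetrization $\tfrac12(\cdots)$ -- converts them back into clean $T_{\bullet}^{n-i}T_{\bullet}^i$ blocks. This is precisely what fixes the coefficients at $+\gamma$ (rather than $2\gamma$) in front of $\sum T_{p+1}^{n-i}T_j^i$ and at $-1$ (rather than $-2$) in front of $\sum T_{p+2}^{n-i}T_{p+1}^i$. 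Collecting all $T_j^n$ contributions gives the coefficient $q_j+(n-1)-\frac{(r-1)\gamma\chi}{2}$, which agrees with the stated $\frac{(n-1)\gamma Q}{2}-\frac{(r-1)\gamma\chi}{2}+q_j$ once the diagonal $j'=j$ of the $-\frac{\gamma^2}{4}$ double sum is reincorporated, since $\frac{(n-1)\gamma Q}{2}-\frac{\gamma^2(n-1)}{4}=(n-1)$ by $Q=\frac{\gamma}{2}+\frac{2}{\gamma}$. Throughout, the hypothesis $n+|\n|+|\q|\le r$ guarantees that every exponent on a $y$-variable stays $\le r-1$, so each $\partial_z\theta_{\delta}$-term genuinely lies in $\mathfrak{R}_{\delta}$; the base case $n=1$ ($L_{-1}=\partial_z$) is the same computation with every sum $\sum_{i=1}^{n-1}$ empty, the boundary term $\mathbf{1}_{\{i=0\}}$ of the derivative rule supplying one more $\mathfrak{R}_{\delta}$ contribution.
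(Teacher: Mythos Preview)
Your proposal is correct and rests on the same ingredients as the paper --- the Leibniz reduction to $P_{\n}=1$, the derivative rule, the partial-fraction identity \eqref{multisum}, integration by parts, and symmetrization in the $V_\gamma$-variables --- so the underlying computation is identical. The one organizational difference worth noting is that the paper does not attack $L_{-n}Q_{\q}$ directly: it first isolates the case $\q=0$ as a standalone lemma (Lemma~\ref{lemma L_nQ}), and then for general $\q$ introduces the auxiliary operator
\[
L_{-n}^{\lg p \rg}=L_{-n}+\sum_{i=1}^p \Big(-\frac{\partial_{y_i}}{(y_i-z)^{n-1}}+\frac{(n-1)\Delta_{\gamma}}{(y_i-z)^n}\Big),
\]
which treats the $y_i$ as extra insertions. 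This lets one apply the $\q=0$ lemma verbatim to the enlarged correlation $\lg z,\z;\y\rg_\delta$ (with $P_n$ replaced by $P_n^{\lg p\rg}=P_n+\sum_j\frac{\gamma}{2(y_j-z)^n}$), while the difference $L_{-n}-L_{-n}^{\lg p\rg}$ is handled by a single integration by parts in each $y_j$, producing exactly the $q_jT_j^n$ terms and an $\mathfrak{R}_\delta$. Your direct approach and the paper's bootstrap unravel to the same algebra, but the $L_{-n}^{\lg p\rg}$ trick cleanly separates the ``new Girsanov variable'' contributions from the ``existing $y_j$'' contributions and avoids having to track the delicate symmetrization and coefficient-matching you describe in your last paragraph.
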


The recursive relation seems complicated but we will not use directly this expression, what we need is only the homogeneity of $P_{\n}Q_{\q}$. The proposition shows that $\mathcal{D}_{r} Q_0 $ can be expressed as 
\begin{equation}\label{equation lambda}
\mathcal{D}_{r} Q_0 = \sum_{\mathbf{n},\q:|\mathbf{n}|+|\q|=r}\lambda_{\mathbf{n},\q}(\gamma)P_{\mathbf{n}}Q_{\q} + \mathfrak{R}_{\delta},
\end{equation}
where $\lambda_{\mathbf{n},\q}(\gamma)$ are rational fractions in $\gamma$ and are independent of other parameters (the expression of $\lambda_{\mathbf{n},\q}(\gamma)$ is different when $\chi$ takes the value $\frac{\gamma}{2}$ or $\frac{2}{\gamma}$). To avoid ambiguity of the definition, we proceed as if the family $(P_{\n}Q_{\q})_{\n,\q}$ is linearly independent and regroup the coefficients to obtain the above equation. In section \ref{section proof BPZ}, we will prove that: every coefficient $\lambda_{\mathbf{n},\q}(\gamma)$ equals $0$. Then by sending $\delta$ to $0$, we have $\mathcal{D}_r \lg V_{-\frac{(r-1)\gamma}{2}}(z) \prod_l V_{\alpha_l}(z_l)\rg = 0$ in the weak sense. The smoothness of correlation functions allows to conclude the proof for Theorem \ref{theo BSA}.

Now we discuss the first step of proving Proposition \ref{prop L_-nQ}. It is easy to see that 
\begin{equation}
L_{-n}P_{\n}Q_{\q} = (\sum_i n_i \frac{P_{n_i+n}}{P_{n_i}})P_{\n}Q_{\q} +P_{\n}L_{-n}Q_{\q}.
\end{equation}
Without loss of generality, we can consider $P_{\n}=1$. Let us first prove an intermediate lemma, which is the special case where $\q=0$:
\begin{lem}\label{lemma L_nQ}
The following relation holds when $n \le r$:
\begin{align}
	L_{-n}Q_0=&\Big(  -\sum_{i=1}^{n-1}P_iP_{n-i}+ ( (n-1)Q - (r-1)\chi)P_n\Big)Q_0 +2\sum_{i=1}^{n-1}P_iQ_{n-i}+((r-1)\chi-\frac{2(n-1)}{\gamma})Q_n  \nonumber\\
	&-\sum_{i=1}^{n-1} Q_{i,n-i}+ \mu \intc \frac{\partial_z\theta_{\delta}(y-z) }{(y-z)^{n-1}} \lg z,\z ;y\rg_{\delta} \,d^2y.
\end{align}
\end{lem}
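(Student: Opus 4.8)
The plan is to compute $L_{-n}Q_0$ directly from the derivative rule (Lemma \ref{lem derivation G}) and sort the resulting terms into the three families appearing on the right-hand side: a ``potential'' part proportional to $Q_0$, a part linear in the $Q_{n-i}$ and $Q_n$, and a part involving the two-point objects $Q_{i,n-i}$, leaving a remainder of $\mathfrak{R}_\delta$ type. For $n=1$ the operator is simply $L_{-1}=\partial_z$ and the claim is read off immediately from the $i=0$ case of Lemma \ref{lem derivation G}, the three resulting terms being exactly $-(r-1)\chi P_1Q_0$, $(r-1)\chi Q_1$ and the boundary integral. So I would focus on $n\ge 2$, where
\[
L_{-n}Q_0=\sum_{l=1}^N\left(-\frac{1}{(z_l-z)^{n-1}}\partial_{z_l}+\frac{(n-1)\Delta_{\alpha_l}}{(z_l-z)^n}\right)Q_0,
\]
and apply Lemma \ref{lem derivation G} to each $\partial_{z_l}Q_0$, separating the contribution of the degenerate insertion $z_0=z$, $\alpha_0=-(r-1)\chi$, from the others.

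First I would collect the terms proportional to $Q_0$. The $j=0$ piece of $\partial_{z_l}Q_0$ contributes $-(r-1)\chi\sum_l\frac{\alpha_l}{2(z_l-z)^n}Q_0=-(r-1)\chi P_nQ_0$. The $j\neq 0$ cross terms give a double sum over $l\neq j$ which I symmetrise in $(l,j)$ and reduce with the telescoping identity $\frac{1}{z_j-z_l}\big(\frac{1}{(z_l-z)^{n-1}}-\frac{1}{(z_j-z)^{n-1}}\big)=\sum_{i=1}^{n-1}\frac{1}{(z_l-z)^i(z_j-z)^{n-i}}$, producing $-\sum_{i=1}^{n-1}P_iP_{n-i}$ together with a diagonal term $\frac{n-1}{4}\sum_l\frac{\alpha_l^2}{(z_l-z)^n}$. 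Writing $\Delta_{\alpha_l}=\frac{\alpha_lQ}{2}-\frac{\alpha_l^2}{4}$ in the remaining $\frac{(n-1)\Delta_{\alpha_l}}{(z_l-z)^n}$ terms, the $\frac{\alpha_l^2}{(z_l-z)^n}$ contributions cancel exactly against this diagonal term and what survives is $(n-1)QP_n$. Together these reproduce the first bracket $\big(-\sum_{i=1}^{n-1}P_iP_{n-i}+((n-1)Q-(r-1)\chi)P_n\big)Q_0$.

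Next I would treat the Gaussian multiplicative chaos term $\sum_l\frac{\mu\gamma\alpha_l}{2(z_l-z)^{n-1}}\int_{\mathbb{C}}\frac{\theta_\delta(y-z)}{y-z_l}\langle\cdots\rangle_\delta\, d^2y$. The key move is the partial fraction $\frac{1}{(z_l-z)^{n-1}(y-z_l)}=\frac{1}{(y-z)^{n-1}(y-z_l)}+\sum_{i=1}^{n-1}\frac{1}{(z_l-z)^i(y-z)^{n-i}}$, which splits this into a part already regrouped on $z$ -- giving, via $\sum_l\frac{\alpha_l}{(z_l-z)^i}=2P_i$, exactly $2\sum_{i=1}^{n-1}P_iQ_{n-i}$ -- and the residual $\frac{\mu\gamma}{2}\int\frac{\theta_\delta(y-z)}{(y-z)^{n-1}}\big(\sum_l\frac{\alpha_l}{y-z_l}\big)\langle\cdots\rangle_\delta\, d^2y$. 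On this residual I would apply Lemma \ref{lem derivation G} a second time, now differentiating the correlation $\langle V_{-(r-1)\chi}(z)V_\gamma(y)\prod_lV_{\alpha_l}(z_l)\rangle_\delta$ with respect to the position $y$ of the charge-$\gamma$ insertion; this rewrites $\sum_l\frac{\alpha_l}{y-z_l}\langle\cdots\rangle_\delta$ as $-\frac{2}{\gamma}\partial_y\langle\cdots\rangle_\delta+\frac{(r-1)\chi}{y-z}\langle\cdots\rangle_\delta$ plus a double charge-$\gamma$ integral. The $\frac{(r-1)\chi}{y-z}$ term yields $(r-1)\chi Q_n$; the $\partial_y$ term, after integrating by parts in $y$, produces $-\frac{2(n-1)}{\gamma}Q_n$ from $\partial_y(y-z)^{-(n-1)}$ plus the boundary integral $\mu\int\frac{\partial_z\theta_\delta(y-z)}{(y-z)^{n-1}}\langle\cdots\rangle_\delta\, d^2y$, which is precisely of $\mathfrak{R}_\delta$ type; and the double integral, symmetrised in the two charge-$\gamma$ variables via the same telescoping identity, collapses to $-\sum_{i=1}^{n-1}Q_{i,n-i}$. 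Collecting $(r-1)\chi Q_n-\frac{2(n-1)}{\gamma}Q_n=((r-1)\chi-\frac{2(n-1)}{\gamma})Q_n$ then finishes the identification.

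The routine but delicate point is the justification of the second application of the derivative rule and of the integration by parts: one must verify there is no boundary contribution at infinity and that the singularities at $y=z_l$ do not generate extra distributional terms. This is where the integrability estimates recalled at the end of Section \ref{section derivative} (the $L^1$ control of $\sup_\epsilon\langle V_\gamma(y)\prod_lV_{\alpha_l}(z_l)\rangle$ and of its two-$V_\gamma$ analogue, taken from \cite{Kupiainen2015local}) enter, together with the cutoff $\theta_\delta$ that removes the singularity at $y=z$ and makes the boundary term genuinely concentrated on the annulus $B(z,\delta)\setminus B(z,\delta/2)$. The two symmetrisation/telescoping identities are purely algebraic and pose no difficulty once set up; the only genuine care is bookkeeping the cancellation of the diagonal $\sum_l\alpha_l^2(z_l-z)^{-n}$ terms through $\Delta_{\alpha_l}=\frac{\alpha_lQ}{2}-\frac{\alpha_l^2}{4}$, and tracking the precise form (the exact sign being immaterial, as it is absorbed into the remainder $\mathfrak{R}_\delta$ which vanishes by Proposition \ref{prop R to 0}).
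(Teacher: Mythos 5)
Your proposal is correct and follows essentially the same route as the paper: the same derivative rule for the $Q_0$-proportional and $2\sum P_iQ_{n-i}$ terms via the telescoping/partial-fraction identity, and your ``second application of the derivative rule at the $V_\gamma(y)$ insertion followed by integration by parts in $y$'' is exactly how the paper's Stokes identity producing $((r-1)\chi-\tfrac{2(n-1)}{\gamma})Q_n$, $-\sum Q_{i,n-i}$ and the $\partial_z\theta_\delta$ boundary term is derived. The only presentational difference is that the paper carries the $\eta_\epsilon$-regularization $(\cdot)_\epsilon$ through the whole computation and passes to the limit $\epsilon\to 0$ by dominated convergence, whereas you work formally at $\epsilon=0$ while flagging the same integrability estimates.
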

\begin{rem}
For $\epsilon \ge 0$, the notation $\lg z,\z;\y \rg_{\delta,\epsilon}$ stands for $\lg V_{-(r-1)\chi,\epsilon}(z) \prod_{i=1}^p V_{\gamma,\epsilon}(y_i)\prod_{l=1}^N V_{\alpha_l,\epsilon}(z_l)\rg_{\delta}$. 
\end{rem}
\begin{proof}
We will need to use Lemma \ref{lem derivation G} for taking derivatives of $\lg V_{-(r-1)\chi,\epsilon}(z)  \prod_{l=1}^N V_{\alpha_l,\epsilon}(z_l)\rg_{\delta}$, and then tend $\epsilon$ to $0$ to get the desired relation. Remark that all the convergences in this proof are locally uniform convergences for $(z,\z)$ when $\epsilon \to 0$, and it will not be specified.

Let us work with the case $n \ge 2$,
\begin{small}
\begin{align}\label{equation AB}
	L_{-n}\lg z,\z\rg_{\delta,\epsilon} =&\left(-\sum_j\sum_{l \neq j}\frac{\alpha_j\alpha_l}{2(z_j-z)^{n-1}(z_l-z_j)_{\epsilon}}-\sum_j \frac{(r-1)\chi\alpha_j}{2(z_j-z)^{n-1}(z_j-z)_{\epsilon}}+\sum_j\frac{(n-1)\Delta_{\alpha_j}}{(z_j-z)^n}\right) \lg z,\z \rg_{\delta,\epsilon}  \nonumber\\
	& +\sum_j \frac{\mu \gamma\alpha_j}{2(z_j-z)^{n-1}}\intc \frac{\theta_{\delta}(y-z)}{(y-z_j)_{\epsilon}}\lg z,\z;y \rg_{\delta,\epsilon} \,d^2y \nonumber\\
	=&:A_{\epsilon} \lg z,\z \rg_{\delta,\epsilon}+B_{\epsilon}.
\end{align}
\end{small}
We use a simple identity to calculate $\lim_{\epsilon\to 0}A_{\epsilon}$ and $\lim_{\epsilon\to 0}B_{\epsilon}$:
	\begin{equation}\label{multisum}
	\frac{1}{(x_1-x_2)(x_2-z)^{n-1}}-\frac{1}{(x_1-x_2)(x_1-z)^{n-1}}=\sum_{i=1}^{n-1} \frac{1}{(x_1-z)^i}\frac{1}{(x_2-z)^{n-i}}.
	\end{equation}
By symmetry and the above identity,
	\begin{equation*}
	\begin{split}
	\sum_j\sum_{l \neq j}\frac{\alpha_j\alpha_l}{2(z_l-z_j)(z_j-z)^{n-1}}&=\frac{1}{2}\sum_j\sum_{l \neq j}\frac{\alpha_j\alpha_l}{2(z_l-z_j)(z_j-z)^{n-1}}-\frac{\alpha_j\alpha_l}{2(z_l-z_j)(z_l-z)^{n-1}}\\
	&=\sum_{i=1}^{n-1} P_iP_{n-i}-\sum_j \frac{(n-1)\alpha_j^2}{4(z_j-z)^n}.
	\end{split}
	\end{equation*}
Therefore taking the limit for $A_{\epsilon}$ yields
	\begin{equation*}
	\lim_{\epsilon\to 0}A_{\epsilon}=-\sum_{i=1}^{n-1} P_iP_{n-i}+\left( (n-1)Q-(r-1)\chi  \right)P_n.
	\end{equation*}
For $B_{\epsilon}$, note that
	\begin{equation*}
	\begin{split}
	&\quad B_{\epsilon}-\intc \sum_j\frac{\mu\gamma\alpha_j \theta_{\delta}(y-z) }{2(y-z_j)_{\epsilon}(y-z)^{n-1}}\lg z,\z;y \rg_{\delta,\epsilon} \,d^2y\\
	&= \sum_{i=1}^{n-1}\sum_j  \frac{\mu\gamma\alpha_j}{2(z_j-z)^i} \intc \frac{y-z_j}{(y-z_j)_{\epsilon}}\frac{\theta_{\delta}(y-z)}{(y-z)^{n-i}}\lg z,\z ;y\rg_{\delta,\epsilon}\,d^2y \overset{\epsilon \to 0}{\longrightarrow} 2\sum_{i=1}^{n-1}P_iQ_{n-i}.
	\end{split}
	\end{equation*}
Here we have used dominant convergence, where we can bound $\frac{x}{(x)_{\epsilon}}<c$ with $c$ a constant independent of $\epsilon$, and the function $y \mapsto \sup_{\epsilon}\lg z,\z ;y\rg_{\delta,\epsilon}$ is integrable. 

An integration by parts formula (or Stokes formula) gives the following identity:
	\begin{equation*}
	\begin{split}
	&\intc \sum_j\alpha_j \frac{\mu\gamma \theta_{\delta}(y-z) }{2(y-z_j)_{\epsilon}(y-z)^{n-1}}\lg z,\z ;y\rg_{\delta,\epsilon} \,d^2y\\
	= &-\frac{2(n-1)}{\gamma}\int_{\mathbb{C}} \frac{\mu \gamma \theta_{\delta}(y-z) }{2(y-z)^{n}} \lg z,\z ;y\rg_{\delta,\epsilon} \,d^2y+(r-1)\chi\int_{\mathbb{C}} \frac{\mu \gamma \theta_{\delta}(y-z) }{2(y-z)^{n-1}(y-z)_{\epsilon}} \lg z,\z ;y\rg_{\delta,\epsilon} \,d^2y  \\
	&+\mu \intc \frac{\partial_z\theta_{\delta}(y-z) }{(y-z)^{n-1}}\lg z,\z ;y\rg_{\delta,\epsilon} \,d^2y
	-2\int_{\mathbb{C}^2} \frac{(\mu \gamma)^2\theta_{\delta}(x-z)\theta_{\delta}(y-z)}{4(x-y)_{\epsilon}(y-z)^{n-1}}\lg z,\z ;x,y\rg_{\delta,\epsilon} \, d^2x d^2y.
	\end{split}
	\end{equation*}
The first two terms in the sum converge to $((r-1)\chi -\frac{2(n-1)}{\gamma}) Q_n$. For the last term, using the integrability of $(x,y) \mapsto \sup_{\epsilon}\lg z,\z ;x,y\rg_{\delta,\epsilon}$ and by symmetry,
\begin{align*}
&\quad 2\int_{\mathbb{C}^2} \frac{(\mu \gamma)^2\theta_{\delta}(x-z)\theta_{\delta}(y-z)}{4(x-y)_{\epsilon}(y-z)^{n-1}}\lg z,\z ;x,y\rg_{\delta,\epsilon} \, d^2x d^2y \\
&=\sum_{i=1}^{n-1}\int_{\mathbb{C}^2}\frac{(\mu \gamma)^2}{4} \frac{x-y}{(x-y)_{\epsilon}} \frac{\theta_{\delta}(x-z)\theta_{\delta}(y-z)}{(y-z)^i(x-z)^{n-i}}\lg z,\z ;x,y\rg_{\delta,\epsilon} d^2x d^2y\overset{\epsilon \to 0}{\longrightarrow} \sum_{i=1}^{n-1} Q_{i,n-i}.
\end{align*}
From the above calculus, we deduce that
	\begin{equation*}
	\begin{split}
	\lim_{\epsilon\to 0}B_{\epsilon}=&2\sum_{i=1}^{n-1}P_iQ_{n-i}+((r-1)\chi-\frac{2(n-1)}{\gamma})Q_n-\sum_{i=1}^{n-1}Q_{i,n-i} +\mu \intc \frac{\partial_z\theta_{\delta}(y-z) }{(y-z)^{n-1}} \lg z,\z ;y\rg_{\delta} \,d^2y.
	\end{split}
	\end{equation*}
Sending $\epsilon$ to $0$ in $L_{-n}\lg z,\z \rg_{\delta,\epsilon} = A_{\epsilon}\lg z,\z \rg_{\delta,\epsilon}+B_{\epsilon}$ proves the lemma for $n \ge 2$ in the weak derivative sense. Then it suffices to conclude with the smoothness of correlations. It is not hard to verify the validity for the case $n=1$, which concludes the proof.
\end{proof}

We have shown Proposition \ref{prop L_-nQ} in the special case $\q = 0$ in the previous lemma. The proof for the general case can then be deduced from this result.

\begin{proof}[Proof of Proposition \ref{prop L_-nQ}]
Consider the case $n\ge 2$. Let $\y = (y_1,\dots,y_p)$ and $R > 0$. Note that the operator $L_{-n}$ commutes with the integral sign in the following expression:
	\begin{equation*}
L_{-n}(\frac{\mu \gamma}{2})^p \int_{B(0,R)^p} \prod_{j=1}^p{\frac{\theta_{\delta}(y_j-z)}{(y_j-z)^{q_j}}} \lg z,\z;\y \rg_{\delta,\epsilon} \, d^2\mathbf{y}=(\frac{\mu \gamma}{2})^p \int_{B(0,R)^p} \prod_{j=1}^p{\frac{\theta_{\delta}(y_j-z)}{(y_j-z)^{q_j}}} L_{-n} \lg z,\z;\y \rg_{\delta,\epsilon} \,d^2\mathbf{y}.	
	\end{equation*}
We introduce the notation
\begin{equation*}
	L_{-n}^{\lg p \rg}=\sum_{l=1}^N \left(-\frac{\partial_{z_l}}{(z_l-z)^{n-1}}+\frac{(n-1)\Delta_{\alpha_l}}{(z_l-z)^n}\right)+\sum_{i=1}^p \left(-\frac{\partial_{y_i}}{(y_i-z)^{n-1}}+\frac{(n-1)\Delta_{\gamma}}{(y_i-z)^n}\right).
\end{equation*}
This newly defined operator considers $\y$ as insertions and applies the corresponding differential operators. Remark that the value of $\Delta_{\gamma}$ is $1$ and we can write
\begin{small}
\begin{align}
&(\frac{\mu \gamma}{2})^p \int_{B(0,R)^p} \prod_{j=1}^p{\frac{\theta_{\delta}(y_j-z)}{(y_j-z)^{q_j}}}L_{-n}\lg z,\z;\y \rg_{\delta,\epsilon} \,d^2\mathbf{y} \nonumber\\
=& (\frac{\mu \gamma}{2})^p\int_{B(0,R)^p} \prod_{j=1}^p{\frac{\theta_{\delta}(y_j-z)}{(y_j-z)^{q_j}}}L_{-n}^{\lg p \rg}\lg z,\z;\y \rg_{\delta,\epsilon} \, d^2\mathbf{y} \nonumber \\
& +(\frac{\mu \gamma}{2})^p\int_{B(0,R)^p} \prod_{j=1}^p{\frac{\theta_{\delta}(y_j-z)}{(y_j-z)^{q_j}}}\sum_{i=1}^p \big(\frac{\partial_{y_i}}{(y_i-z)^{n-1}}-\frac{n-1}{(y_i-z)^n}\big)\lg z,\z;\y \rg_{\delta,\epsilon} \, d^2\mathbf{y} \nonumber\\
 =:& \tilde{A}_{R,\epsilon}+\tilde{B}_{R,\epsilon}.
\end{align}
\end{small}
By the previous lemma, when $\epsilon \to 0$, $\tilde{A}_{R,\epsilon}$ converges to
$$ \z \mapsto (\frac{\mu \gamma}{2})^p \int_{B(0,R)^p} \prod_{j=1}^p{\frac{\theta_{\delta}(y_j-z)}{(y_j-z)^{q_j}}} L_{-n}^{\lg p \rg} \lg z,\z;\y \rg_{\delta} \,d^2\mathbf{y}$$
in the sense of distributions. This is because the integral can be regarded as integrating $L_{-n}^{\lg p \rg} \lg z,\z;\y \rg_{\delta,\epsilon}$ against a test function of $\y$. From the expression of $L_{-n}^{\lg p \rg} \lg z,\z;\y \rg_{\delta}$ (see Lemma \ref{lemma L_nQ}) we can see that it does not introduce any singularity for the integral. Consider for example the integral below:
\begin{align*}
(\frac{\mu \gamma}{2})^p \left| \int_{B(0,R)^c}\int_{B(0,R)^{p-1}} \prod_{j=1}^p{\frac{\theta_{\delta}(y_j-z)}{(y_j-z)^{q_j}}} L_{-n}^{\lg p \rg} \lg z,\z;\y \rg_{\delta} \,d^2\mathbf{y} \right|.
\end{align*} 
We can bound it simply by
\begin{align*}
\frac{1}{R^{q_1}}(\frac{\mu \gamma}{2})^p \int_{\mathbb{C}^{p}} \theta_{\delta}(y_1-z) \prod_{j=2}^p{\frac{\theta_{\delta}(y_j-z)}{|y_j-z|^{q_j}}} \left| L_{-n}^{\lg p \rg} \lg z,\z;\y \rg_{\delta} \right| \,d^2\mathbf{y}.
\end{align*}
The above term is well defined and by sending $R \to \infty$ it converges to $0$ for fixed $\delta$. Therefore we can write in the weak sense:
\begin{equation}
\tilde{A} := \lim_{R \to \infty} \tilde{A}_{R,0} = (\frac{\mu \gamma}{2})^p \int_{\mathbb{C}^p} \prod_{j=1}^p{\frac{\theta_{\delta}(y_j-z)}{(y_j-z)^{q_j}}} L_{-n}^{\lg p \rg} \lg z,\z;\y \rg_{\delta} \,d^2\mathbf{y}.
\end{equation}

On the other hand, an integration by parts shows that
\begin{equation*}
\tilde{B}_{R,\epsilon}= \sum_{i=1}^p (\frac{\mu \gamma}{2})^p\int_{B(0,R)^p} \prod_{j:j\neq i}{\frac{\theta_{\delta}(y_j-z)}{(y_j-z)^{q_j}}} \left(\frac{q_i \theta_{\delta}(y_i-z)}{(y_i-z)^{q_i+n}} -\frac{\partial_z\theta_{\delta}(y_i-z)}{(y_i-z)^{q_i+n-1}}\right)\lg z,\z;\y \rg_{\delta,\epsilon} \, d^2\mathbf{y} + O_{R \to \infty}(R^{-1}),
\end{equation*}
where the $O_{R \to \infty}(R^{-1})$ comes from the boundary term and can be bounded independently of $\epsilon$. Therefore by first sending $\epsilon$ to $0$ and then $R \to \infty$, we obtain the limit which equals:
\begin{equation}
\tilde{B} := \sum_{i=1}^p q_i T_i^n Q_{\mathbf{q}} + \mathfrak{R}_{\delta}.
\end{equation}
The above arguments show that 
\begin{align*}
L_{-n}(\frac{\mu \gamma}{2})^p \int_{\mathbb{C}^p} \prod_{j=1}^p{\frac{\theta_{\delta}(y_j-z)}{(y_j-z)^{q_j}}} \lg z,\z;\y \rg_{\delta} \, d^2\mathbf{y} = \tilde{A} + \tilde{B}.
\end{align*}

In the rest of this proof we do not need to take regularizations with $\eta_{\epsilon}$. With Lemma \ref{lemma L_nQ}, we calculate $\tilde{A}$:
\begin{small}
	\begin{equation*}
	\begin{split}
L_{-n}^{\lg p \rg}\lg z,\z;\y \rg_{\delta}
	=&\Big(  -\sum_{i=1}^{n-1}P_i^{\lg p \rg}P_{n-i}^{\lg p \rg}+ ( (n-1)Q -(r-1)\chi )P_n^{\lg p \rg}\Big)Q^{\lg p \rg}_0 +2\sum_{i=1}^{n-1}P_i^{\lg p \rg}Q_{n-i}^{\lg p \rg}\\
	&+((r-1)\chi-\frac{2(n-1)}{\gamma})Q_n^{\lg p \rg}-\sum_{i=1}^{n-1} Q_{i,n-i}^{\lg p \rg}
	+\mu \intc \frac{\partial_z\theta_{\delta}(y_{p+1}-z) }{(y_{p+1}-z)^{n-1}} \lg z,\z;\y,y_{p+1} \rg_{\delta} \,d^2y_{p+1},
	\end{split}
	\end{equation*}
\end{small}
where 
\begin{equation*}
P_k^{\lg p \rg}(z,\mathbf{z},\mathbf{y})=\sum_{l=1}^N\frac{\alpha_l}{2(z_l-z)^k}+\sum_{j=1}^p\frac{\gamma}{2(y_j-z)^k}=P_k+\sum_{j=1}^p\frac{\gamma}{2(y_j-z)^k},
\end{equation*}
\begin{equation*}
Q^{\lg p \rg}_0(z,\mathbf{z},\mathbf{y})=\lg V_{-(r-1)\chi}(z)\prod_{i=1}^p V_{\gamma}(y_i) \prod_{l=1}^N V_{\alpha_l}(z_l)\rg_{\delta}.
\end{equation*}
Hence
\begin{small}
\begin{align*}
& \tilde{A}= \Bigg[ ( (r-1)Q - (r-1)\chi )P_n+2\sum_{i=1}^{n-1}P_iT_{p+1}^{n-i} + ((r-1)\chi-\frac{2(n-1)}{\gamma})T_{p+1}^n-\sum_{i=1}^{n-1}T_{p+2}^{n-i} T_{p+1}^i\\
&+\sum_{j=1}^p \Bigg(-\gamma\sum_{i=1}^{n-1}P_iT_j^{n-i}+(\frac{(n-1)\gamma Q}{2} -\frac{(r-1)\gamma\chi}{2} )T_j^n\nonumber + \gamma\sum_{i=1}^{n-1}T_{p+1}^{n-i}T_j^i - \frac{\gamma^2}{4} \sum_{j'=1}^p \sum_{i=1}^{n-1}T_{j'}^{n-i}T_{j}^i \Bigg) \Bigg] P_{\n}Q_{\mathbf{q}} + \mathfrak{R}_{\delta}.
\end{align*}
\end{small}
This allows to prove the statement when $P_{\n} = 1$. And as remarked previously, this suffices to prove the statement for any $P_{\n}$.

Otherwise, one can verify the validity of the formula for the case $n=1$. This finishes the proof. 
\end{proof}

\subsection{Illustration with order 2 and 3}\label{section illustration}

We give the commutation relation for $L_{-n}\,(n\ge 1)$, which can be easily verified with Definition \ref{def L}:
\begin{lem}
For $n,m \ge 1$
\begin{equation}\label{commutation}
[L_{-n},L_{-m}]=(m-n)L_{-(n+m)}
\end{equation}
\end{lem}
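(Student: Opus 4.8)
The plan is to verify \eqref{commutation} by a direct computation purely at the level of differential operators, independently of what they act on. I would treat the generic case $n,m\ge 2$ first and handle the degenerate case where one index equals $1$ separately, since $L_{-1}=\partial_z$ is \emph{not} the specialization of the formula for $L_{-n}$ with $n\ge 2$.

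For $n\ge 2$, split $L_{-n}=D_n+M_n$ into a first-order ``transport'' part $D_n=-\sum_{l}(z_l-z)^{-(n-1)}\partial_{z_l}$ and a zeroth-order ``multiplication'' part $M_n=\sum_{l}(n-1)\Delta_{\alpha_l}(z_l-z)^{-n}$. Expanding $[L_{-n},L_{-m}]=[D_n,D_m]+[D_n,M_m]+[M_n,D_m]+[M_n,M_m]$, three simplifications occur: the second-order derivative terms inside $D_nD_m$ are symmetric under $n\leftrightarrow m$ and cancel; $[M_n,M_m]=0$ since multiplications commute; and since each coefficient $(z_j-z)^{-k}$ depends only on $z_j$ and $z$, the relation $\partial_{z_l}(z_j-z)^{-k}=0$ for $l\neq j$ forces every surviving term onto the diagonal $l=j$.

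The computation then collapses to two one-line identities. The diagonal part of $[D_n,D_m]$ sums to $-(m-n)\sum_{j}(z_j-z)^{-(n+m-1)}\partial_{z_j}=(m-n)D_{n+m}$, where the factor $(m-n)$ arises from $(m-1)-(n-1)$. The cross terms $[D_n,M_m]+[M_n,D_m]$ produce the multiplication operator $\sum_j\bigl[m(m-1)-n(n-1)\bigr]\Delta_{\alpha_j}(z_j-z)^{-(n+m)}$, and the factorization $m(m-1)-n(n-1)=(m-n)(m+n-1)$ identifies this with $(m-n)M_{n+m}$. Adding the two contributions gives $[L_{-n},L_{-m}]=(m-n)(D_{n+m}+M_{n+m})=(m-n)L_{-(n+m)}$.

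For the edge case $n=1$ (the case $m=1$ follows by antisymmetry, and $n=m=1$ is trivial), I would use $L_{-1}=\partial_z$ directly: since $\partial_z$ commutes with every $\partial_{z_l}$, the commutator $[\partial_z,L_{-m}]$ is simply $\partial_z$ applied to the coefficients of $L_{-m}$, and with $\partial_z(z_j-z)^{-k}=k(z_j-z)^{-(k+1)}$ one checks $[\partial_z,L_{-m}]=(m-1)L_{-(m+1)}$, as claimed. I do not anticipate a real obstacle: the only points requiring care are the operator ordering (these are noncommuting operators, so the ``derivative hits coefficient'' terms must be tracked precisely) and the separate definition of $L_{-1}$, which is why it cannot be folded into the generic computation. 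An alternative, more conceptual route would be to recognize the $L_{-n}$ as images of the Witt generators $\ell_{-n}=-w^{-n+1}\partial_w$ under the conformal Ward identities, which would make the Witt relations automatic; but the direct check above is shorter and fully self-contained.
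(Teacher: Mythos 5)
Your computation is correct, and it is precisely the direct verification the paper has in mind (the paper offers no written proof, only the remark that the identity ``can be easily verified with Definition \ref{def L}''). Both the generic case $n,m\ge 2$ via the split $L_{-n}=D_n+M_n$ and the separate treatment of $L_{-1}=\partial_z$, which indeed is not the $n=1$ specialization of the general formula, check out.
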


Now we check BPZ equations of order $r=2$ and $r=3$ with $\gamma \in (0,2)$. We will see that the proof of the BPZ equations becomes rather simple and involves only algebraic simplifications.\\

\noindent$\Diamond$ $r=2$: By definition,
\begin{equation}
\mathcal{D}_{2}=\chi^2 L_{-2}+L_{-1}^2.
\end{equation}
With the help of Proposition \ref{prop L_-nQ}, we calculate:
	\begin{equation*}
	L_{-1}Q_0=-\chi P_1Q_0+\chi Q_1 + \mathfrak{R}_{\delta}.
	\end{equation*}
By applying the operator $L_{-1}$ to the above equation, we obtain
	\begin{equation*}
	L_{-1}^2Q_0
	=(-\chi P_2+\chi^2 P_1^2)Q_0-2\chi^2 P_1Q_1+\chi^2 Q_{1,1}+\chi(-\frac{\gamma\chi}{2}+1)Q_2+\mathfrak{R}_{\delta}.
	\end{equation*}
Again by Proposition \ref{prop L_-nQ}, we calculate:
	\begin{equation*}
	L_{-2}Q_0=(-P_1^2 +\frac{1}{\chi}P_2)Q_0+2P_1Q_1+(\chi-\frac{2}{\gamma})Q_2-Q_{1,1}+\mathfrak{R}_{\delta}.
	\end{equation*}
We can verify easily that in $\mathcal{D}_{2} Q_0 = \chi^2 L_{-2}Q_0+L_{-1}^2Q_0$, all the coefficients before $P_{\mathbf{n}}Q_{\mathbf{q}}$ cancel and therefore $\mathcal{D}_{2}Q_0=\mathfrak{R}_{\delta}$. This allows to show BPZ equations in the weak sense, we can then conclude with the smoothness of correlation functions that the equation holds in the strong sense. \qed\\

\noindent $\Diamond$ $r=3$: By definition,
	\begin{equation}
	\mathcal{D}_{3}=\chi^4 L_{-3}+ \frac{\chi^2}{2}L_{-1}L_{-2}+\frac{\chi^2}{2}L_{-2}L_{-1}+\frac{1}{4}L_{-1}^3.
	\end{equation}
We have by lemma \ref{commutation}:
	\begin{equation*}
	L_{-2}L_{-1}=L_{-1}L_{-2}-L_{-3}.
	\end{equation*}
Then we can write
	\begin{equation}\label{equation D3}
	\mathcal{D}_{3}= (\chi^4 -\frac{\chi^2}{2}) L_{-3} +L_{-1}(\chi^2 L_{-2} + \frac{1}{4}L_{-1}^2).
	\end{equation}
Using Proposition \ref{prop L_-nQ}:
	\begin{equation*}
	L_{-1}^2Q_0=(-2\chi P_2+4 \chi^2 P_1^2)Q_0-8\chi^2 P_1Q_1+4\chi^2 Q_{1,1}+2\chi(-\gamma\chi+1)Q_2+\mathfrak{R}_{\delta},
	\end{equation*}
	\begin{equation*}
	L_{-2}Q_0=(-P_1^2+(\frac{1}{\chi} - \chi)P_2)Q_0+2P_1Q_1+(2\chi-\frac{2}{\gamma})Q_2-Q_{1,1}+\mathfrak{R}_{\delta}.
	\end{equation*}
Hence
	\begin{equation*}
	(\chi^2L_{-2}+\frac{1}{4}L_{-1}^2)Q_0=(\frac{\chi}{2}-\chi^3)P_2Q_0+(\chi^3-\frac{\chi}{2})Q_2+\mathfrak{R}_{\delta}.
	\end{equation*}
We can then write
\begin{align*}
\mathcal{D}_{3} Q_0 = (\chi^4 -\frac{\chi^2}{2}) L_{-3} Q_0 + (\frac{\chi}{2}-\chi^3)L_{-1}P_2Q_0+(\chi^3-\frac{\chi}{2}) L_{-1}Q_2+\mathfrak{R}_{\delta}.
\end{align*}
We mention that when $\chi = \frac{\gamma}{2}$ and $\gamma \in (0,1] $, there is a type of $\mathfrak{R}_{\delta}$ that does not vanish when $\delta \to 0$:
\begin{equation}\label{equation R singular}
\int_{\mathbb{C}}  \frac{ \partial_z\theta_{\delta}(y-z)}{(y-z)^{2}} \lg V_{-(r-1)\chi }(z)  V_{\gamma}(y) \prod_{l=1}^N V_{\alpha_l}(z_l)\rg_{\delta} \,d^2 y.
\end{equation}
For this kind of non-vanishing perturbation term, we will write it directly with its expression instead of writing $\mathfrak{R}_{\delta}$. In general, there is another type 
$$\int_{\mathbb{C}^2}  \frac{ \partial_z\theta_{\delta}(y_1-z)}{y_1-z} \frac{\theta_{\delta}(y_2-z)}{y_2-z} \lg V_{-(r-1)\chi }(z)  V_{\gamma}(y_1) V_{\gamma}(y_2) \prod_{l=1}^N V_{\alpha_l}(z_l)\rg_{\delta} \,d^2 y_1 d^2 y_2$$
that does not vanish. But thanks to the specific replication we use  for $\mathcal{D}_3$ \eqref{equation D3}, this term will not appear in the final expression, .

For the other types of perturbation terms, they still converge to $0$, and we will keep the notation $\mathfrak{R}_{\delta}$. With our calculus, in the expression of $\mathcal{D}_3 Q_0$ the term \eqref{equation R singular} appears only in $(\chi^3-\frac{\chi}{2})L_{-1}Q_2$ and in $(\chi^4 -\frac{\chi^2}{2}) L_{-3}Q_0$. We can find the exact form of the perturbation term in the proof of Proposition \ref{prop L_-nQ}. More precisely, we have
	\begin{align*}
	&(\frac{\chi}{2}-\chi^3)L_{-1}P_2Q_0+(\chi^3-\frac{\chi}{2}) L_{-1}Q_2\\
	=&-(\chi^4 -\frac{\chi^2}{2})\Big((\frac{2}{\chi}P_3-2P_1P_2)Q_0+2P_2Q_1+2P_1Q_2-2Q_{2,1}+(\gamma-\frac{2}{\chi})Q_3\Big) \\
	&- \mu \frac{\gamma}{2}(\chi^3-\frac{\chi}{2}) \int_{\mathbb{C}}  \frac{ \partial_z\theta_{\delta}(y-z)}{(y_1-z)^{2}} \lg V_{-(r-1)\chi }(z)  V_{\gamma}(y) \prod_{l=1}^N V_{\alpha_l}(z_l)\rg_{\delta} \,d^2\mathbf{y} + \mathfrak{R}_{\delta},
	\end{align*}
and
\begin{align*}
L_{-3}Q_0=&(\frac{2}{\chi}P_3-2P_1P_2)Q_0+2P_2Q_1+2P_1Q_2-2Q_{2,1}+(2\chi-\frac{2}{\gamma})Q_3 \\
	&+ \mu  \int_{\mathbb{C}}  \frac{ \partial_z\theta_{\delta}(y-z)}{(y_1-z)^{2}} \lg V_{-(r-1)\chi }(z)  V_{\gamma}(y) \prod_{l=1}^N V_{\alpha_l}(z_l)\rg_{\delta} \,d^2\mathbf{y} + \mathfrak{R}_{\delta}.
\end{align*}
When $\chi = \frac{\gamma}{2}$, we can verify that all the terms cancel and $\mathcal{D}_{3}Q_0=\mathfrak{R}_{\delta}$. Especially, the perturbations that we cannot control cancel among them. When $\chi = \frac{2}{\gamma}$, the term $$\int_{\mathbb{C}}  \frac{ \partial_z\theta_{\delta}(y-z)}{(y_1-z)^{2}} \lg V_{-(r-1)\chi }(z)  V_{\gamma}(y) \prod_{l=1}^N V_{\alpha_l}(z_l)\rg_{\delta} \,d^2\mathbf{y}$$ converges to $0$ by Proposition \ref{prop R to 0} and we can keep using the notation $\mathfrak{R}_{\delta}$ for it. Hence we also have $\mathcal{D}_{3}Q_0=\mathfrak{R}_{\delta}$. This finishes the proof for BPZ equations of order 3.  \qed

\subsection{Proof of the BPZ equations of order r with real Coulomb gas}\label{section proof BPZ}

According to the previous discussions, the proof of the BPZ equations of order $r$ has been reduced to an algebraic problem. Interestingly, real coulomb gas integrals with a degenerate insertion satisfy the same recursive relations, but without perturbation terms. 

\begin{definition}
For $x_1 < \dots < x_l \,(l \ge 1)$ and $t < t_1 < \dots < t_N\, (N\ge 2)$, we denote the integrand of real Coulomb gas integrals with degenerate insertions as
\begin{equation}
f^{(l)}_{-(r-1)\chi,\alp}(t,\mathbf{t};\x):= \prod_{0\le i< j \le N} (t_j-t_i)^{-\frac{\alpha_i\alpha_j}{2}} \prod_{0 \le i \le N , 1\le s \le l}(x_s-t_i)^{-\frac{\gamma\alpha_i}{2}}\prod_{1\le s'<s \le l }(x_s-x_{s'})^{-\frac{\gamma^2}{2}},
\end{equation}
where we denote $\alpha_0 = -(r-1)\chi, t_0=t$ and by convention $(-1)^{\alpha} = e^{i\pi \alpha}$.
\end{definition}

For real Coulomb gas integrals, we will always work with the condition
\begin{equation}\label{equation real condition}
(\gamma,\alpha_{N-1},\alpha_N) \in (i \mathbb{R}_+)^{3},\min \{ -\frac{\gamma\alpha_{N-1}}{2},-\frac{\gamma\alpha_N}{2}   \} \ge r.
\end{equation}
It is easy to see that under this condition, $$C^{(l)}_{-(r-1)\chi,\alp}(t,\mathbf{t})=\int_{t_{N-1} < x_1< \dots < x_l < t_N} f^{(l)}_{-(r-1)\chi,\alp}(t,\mathbf{t};\x) d\x$$ is well defined and at least $\mathcal{C}^r$.

Next we define the equivalent of $Q_{\q}$ for Coulomb gas integrals.

\begin{definition}\label{definition Coulomb Q}
Let $p\in \mathbb{N}$, and $\mathbf{q}=(q_1, \dots, q_p) \in (\mathbb{N}^*)^p$, we define
\begin{equation}
Q_{\mathbf{q}}^{(l)}(t,\mathbf{t}):= \int_{t_{N-1} < x_1 < \dots < x_l < t_N} R_{\q}^{(l)}(t;\x) f^{(l)}_{-(r-1)\chi,\alp}(t,\mathbf{t};\x) d^2 \x.
\end{equation} 
where
\begin{equation}
R_{\q}^{(l)}(t;\x) := (-\frac{\gamma}{2})^p \sum_{1\le s_1< \dots < s_p \le l}\prod_{j=1}^p \frac{1}{(x_{s_j}-t)^{q_j}}.
\end{equation}
The operator $T_k$ on $Q_{\mathbf{q}}^{(l)}$ with $k \in \mathbb{N}^*$ is defined as:
\begin{equation}
T_k Q_{\mathbf{q}}^{(l)}=Q^{(l)}_{q_1, \dots, q_k+1, \dots, q_p}.
\end{equation}
\end{definition}

\begin{rem}
By convention $Q_{\q}^{(l)}=0$ if $p>l$. Note that in the expression of $Q_{\q}^{(l)}$ there is no need for regularization $\theta_{\delta}$ around $x_i=t$, since we are considering $x_i > t_{N-1} >t$ so that $t$ is not a singularity.
\end{rem}

By abuse of notation, when dealing with real variables, $L_{-n}$ is defined as a real differential operator:
\begin{equation}\label{equation real L}
	L_{-1}=\partial_{t},\quad L_{-n}=\sum_{l=1}^N \left(-\frac{1}{(t_l-t)^{n-1}}\partial_{t_l}+\frac{\Delta_{\alpha_l}(n-1)}{(t_l-t)^n}\right) \quad n\ge 2.
\end{equation}
Then the same recursive relation holds for real Coulomb gas integrals:

\begin{prop}
$(P_{\n}(t,\mathbf{t})Q_{\q}^{(l)}(t,\mathbf{t}))_{\n,\q}$  satisfies Proposition \ref{prop L_-nQ} with no perturbation terms $\mathfrak{R}_{\delta}$. 
\end{prop}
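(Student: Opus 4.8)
The plan is to transcribe the proofs of Lemma~\ref{lemma L_nQ} and Proposition~\ref{prop L_-nQ} into the deterministic setting, replacing each probabilistic manipulation by an explicit computation on the integrand $f^{(l)}_{-(r-1)\chi,\alp}(t,\mathbf{t};\x)$. Following the remark after Proposition~\ref{prop L_-nQ}, the action of $L_{-n}$ on $P_{\n}$ is formally identical in both settings, so I would immediately reduce to $P_{\n}=1$ and focus on $L_{-n}Q^{(l)}_{\q}$. The single genuine substitution is that the Girsanov theorem and Gaussian integration by parts underlying the Liouville recursion are now replaced by the elementary logarithmic derivative
\[
\partial_{x_s}\ln f^{(l)}_{-(r-1)\chi,\alp}(t,\mathbf{t};\x)=-\sum_{i=0}^N\frac{\gamma\alpha_i}{2(x_s-t_i)}-\sum_{s'\neq s}\frac{\gamma^2}{2(x_s-x_{s'})},
\]
with $\alpha_0=-(r-1)\chi$ and $t_0=t$; multiplying by $f^{(l)}$ lets me rewrite any factor $\frac{1}{x_s-t_i}$ as a combination of a total $x_s$-derivative and the poles at $t$ that build $Q^{(l)}_{\q}$.

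First I would establish the case $\q=0$, the analog of Lemma~\ref{lemma L_nQ}. Splitting $L_{-n}Q^{(l)}_0$ into a prefactor part and an integrand part, the prefactor $\prod_{0\le i<j\le N}(t_j-t_i)^{-\alpha_i\alpha_j/2}$ is differentiated exactly as $A_\epsilon$ was, and the multisum identity \eqref{multisum} collapses it to $-\sum_{i=1}^{n-1}P_iP_{n-i}+((n-1)Q-(r-1)\chi)P_n$. For the integrand part, $L_{-n}$ brings down $\sum_s\frac{\gamma\alpha_l}{2(x_s-t_l)}$ against the weight $-\frac{1}{(t_l-t)^{n-1}}$; applying \eqref{multisum} again to move the poles from $t_l$ to $t$ produces $2\sum_{i=1}^{n-1}P_iQ^{(l)}_{n-i}$ plus the remainder $\sum_s\frac{1}{(x_s-t)^{n-1}}\sum_{i=1}^N\frac{\gamma\alpha_i}{2(x_s-t_i)}f^{(l)}$. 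Using the logarithmic derivative to write $\sum_{i=1}^N\frac{\gamma\alpha_i}{2(x_s-t_i)}f^{(l)}=-\partial_{x_s}f^{(l)}+\frac{(r-1)\gamma\chi}{2(x_s-t)}f^{(l)}-\sum_{s'\neq s}\frac{\gamma^2}{2(x_s-x_{s'})}f^{(l)}$ and integrating by parts in $x_s$, the identity $\partial_{x_s}\frac{1}{(x_s-t)^{n-1}}=-\frac{n-1}{(x_s-t)^n}$ together with the normalisation $(-\gamma/2)^p$ of $Q^{(l)}_{\q}$ delivers precisely the coefficient $(r-1)\chi-\frac{2(n-1)}{\gamma}$ in front of $Q^{(l)}_n$, while the $x_s$--$x_{s'}$ interaction yields $-\sum_{i=1}^{n-1}Q^{(l)}_{i,n-i}$. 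This reproduces Lemma~\ref{lemma L_nQ} verbatim, so that the role of the KPZ identity \eqref{equation KPZ} is here entirely absorbed into the explicit integration by parts.

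The structural reason no $\mathfrak{R}_\delta$ survives is that the integration variables $x_s$ lie in $(t_{N-1},t_N)$, which does not contain the degenerate point $t=t_0$. Hence $x_s=t$ is never a singularity, the cutoff $\theta_\delta$ is unnecessary, and the perturbation sources $\mu\int\frac{\partial_z\theta_\delta}{\cdots}$ of the Liouville proof simply have no counterpart. The only boundary contributions now sit at the endpoints $x_s=t_{N-1}$ and $x_s=t_N$; condition \eqref{equation real condition}, namely $-\gamma\alpha_{N-1}/2,\,-\gamma\alpha_N/2\ge r$, forces $f^{(l)}$ and its $x_s$-derivatives of order $<r$ to vanish there, so every boundary term produced by the integrations by parts up to the needed order $n+|\n|+|\q|\le r$ is identically zero. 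This is exactly what takes over the role played by the decay estimate of Proposition~\ref{prop R to 0} in the probabilistic case, now made exact rather than asymptotic.

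For general $\q$ I would copy the end of the proof of Proposition~\ref{prop L_-nQ}: introduce the operator $L^{\langle p\rangle}_{-n}$ treating the $p$ chosen variables of $Q^{(l)}_{\q}$ as auxiliary $\gamma$-insertions, apply the $\q=0$ identity to the enriched integrand, and recover the extra $\partial_{x_{s_j}}$-contributions by one further integration by parts, yielding the $q_jT_j^n$ terms with no boundary leftover. The main obstacle I anticipate is purely the bookkeeping: one must verify that the coefficients generated by the real integrations by parts --- including the operators $T_{p+1}^{n-i}$, $T_{p+2}^{n-i}T_{p+1}^i$ and the cross terms $T_{j'}^{n-i}T_j^i$ --- agree term-by-term with those in Proposition~\ref{prop L_-nQ}, and that each boundary term is genuinely annihilated by \eqref{equation real condition} rather than merely controlled. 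Once this correspondence is checked, the recursion holds with $\mathfrak{R}_\delta=0$, as claimed.
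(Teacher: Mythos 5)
Your proposal is correct and follows the same route as the paper, which proves this proposition simply by remarking that one repeats the proof of Proposition \ref{prop L_-nQ} verbatim, with no $\eta_{\epsilon}$ or $\theta_{\delta}$ regularization needed since $t < t_{N-1}$ keeps the degenerate point away from the integration simplex. Your additional observations --- that the logarithmic derivative of $f^{(l)}$ replaces the Girsanov/KPZ step and that condition \eqref{equation real condition} kills the boundary terms of the integrations by parts --- are exactly the details the paper leaves implicit.
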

\begin{rem}
There is no $\mathfrak{R}_{\delta}$ since there is no regularization $\theta_{\delta}$. The proof follows exactly the same steps as the proof of Proposition \ref{prop L_-nQ} and there is no need for the regularization $\eta_{\epsilon}$ to help calculate the derivatives.
\end{rem}

The proposition tells in particular that 
\begin{equation}
\mathcal{D}_r Q^{(l)}_0 = \sum_{\mathbf{n},\q:|\mathbf{n}|+|\q|=r}\lambda_{\mathbf{n},\q}(\gamma)P_{\mathbf{n}}Q^{(l)}_{\q},
\end{equation}
with the same coefficients $\lambda_{\n,\q}(\gamma)$ as introduced in \eqref{equation lambda}.

To prove BPZ equations, we only need to show that all the $\lambda_{\n,\q}(\gamma)$ are equal to $0$. We know that they are rational fractions of $\gamma$, so it suffices to prove it for an infinity number of values for $\gamma \in i\mathbb{R}_+$. We show in the following that real Coulomb gas integrals actually satisfy BPZ equations, which allows to solve the combinatorial problem, see Proposition \ref{prop lambda=0}. 

\begin{lem}\label{lem BSA f}
For $r \in \mathbb{N}^*$, the following differential equation holds:
\begin{equation}
\mathcal{D}_{r} f^{(0)}_{-(r-1)\chi,\alp}(t,\mathbf{t})=0.
\end{equation}
\end{lem}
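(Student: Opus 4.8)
The function $f^{(0)}_{-(r-1)\chi,\alp}(t,\mathbf{t})=\prod_{0\le i<j\le N}(t_j-t_i)^{-\alpha_i\alpha_j/2}$ (with $\alpha_0=-(r-1)\chi$, $t_0=t$) is exactly the chiral Coulomb gas correlation function $\langle V_{\alpha_0}(t_0)\prod_{l=1}^N V_{\alpha_l}(t_l)\rangle$ of primary fields of conformal weights $\Delta_{\alpha_i}$ for a free boson with background charge $Q$, central charge $c=1+6Q^2$; the OPE exponent check $\Delta_{\alpha_i+\alpha_j}-\Delta_{\alpha_i}-\Delta_{\alpha_j}=-\alpha_i\alpha_j/2$ confirms the power laws. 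My plan is to convert $\mathcal{D}_r f^{(0)}=0$ into the vanishing of a null-vector correlator, using only the conformal Ward identity (Wick's theorem) and the Virasoro commutation relation \eqref{commutation}. Crucially, this route uses \emph{no} input from Proposition \ref{prop L_-nQ}, which is essential for non-circularity, since this lemma is precisely what feeds that proposition's combinatorics (recall that for $l=0$ the proposition only gives $\mathcal{D}_r f^{(0)}=\sum_{|\n|=r}\lambda_{\n,0}P_{\n}f^{(0)}$, so an independent argument is required to conclude the coefficients vanish).

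First I would record $\sum_{i=0}^N\partial_{t_i}\log f^{(0)}=0$ (translation invariance) and introduce the stress-tensor insertion
\begin{equation}
\psi(w):=\sum_{i=0}^N\left(\frac{\Delta_{\alpha_i}}{(w-t_i)^2}+\frac{\partial_{t_i}\log f^{(0)}}{w-t_i}\right),
\end{equation}
a rational function whose residue sum vanishes, so that $\psi(w)=O(w^{-2})$ at infinity. A direct residue computation then shows that the real operator $L_{-n}$ of \eqref{equation real L} acts on $f^{(0)}$ as the Virasoro mode read off at the degenerate point, $L_{-n}f^{(0)}=\oint_{t_0}\frac{dw}{2\pi i}(w-t_0)^{1-n}\psi(w)\,f^{(0)}=\langle (L_{-n}V_{\alpha_0})(t_0)\prod_{l\ge1}V_{\alpha_l}(t_l)\rangle$ (the $n=1$ case reproducing $\partial_t$). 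Iterating this identity, while tracking the double contractions of $T(w)T(w')$ that generate the central-charge terms, upgrades it to $\mathcal{D}_r f^{(0)}=\langle (\mathcal{D}_r V_{\alpha_0})(t_0)\prod_{l\ge1}V_{\alpha_l}(t_l)\rangle$, where $\mathcal{D}_r V_{\alpha_0}$ is the level-$r$ descendant assembled from the Beno\^it--Saint-Aubin coefficients of \eqref{equation Dr}.

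It then remains to show that this descendant is a null field. I would verify, by induction on $r$ using \eqref{commutation}, that the combination with the coefficients of \eqref{equation Dr} satisfies $L_m\,\mathcal{D}_r|\Delta_{\alpha_0}\rangle=0$ for every $m\ge1$, i.e. that $\mathcal{D}_r|\Delta_{\alpha_0}\rangle$ is a singular vector of the Verma module of highest weight $\Delta_{\alpha_0}$. The hypothesis on $\alpha_0$ enters only through the check that $\Delta_{\alpha_0}$ with $\alpha_0=-(r-1)\chi$, $\chi\in\{\gamma/2,2/\gamma\}$, sits exactly at the level-$r$ Kac degeneracy for $c=1+6Q^2$; this is an explicit computation from $\Delta_\alpha=\frac{\alpha}{2}(Q-\frac{\alpha}{2})$ together with $Q=\chi+\chi^{-1}$. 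Since a singular vector has vanishing correlators against any family of primaries, we conclude $\mathcal{D}_r f^{(0)}=0$.

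The main obstacle is the algebraic core of the previous paragraph: establishing that the explicit Beno\^it--Saint-Aubin coefficients produce a genuine singular vector at level $r$ for \emph{all} $r$, rather than merely verifying it in low orders as in Section \ref{section illustration}. I expect the cleanest route to be the induction on $r$ via \eqref{commutation}, exploiting that $T(w)V_{\alpha_0}$ has only a double and a simple pole (primary OPE), which makes the free-field bookkeeping of the central term transparent. The smoothness needed to legitimize the contour manipulations on the chamber $t_0<t_1<\dots<t_N$ is guaranteed by \eqref{equation real condition}.
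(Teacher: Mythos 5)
Your route is the classical representation-theoretic one (Ward identities plus the Beno\^it--Saint-Aubin null vector), which is genuinely different from the paper's proof: the paper never invokes the Virasoro module structure at all, but instead quotients the coefficient algebra by the relations $\alpha_i^2=0$ and runs an explicit combinatorial cancellation between terms of order $k$ and $k+1$ in \eqref{equation Dr} (this is precisely the point of the remark following the lemma, which contrasts the ``elementary'' proof with the fusion technique of Kyt\"ol\"a--Peltola). Your single-$L_{-n}$ contour identity is fine --- I checked that $\psi(w)=O(w^{-2})$ does suffice to kill the contour at infinity --- but the argument as a whole has a gap that I do not think can be repaired without importing the very content of the lemma.

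The fatal step is the last one: \emph{it is not true that a singular vector has vanishing correlators against an arbitrary family of primaries}. The vanishing of $\langle \chi(t_0)\prod_l V_{\alpha_l}(t_l)\rangle$ for a null descendant $\chi$ is a statement about the representation in which the correlator is computed, not about the Verma module alone. Concretely: the reflected charge $\tilde\alpha_0=2Q+(r-1)\chi$ satisfies $\Delta_{\tilde\alpha_0}=\Delta_{-(r-1)\chi}$, so it sits at exactly the same Kac degeneracy, has the same Verma module and the same BSA singular vector; yet the corresponding free-field correlator $\prod_{0\le i<j}(t_j-t_i)^{-\tilde\alpha_i\alpha_j/2}$ does \emph{not} satisfy $\mathcal{D}_r f=0$ (already for $r=2$, $N=1$ one computes $\mathcal{D}_2 (t_1-t)^{-\alpha_0\alpha_1/2}=\bigl[\chi^2(\tfrac{\alpha_0\alpha_1}{2}+\Delta_{\alpha_1})+\tfrac{\alpha_0\alpha_1}{2}(\tfrac{\alpha_0\alpha_1}{2}+1)\bigr](t_1-t)^{-\alpha_0\alpha_1/2-2}$, which vanishes for $\alpha_0=-\chi$ but contains an uncancelled $Q(Q+\chi)\alpha_1^2$ term for $\alpha_0=2Q+\chi$). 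So the conclusion genuinely depends on the charge $-(r-1)\chi$ and not merely on its conformal weight; what you actually need is that the BSA singular vector of $M(c,\Delta_{\alpha_0})$ maps to zero under the natural map to the Fock module $F_{-(r-1)\chi}$, a nontrivial fact about the Jantzen/socle structure of Fock modules that is essentially equivalent to the lemma itself. Two further (lesser) soft spots: the singular-vector property of the BSA coefficients for all $r$ is the main theorem of \cite{Benoit1988} and cannot be extracted from \eqref{commutation} alone, since that relation only encodes the centerless negative Witt subalgebra while the annihilation conditions $L_m\mathcal{D}_r|\Delta\rangle=0$, $m\ge 1$, require the mixed commutators $[L_m,L_{-n}]$ including the central term; and the upgrade from single $L_{-n}$ insertions to composite descendants requires the full multi-$T$ Ward identities for the free field, which is exactly the ``non trivial manipulation of the Virasoro algebra'' the paper's combinatorial proof is designed to avoid.
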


\begin{rem}
Here $D_r$ is composed of real differential operators $L_{-n}$, see \eqref{equation real L}. The proof of this lemma can be found in section \ref{section integrand}. Note that this result has been proved by Kytola-Peltola  \cite{Kytola2014conformally} with a fusion technique in \cite{Dubedat2015sle}. The fusion technique requires some non trivial manipulations of the Virasoro algebra. We would like to mention that our proof is purely combinatorial and elementary.
\end{rem}

\begin{prop}\label{prop BSA Coulomb}
Take $(\gamma,\alp)$ such that \eqref{equation real condition} is satisfied. Then the real Coulomb gas integrals verify BPZ equations of order $r$. More precisely, 
\begin{equation}
\mathcal{D}_r C^{(l)}_{-(r-1)\chi,\alp}(t,\mathbf{t}) = 0.
\end{equation}
\end{prop}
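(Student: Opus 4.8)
The plan is to prove $\mathcal{D}_r C^{(l)}_{-(r-1)\chi,\alp}(t,\mathbf{t}) = 0$ by combining the two ingredients already assembled in the excerpt: the fact that the \emph{integrand} satisfies the BPZ equation (Lemma \ref{lem BSA f}), and the fact that the operator $\mathcal{D}_r$ commutes with integration over the simplex $\{t_{N-1}<x_1<\dots<x_l<t_N\}$. The subtlety is that Lemma \ref{lem BSA f} is stated for $f^{(0)}$, i.e. with no screening variables $x_s$, whereas here we integrate $f^{(l)}$ against the $x_s$. So the first task is to promote the integrand-level BPZ equation to the presence of screening charges: I would treat each $x_s$ as an additional insertion of charge $\gamma$ at a point $t$-dependent location and write down the analogue of Lemma \ref{lem BSA f} for $f^{(l)}$ viewed as a function of $t,\mathbf{t},\x$, where the Benoît--Saint-Aubin operator $\mathcal{D}_r^{\langle l\rangle}$ now also carries $L_{-n}$ terms acting on the $x_s$ variables.

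First I would make precise the commutation of $\mathcal{D}_r$ with the integral. Each $L_{-n}$ (for the real variables, as in \eqref{equation real L}) involves only $\partial_t$ and $\partial_{t_l}$ together with multiplication by functions of $t,\mathbf{t}$; none of these differentiates in the integration variables $\x$, and under condition \eqref{equation real condition} the integrand is at least $\mathcal{C}^r$ with integrable $t$- and $t_l$-derivatives up to order $r$, so differentiation under the integral sign is justified. Hence
\begin{equation}
\mathcal{D}_r C^{(l)}_{-(r-1)\chi,\alp}(t,\mathbf{t}) = \int_{t_{N-1}<x_1<\dots<x_l<t_N} \mathcal{D}_r f^{(l)}_{-(r-1)\chi,\alp}(t,\mathbf{t};\x)\, d\x.
\end{equation}
The point is then that $\mathcal{D}_r f^{(l)}$ is not zero pointwise — because $\mathcal{D}_r$ ignores the $x_s$ — but it equals $\mathcal{D}_r^{\langle l\rangle} f^{(l)}$ plus terms that are exact derivatives in the $x_s$ variables. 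Concretely, the difference between $\mathcal{D}_r$ and $\mathcal{D}_r^{\langle l\rangle}$ consists of the $L_{-n}$ contributions built from $-\partial_{x_s}/(x_s-t)^{n-1}$ and $(n-1)\Delta_\gamma/(x_s-t)^n$; these assemble, after the same manipulation used to pass from $L_{-n}$ to $L_{-n}^{\langle p\rangle}$ in the proof of Proposition \ref{prop L_-nQ}, into a total derivative $\sum_s \partial_{x_s}(\cdots)$ acting on the integrand.

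The second task is therefore to show that the boundary contributions from integrating these $\partial_{x_s}$-terms over the simplex vanish. This is where condition \eqref{equation real condition}, namely $\min\{-\gamma\alpha_{N-1}/2,-\gamma\alpha_N/2\}\ge r$, does the work: at the endpoints $x_s\to t_{N-1}$ and $x_s\to t_N$ the factors $(x_s-t_{N-1})^{-\gamma\alpha_{N-1}/2}$ and $(t_N-x_s)^{-\gamma\alpha_N/2}$ carry exponents large enough that the integrand and the relevant $x_s$-derivative factors vanish to sufficient order, killing the boundary terms; at the internal faces $x_s=x_{s'}$ the Coulomb factor $(x_s-x_{s'})^{-\gamma^2/2}$ with $\gamma\in i\mathbb{R}_+$ is bounded, and by antisymmetrization the collision contributions cancel in pairs. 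Once all boundary terms are shown to vanish, the integral of $\mathcal{D}_r f^{(l)}$ reduces to the integral of $\mathcal{D}_r^{\langle l\rangle} f^{(l)}$, and the latter integrand is zero by the screened version of Lemma \ref{lem BSA f}.

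The main obstacle I anticipate is precisely the boundary-term analysis: making rigorous that every total-derivative term produced by the missing $x_s$-operators integrates to zero, uniformly controlling the behaviour both at the endpoints of the outer simplex and at the coincidences $x_s=x_{s'}$. This requires tracking the exact powers appearing in $f^{(l)}$ near each face and verifying that \eqref{equation real condition} gives the needed decay; the choice to work with purely imaginary $\gamma,\alpha_{N-1},\alpha_N$ of large modulus is exactly what guarantees convergence and the vanishing of these boundary pieces, which is why the proposition is stated under that hypothesis. Everything else is the same algebra already carried out for Liouville correlations, transported verbatim to the Coulomb-gas side where no $\mathfrak{R}_\delta$ perturbations arise.
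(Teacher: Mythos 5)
Your proposal is correct and follows essentially the same route as the paper: differentiate under the integral sign, observe that the missing $x_s$-terms of the Benoît--Saint-Aubin operator assemble into exact derivatives $-\partial_{x_s}\bigl((x_s-t)^{-(n-1)}f^{(l)}\bigr)$ whose boundary contributions vanish under \eqref{equation real condition} (the paper compresses this into the phrase ``Stokes Theorem''), and then apply Lemma \ref{lem BSA f} to the integrand viewed with the $x_s$ as extra insertions of charge $\gamma$ (using $\Delta_\gamma=1$). The only cosmetic difference is that you invoke antisymmetrization at the collision faces $x_s=x_{s'}$, whereas the factor $(x_s-x_{s'})^{-\gamma^2/2}$ with $\gamma\in i\mathbb{R}_+$ already vanishes there, which kills those boundary terms directly.
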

\begin{proof}
By applying derivation under the integral sign and Stokes Theorem,
\begin{equation*}
\begin{split}
\mathcal{D}_r C^{(l)}_{-(r-1)\chi,\alp}(t,\mathbf{t}) &=\int_{t_{N-1} < x_1< \dots < x_l < t_N} \mathcal{D}_r f^{(l)}_{-(r-1)\chi,\alp}(t,\mathbf{t};\x) d \x\\
&= \int_{t_{N-1} <x_1< \dots < x_l < t_N} \mathcal{D}_r^{\lg l \rg}f^{(l)}_{-(r-1)\chi,\alp}(t,\mathbf{t};\x)d \x,
\end{split}
\end{equation*}
with $\mathcal{D}_r^{\lg l \rg}$ the operator $\mathcal{D}_r$ where we replace in its expression the operators $ L_{-n}$ by $$L_{-n}^{\lg l \rg}:=L_{-n}+\sum_{s=1}^l  \left(-\frac{\partial_{x_s}}{(x_s-t)^{n}}+\frac{n-1}{(x_s-t)^n} \right) \quad n \ge 2.$$ By Lemma \ref{lem BSA f}, $\mathcal{D}_r^{\lg l \rg}f^{(l)}_{-(r-1)\chi,\alp}(t,\mathbf{t};\x)=0$, this shows that $\mathcal{D}_r C^{(l)}_{-(r-1)\chi,\alp}(t,\mathbf{t})=0$.

\end{proof}

\begin{prop}\label{prop lambda=0}
For $\n,\q$ such that $|\n|+|\q| =r$, the rational function $\lambda_{\n,\q}(\gamma)$ equals $0$.
\end{prop}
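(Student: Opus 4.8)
The plan is to play the two expansions of $\mathcal{D}_r Q_0^{(l)}$ against each other. On one hand, $Q_0^{(l)}$ is exactly the real Coulomb gas integral $C^{(l)}_{-(r-1)\chi,\alp}(t,\mathbf{t})$ (the case $p=0$, where $R^{(l)}_0\equiv 1$), so Lemma~\ref{lem BSA f} for $l=0$ and Proposition~\ref{prop BSA Coulomb} for $l\ge 1$ give $\mathcal{D}_r Q_0^{(l)}=0$ for every admissible $l\ge 0$ and every configuration satisfying \eqref{equation real condition}. On the other hand, the Coulomb analogue of Proposition~\ref{prop L_-nQ} expands $\mathcal{D}_r Q_0^{(l)}$ with the \emph{same} coefficients $\lambda_{\n,\q}(\gamma)$ as in \eqref{equation lambda}. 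Equating the two yields the identity
\[
\sum_{\substack{\n,\q\\ |\n|+|\q|=r}}\lambda_{\n,\q}(\gamma)\,P_{\n}(t,\mathbf{t})\,Q^{(l)}_{\q}(t,\mathbf{t})=0,
\]
valid identically in $(t,\mathbf{t})$, for each admissible $l$ and for each choice of the free charges. Since the $\lambda_{\n,\q}$ are rational in $\gamma$ and independent of all the remaining data, it suffices to produce, for infinitely many $\gamma\in i\mathbb{R}_+$, a configuration on which the family $\{P_{\n}Q^{(l)}_{\q}\}$ is linearly independent; then every $\lambda_{\n,\q}(\gamma)$ vanishes for those $\gamma$, hence identically.

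I would establish this linear independence by induction on the length $p=\ell(\q)$, using the vanishing $Q^{(l)}_{\q}=0$ whenever $\ell(\q)>l$. Taking $l=0$ annihilates every term with $p\ge 1$ and leaves $\big(\sum_{|\n|=r}\lambda_{\n,\q}P_{\n}\big)Q^{(0)}_0=0$ with $\q$ the empty tuple; since $Q^{(0)}_0=f^{(0)}\neq 0$ this forces $\sum_{|\n|=r}\lambda_{\n,\q}P_{\n}\equiv 0$. Choosing $N\ge r$ with generic charges and positions, the power sums $P_1,\dots,P_r$ are algebraically independent, so the monomials $P_{\n}$ attached to distinct multisets $\n$ are linearly independent; this gives $\lambda_{\n,\q}=0$ at $p=0$. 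For the inductive step I take $l=p$ and discard, by the inductive hypothesis, all terms with $\ell(\q)<p$, reducing the identity to $\sum_{\ell(\q)=p}\big(\sum_{\n}\lambda_{\n,\q}P_{\n}\big)Q^{(p)}_{\q}=0$.

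It then remains to separate the surviving Coulomb quantities $Q^{(p)}_{\q}$ of common length $p$ from one another, i.e.\ to show they are linearly independent over the field of rational functions in the positions; once this is done, the algebraic independence of the $P_{\n}$ again yields $\lambda_{\n,\q}=0$ and closes the induction. To separate them I would degenerate the integration interval, letting $t_{N-1}\to t_N$ and rescaling $x_s=t_{N-1}+(t_N-t_{N-1})\xi_s$ with $0<\xi_1<\dots<\xi_p<1$. This collapses $Q^{(p)}_{\q}$ to an ordered-simplex (Selberg-type) integral whose asymptotic expansion in $t_N-t_{N-1}$ has, at successive orders, coefficients built from the order-statistic moments $\int \xi_j\,(\text{density})$; the first correction is proportional to $\sum_j q_j m_j$, and for generic exponents (tuned through $\alpha_{N-1},\alpha_N,\gamma$) the moments $m_j$ are $\mathbb{Q}$-linearly independent, so the functional $\q\mapsto\sum_j q_j m_j$ separates distinct compositions.

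The main obstacle is precisely this last separation of the $Q^{(p)}_{\q}$ with the same length and weight: unlike the $P_{\n}$ they are not manifestly independent, and turning the heuristic degeneration into a proof requires careful bookkeeping of the asymptotic orders (the coefficients $\sum_{\n}\lambda_{\n,\q}P_{\n}$ themselves vary with the degenerating positions) together with a genericity argument guaranteeing that the order-statistic moments really do distinguish every composition. I expect the power-sum input for the $P_{\n}$ to be routine, whereas controlling the Selberg asymptotics and the non-degeneracy of the resulting pairing is where the real work lies.
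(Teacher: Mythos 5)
Your overall skeleton matches the paper's: expand $\mathcal{D}_r$ applied to real Coulomb gas integrals in two ways (Proposition \ref{prop BSA Coulomb} on one side, the Coulomb analogue of Proposition \ref{prop L_-nQ} on the other) and deduce vanishing of the $\lambda_{\n,\q}(\gamma)$ from a linear-independence statement checked for infinitely many $\gamma\in i\mathbb{R}_+$. Your base case ($l=0$, algebraic independence of the power sums) is sound, and the induction on $\ell(\q)$ via the convention $Q^{(l)}_{\q}=0$ for $\ell(\q)>l$ is a legitimate organizing device. But the step you yourself flag as the main obstacle --- separating the $Q^{(p)}_{\q}$ of common length --- is precisely the heart of the proof, and your sketch does not close it. As $t_{N-1}\to t_N$ all the $x_j$ collapse to $t_N$, so $Q^{(p)}_{\q}$ has leading term $(-\gamma/2)^p(t_N-t)^{-|\q|}$ times a Selberg integral common to all $\q$; compositions of equal weight are only distinguished at subleading orders, and the single functional $\q\mapsto\sum_j q_j m_j$ you extract at first order yields one scalar relation per order of the expansion. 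To conclude you would need nonsingularity of an infinite moment matrix, with coefficients $\sum_{\n}\lambda_{\n,\q}P_{\n}$ that themselves vary during the degeneration. As written this is a heuristic, not a proof, so the argument is incomplete at its decisive step.

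The paper avoids this difficulty by working with the single value $l=r$ and separating the two factors in the opposite order. First it merges $t_1,\dots,t_r\to t$: the $P_{\n}$ blow up at distinct rates indexed by sub-tuples $\n\subseteq\n'$, producing a triangular linear system with nonzero diagonal entries $\prod_{l}\alpha_l/2^r$, whence each coefficient function $g_{\n}$ vanishes. Then, to remove the integral sign, it sets $\alpha_{r+1}=\dots=\alpha_{N-2}=-2/\gamma$ and sums over finitely many positions $t_i\in E_n$ to synthesize the elementary symmetric polynomials $e_n(x_1,\dots,x_r)$; by the fundamental theorem of symmetric polynomials this places $\bigl|\sum_{\q}\lambda_{\n,\q}(\gamma)R^{(r)}_{\q}\bigr|^2$ under a positive integrand, forcing $\sum_{\q}\lambda_{\n,\q}(\gamma)R^{(r)}_{\q}(t,\x)\equiv 0$ pointwise, after which the asymptotics $x_s\to t$ give $\lambda_{\n,\q}(\gamma)=0$. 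If you wish to keep your induction on $\ell(\q)$, you still need a substitute for this positivity and symmetrization trick; the Selberg-asymptotics route would require substantially more work than the paper's argument.
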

\begin{proof}
We will work under the condition \eqref{equation real condition} with $N$ sufficiently large. From the above proposition together with the discussion from the previous subsection, we deduce that $$\sum_{\n,\q:|\n|+|\q|=r} \lambda_{\n,\q}(\gamma)P_{\mathbf{n}}Q^{(r)}_{\q}=\mathcal{D}_r C^{(r)}_{-(r-1)\chi,\alp}(t,\mathbf{t}) = 0.$$

For simplicity, let us denote $t_0=t$ and $\alpha_0=-(r-1)\chi$. We can divide the left hand side of the equation by the common term $\prod_{0\le i< j \le N}(t_j-t_i)^{-\frac{\alpha_i\alpha_j}{2}}$, then we have
\begin{small}
\begin{align*}
\sum_{\n:|\n|\le r}P_{\mathbf{n}}\sum_{\q:|\q|=r-|\n|}\lambda_{\mathbf{n},\q}(\gamma)\int_{t_{N-1} < x_1< \dots < x_r < t_N} R_{\q}^{(r)}(t;\x)\prod_{0 \le i \le N , 1\le s \le r}(x_s-t_i)^{-\frac{\gamma\alpha_i}{2}}\prod_{1\le s<s' \le r }(x_s-x_{s'})^{-\frac{\gamma^2}{2}} d\x  = 0.
\end{align*}
\end{small}
Denote the left hand side by $(\star)$. We claim that the function $g_{\n}((t_i)_{r+1 \le  i \le N})$ defined by
\begin{align}
g_{\n}&((t_i)_{r+1 \le  i \le N}) :=\sum_{\q:|\q|=r-|\n|}\lambda_{\mathbf{n},\q}(\gamma)\int_{t_{N-1} < x_1 < \dots < x_r < t_N} R_{\q}^{(r)}(t,\x)  \nonumber\\
&\prod_{r+1 \le i \le N , 1\le s \le r}(x_s-t_i)^{-\frac{\gamma\alpha_i}{2}} \prod_{1\le s \le r}(x_s-t)^{-\frac{\gamma \sum_{i=0}^{r}\alpha_i}{2}}\prod_{1\le s<s' \le r }(x_s-x_{s'})^{-\frac{\gamma^2}{2}} d\x
\end{align}
equals $0$ for all $|\n| \le r$. To see this, suppose that for all $1 \le l \le r$, $\alpha_l \neq 0$. We study the asymptotic when $t_{1}, t_{2}, \dots , t_r$ tend to $t$ simultaneously:
\begin{align*}
(\star) &= \left(\sum_{\n:  |\n| \le r} \prod_{j=1}^m  \left(\sum_{l={1}}^r \frac{\alpha_l}{2 (t_l-t)^{n_j}} \right) g_{\n}((t_i)_{r+1 \le  i \le N})\right) (1+o(1))\\
&= \left(\sum_{\n: |\n|\le r}  \left(\sum_{\n': \n \subseteq \n'} c_{\n'}((\alpha_l)_{1 \le l \le r}) g_{\n'}((t_i)_{r+1 \le  i \le N})\right) \sum_{1 \le i_1 < \dots < i_m \le r}\frac{m!}{\prod_{j=1}^m (t_{i_j}-t)^{n_j}} \right) (1+o(1))   ,
\end{align*}
where $\n \subseteq \n'$ means that $\n$ is a sub-tuple of $\n'$ and in particular, when $\n' =\n$, $$c_{\n'}((\alpha_l)_{1 \le l \le r}) = \frac{\prod_{l=1}^r \alpha_l}{2^r}.$$
Since $(\star)$ also equals $0$, it is not hard to show (it should be done in a certain order) that  $$\forall |\n|\le r,\, \sum_{\n': \n \subseteq \n'} c_{\n'}((\alpha_l)_{1 \le l \le r}) g_{\n'}((t_i)_{r+1 \le i \le N})=0.$$
The equations above form a linear system with a triangular coefficient matrix with non null values on the diagonal, hence we will be able to conclude that $g_{\n}((t_i)_{r+1 \le i \le N}) = 0$ for all $|\n|\le r$. 

Now we have
\begin{align}
\sum_{\q:|\q|=r-|\n|}\lambda_{\mathbf{n},\q}(\gamma)\int_{t_{N-1} < x_1< \dots < x_r < t_N}& R_{\q}^{(r)}(t,\x)\prod_{r+1 \le i \le N , 1\le s \le r}(x_s-t_i)^{-\frac{\gamma\alpha_i}{2}}\nonumber\\
&\prod_{1\le s \le r}(x_s-t)^{-\frac{\gamma \sum_{i=0}^{r}\alpha_i}{2}} \prod_{1\le s<s' \le r }(x_s-x_{s'})^{-\frac{\gamma^2}{2}} d\x =0.
\end{align}
Let us take $\alpha_{r+1}=\dots = \alpha_{N-2}=-\frac{2}{\gamma}$, note that we can sum over a finite set $E_n$ of values of $t_i$ for each $r+1 \le i \le N-2$ to obtain
\begin{equation}
\sum_{t_i \in E_n}  \prod_{1\le s \le r}(x_s-t_i) =\sum_{1 \le i_1< \dots < i_n \le r} x_{i_1}\dots x_{i_n}= e_n(x_1,\dots, x_r). 
\end{equation}
Hence with a sum over $(t_{r+1},\dots, t_{N-2}) \in E_{n_{r+1}} \times \dots \times E_{n_{N-2}}$, we can obtain a product of fundamental symmetric polynomials:  $\prod_{j=r+1}^{N-2} e_{n_j}(x_1,\dots,x_r)$. Since $\sum_{\q:|\q|=r-|\n|}\lambda_{\mathbf{n},\q}(\gamma) R_{\q}^{(r)}(t,\x)\prod_{s=1}^r (x_s-t)^{r}$ is a symmetric polynomial in $(x_s)_{1\le s \le r}$, by the fundamental theorem of symmetric polynomials, when $N$ is sufficiently large we can sum up different values of $t_i \, ( r+1\le i \le N-2)$ to get
\begin{align}
\int_{t_{N-1} < x_1 < \dots < x_r < t_N} \left|\sum_{\q:|\q|=r-|\n|}\lambda_{\mathbf{n},\q}(\gamma)R_{\q}^{(r)}(t,\x)\right|^2\prod_{s=1}^r (x_s-t)^{r-\frac{\gamma \sum_{i=0}^{r}\alpha_i}{2}} \nonumber \\
\prod_{N-1 \le i \le N , 1\le s \le r}(x_s-t_i)^{-\frac{\gamma\alpha_i}{2}} \prod_{1\le s<s' \le r }(x_s-x_{s'})^{-\frac{\gamma^2}{2}} d\x =0.
\end{align}
This implies that for all $|\n| \le r$,
\begin{equation}
\sum_{\q:|\q|=r-|\n|}\lambda_{\mathbf{n},\q}(\gamma)R_{\q}^{(r)}(t,\x)=0.
\end{equation}
We can easily extend the above equations to all $x_s$ different from $t$. Then by a study of asymptotic when $x_s$ tend simultaneously to $t$, we conclude that $\lambda_{\n,\q}(\gamma)=0$.
\end{proof}

This finishes the proof of Theorem \ref{theo BSA}.

\subsection{BPZ equations for boundary Liouville theory}\label{section boundary BPZ}

Let us illustrate the idea with the degenerate insertion $B^{+}_{-(r-1)\chi}(t)$. The correlation function \eqref{equation boundary degenerate}  is holomorphic in $t$ in the upper half plane except the points $z_i$. The smoothness in $(\mathbf{t}, \z)$ in this case has not been proved, but the method in  \cite{Oikarinen2018Smoothness} applies to this case and we will assume this property. Note that the derivative in $L_{-1}^{\mathbb{H}}$ should be understood as a complex derivative with respect to $t$. 

If we think heuristically $V_{\alpha_i}(z_i) = B_{\alpha_i}(z_i) B_{\alpha_i}(\overline{z}_i)$, we can observe that the things behave very similar to the sphere case: firstly the form of $L_{-n}^{\mathbb{H}}$ is nothing but $L_{-n}$ written for the points $z_i,\overline{z}_i,t_j$, secondly we can observe the same form of derivative rule that we illustrate below. We use the regularization $\eta_{\epsilon}$ for the Gaussian free field $X$: for $x\in \mathbb{R}$
\begin{align*}
X_{\epsilon}(x) = 2\int_{y \in \mathbb{H}} X(y) \eta_{\epsilon}(x-y) d^2 y,
\end{align*}
and for $z_i, 1\le i \le N$, we define for $\epsilon$ sufficiently small
\begin{align*}
X_{\epsilon}(z_i) = X*\eta_{\epsilon}.
\end{align*}
By abuse of notation, for $1 \le i \neq j \le N$ and $x,x'\in \mathbb{R}$, we denote
\begin{align*}
\frac{1}{(z_i - z_j)_{\epsilon}} &= \intc \intc \frac{1}{z_i-z_j - x_1 + x_2} \eta_{\epsilon}(x_1) \eta_{\epsilon}(x_2) d^2 x_1 d^2 x_2,\\
\frac{1}{(x - x')_{\epsilon}} &= 4 \int_{\mathbb{H}} \int_{\mathbb{H}} \frac{1}{x-x' - x_1 + x_2} \eta_{\epsilon}(x_1) \eta_{\epsilon}(x_2) d^2 x_1 d^2 x_2,\\
\frac{1}{(x - z_i)_{\epsilon}} &= 2\int_{\mathbb{H}} \intc \frac{1}{x-z_i -x_1 + x_2} \eta_{\epsilon}(x_1) \eta_{\epsilon}(x_2) d^2 x_1 d^2 x_2.
\end{align*}
For the derivative rules, we have
\begin{align*}
&\partial_{z_k} \lg \prod_{i=1}^N V_{\alpha_i,\epsilon}(z_i)\prod_{j=1}^M B_{\beta_j,\epsilon}^{\mu_{j-1},\mu_j}(t_j) \rg_{\mathbb{H}} \\
=& \left(\sum_{i:i\neq k} \frac{\alpha_i\alpha_k}{2(z_i-z_k)_{\epsilon}} + \sum_{i}\frac{\alpha_i\alpha_k}{2(\overline{z}_i-z_k)_{\epsilon}}+\sum_j \frac{\beta_j\alpha_k}{2(t_j-z_k)_{\epsilon}} \right)\lg \prod_{i=1}^N V_{\alpha_i,\epsilon}(z_i)\prod_{j=1}^M B_{\beta_j,\epsilon}^{\mu_{j-1},\mu_j}(t_j) \rg_{\mathbb{H}}\\
&-\mu \int_{\mathbb{R}} \frac{\gamma\alpha_k}{2(y-z_k)_{\epsilon}} \lg B_{\gamma,\epsilon}(y)\prod_{i=1}^N V_{\alpha_i,\epsilon}(z_i)\prod_{j=1}^M B_{\beta_j,\epsilon}^{\mu_{j-1},\mu_j}(t_j) \rg_{\mathbb{H}} dy,
\end{align*}
and 
\begin{align*}
&\partial_{\overline{z}_k} \lg \prod_{i=1}^N V_{\alpha_i,\epsilon}(z_i)\prod_{j=1}^M B_{\beta_j,\epsilon}^{\mu_{j-1},\mu_j}(t_j) \rg_{\mathbb{H}} \\
=& \left(\sum_{i} \frac{\alpha_i\alpha_k}{2(z_i-\overline{z}_k)_{\epsilon}} + \sum_{i:i\neq k}\frac{\alpha_i\alpha_k}{2(\overline{z}_i-\overline{z}_k)_{\epsilon}}+\sum_j \frac{\beta_j\alpha_k}{2(t_j-\overline{z}_k)_{\epsilon}} \right) \lg \prod_{i=1}^N V_{\alpha_i,\epsilon}(z_i)\prod_{j=1}^M B_{\beta_j,\epsilon}^{\mu_{j-1},\mu_j}(t_j) \rg_{\mathbb{H}} \\
&-\mu \int_{\mathbb{R}} \frac{\gamma\alpha_k}{2(y-\overline{z}_k)_{\epsilon}} \lg B_{\gamma,\epsilon}(y)\prod_{i=1}^N V_{\alpha_i,\epsilon}(z_i)\prod_{j=1}^M B_{\beta_j,\epsilon}^{\mu_{j-1},\mu_j}(t_j) \rg_{\mathbb{H}} dy.
\end{align*}
Here the notation $B_{\gamma}(y)$ simply means that inserting $y$ between any $t_j$ and $t_{j+1}$ will keep the same boundary constant $s_j$ on both sides of $y$.  

Similarly, when deriving with respect to $t_j$:
\begin{align*}
\partial_{t_k} \lg \prod_{i=1}^N V_{\alpha_i,\epsilon}(z_i)\prod_{j=1}^M B_{\beta_j,\epsilon}^{\mu_{j-1},\mu_j}(t_j) \rg_{\mathbb{H}} = \sum_{i} \frac{\alpha_i\beta_k}{2(z_i-t_k)_{\epsilon}} + \sum_i\frac{\alpha_i\beta_k}{2(\overline{z}_i-t_k)_{\epsilon}}+\sum_{j:j\neq k} \frac{\beta_j\beta_k}{2(t_j-t_k)_{\epsilon}}\\
-\mu \int_{\mathbb{R}} \frac{\gamma\beta_k}{2(y-t_k)_{\epsilon}} \lg B_{\gamma,\epsilon}(y)\prod_{i=1}^N V_{\alpha_i,\epsilon}(z_i)\prod_{j=1}^M B_{\beta_j,\epsilon}^{\mu_{j-1},\mu_j}(t_j) \rg_{\mathbb{H}} dy.
\end{align*}
We observe that the derivative rules behave exactly as if we have insertions $z_i, \overline{z}_i, t_j$ on the sphere.

Now we present the analogies of $P_{\n}Q_{\q}$:
\begin{definition}
For $n\in \mathbb{N}^*$, we define
\begin{equation}
\mathcal{P}_n(t, \mathbf{t};\mathbf{z}):=\sum_{i=1}^N\left(\frac{\alpha_i}{2(z_i-t)^n}+\frac{\alpha_i}{2(\overline{z}_i-t)^n} \right) + \sum_{j=1}^M \frac{\beta_j}{2(t_j-t)^n}.
\end{equation}
Let $p\in \mathbb{N}$, and $\mathbf{q}=(q_1, \dots, q_p) \in (\mathbb{N}^*)^p$, we define
\begin{equation}
\mathcal{Q}_{\mathbf{q}}(z,\mathbf{z}):=(\frac{\mu \gamma}{2})^p \int_{\mathbb{R}^p} \prod_{l=1}^p\frac{1}{(y_l-t)^{q_l}} \lg B^{+}_{-(r-1)\chi}(t) \prod_{l=1}^p B_{\gamma}(y_l)\prod_{i=1}^NV_{\alpha_i}(z_i)\prod_{j=1}^M B_{\beta_j}^{\mu_{j-1},\mu_j}(t_j) \rg_{\mathbb{H}}\,d \mathbf{y}
\end{equation} 
We keep using the notation $T_k$ for $T_k{Q_{\mathbf{q}}}=Q_{q_1, \dots, q_k+1, \dots, q_p}$.
\end{definition}
\begin{rem}
There is no need for the regularization $\theta_{\delta}$ around $t$ because $y_l=t$ is not a singularity when $t$ is in the upper half plane. Therefore $\mathcal{Q}_{\q}$ are well defined objects.
\end{rem}

From the observations above, one can easily notice the following result:
\begin{prop}
$L_{-n}^{\mathbb{H}}$ with $\mathcal{P_{\n}} \mathcal{Q}_{\q}$ satisfy Proposition \ref{prop L_-nQ} without perturbation terms $\mathfrak{R}_{\delta}$.
\end{prop}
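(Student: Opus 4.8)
The plan is to transcribe the proof of Proposition \ref{prop L_-nQ} line by line, replacing Lemma \ref{lem derivation G} by the three boundary derivative rules displayed above and checking that every manipulation survives with $\mathfrak{R}_\delta$ set to $0$. The heuristic $V_{\alpha_i}(z_i)=B_{\alpha_i}(z_i)B_{\alpha_i}(\overline{z}_i)$ makes the bookkeeping transparent: the rules for $\partial_{z_k}$, $\partial_{\overline{z}_k}$ and $\partial_{t_k}$ show that $L_{-n}^{\mathbb{H}}$ acts on a boundary correlation exactly as $L_{-n}$ acts on a sphere correlation carrying the $2N+M$ insertions $z_i,\overline{z}_i,t_j$ together with the degenerate insertion of charge $-(r-1)\chi$ at $t$, the only difference being that the single GMC variable $y$ now runs over $\mathbb{R}$ rather than $\mathbb{C}$.

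First I would establish the boundary analogue of Lemma \ref{lemma L_nQ}, i.e. the computation of $L_{-n}^{\mathbb{H}}\mathcal{Q}_0$ for $n\ge 2$. Applying the three derivative rules splits the result into an $A_\epsilon$-term collecting the sums over pairs of insertions and a $B_\epsilon$-term collecting the integral of a $B_\gamma(y)$-insertion against the kernel, exactly as in \eqref{equation AB}. The identity \eqref{multisum} is insertion-agnostic, so the same symmetrization collapses the double sum in $A_\epsilon$ into $-\sum_{i=1}^{n-1}\mathcal{P}_i\mathcal{P}_{n-i}+((n-1)Q-(r-1)\chi)\mathcal{P}_n$, the diagonal corrections combining with the conformal weights $\Delta_{\alpha_i},\Delta_{\beta_j}$ through $\Delta_{\alpha}+\tfrac{\alpha^2}{4}=\tfrac{\alpha Q}{2}$ precisely as on the sphere. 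The two manipulations of $B_\epsilon$ — a dominated-convergence step using the bound $x/(x)_\epsilon\le c$ and an integration by parts — produce $2\sum_{i=1}^{n-1}\mathcal{P}_i\mathcal{Q}_{n-i}+((r-1)\chi-\tfrac{2(n-1)}{\gamma})\mathcal{Q}_n-\sum_{i=1}^{n-1}\mathcal{Q}_{i,n-i}$.

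The one genuine difference, and the reason the perturbation term disappears, occurs at the integration-by-parts step. On the sphere the cutoff $\theta_\delta$ was forced because $y=z$ is a singularity of the kernel, and $\partial_z\theta_\delta$ generated the boundary contribution gathered into $\mathfrak{R}_\delta$. Here $t\in\mathbb{H}$ while the contour is $y\in\mathbb{R}$, so $y=t$ never occurs: the kernels $1/(y-t)^k$ are smooth along the whole contour, no cutoff is introduced, and the one-dimensional integration by parts leaves only boundary terms at $y\to\pm\infty$, which vanish by decay. The lift from $\q=0$ to general $\q$ then repeats the proof of Proposition \ref{prop L_-nQ}: one introduces $L_{-n}^{\langle p\rangle,\mathbb{H}}$ treating the $p$ variables $y_l\in\mathbb{R}$ as additional $B_\gamma$ insertions of weight $\Delta_\gamma=1$, writes $L_{-n}^{\mathbb{H}}=L_{-n}^{\langle p\rangle,\mathbb{H}}+\sum_i\bigl(\tfrac{\partial_{y_i}}{(y_i-t)^{n-1}}-\tfrac{n-1}{(y_i-t)^n}\bigr)$, and integrates the correction by parts to obtain $\sum_i q_i T_i^n\mathcal{Q}_{\q}$ with no residual term. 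The case $n=1$ is immediate from the holomorphy of the correlation in $t$.

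The main obstacle is analytic rather than algebraic: one must justify differentiating under the integral sign and the dominated-convergence and integration-by-parts manipulations on $\mathbb{R}$ and $\mathbb{R}^2$, including across the integrable collisions $y_i=t_j$. This requires the boundary analogues of the integrability estimates of \cite{Kupiainen2015local}, namely that $y\mapsto \sup_\epsilon \lg B_{\gamma,\epsilon}(y)\prod_{i=1}^N V_{\alpha_i,\epsilon}(z_i)\prod_{j=1}^M B_{\beta_j,\epsilon}^{\mu_{j-1},\mu_j}(t_j)\rg_{\mathbb{H}}$ and its two-point version are integrable, together with the $\eta_\epsilon$-regularization of the Neumann field $X_{\mathbb{H}}$ recalled above. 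Once these are in hand the computation is a verbatim transcription of the sphere case, and since no $\theta_\delta$ ever enters the recursion holds with $\mathfrak{R}_\delta\equiv 0$ for all $\gamma\in(0,2)$.
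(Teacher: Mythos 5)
Your proposal is correct and follows essentially the same route as the paper: the paper likewise treats the boundary correlation as a sphere-type correlation with insertions $z_i,\overline{z}_i,t_j$ via the stated derivative rules, and observes that no $\theta_\delta$ cutoff is needed because $y\in\mathbb{R}$ never meets $t\in\mathbb{H}$, so the recursion of Proposition \ref{prop L_-nQ} carries over with $\mathfrak{R}_\delta\equiv 0$. Your additional remarks on the integrability estimates needed to justify the manipulations are consistent with (and slightly more explicit than) what the paper records.
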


As a consequence, we can write 
\begin{align*}
D^{\mathbb{H}}_r \lg B^{+}_{-(r-1)\chi}(t) \prod_{i=1}^NV_{\alpha_i}(z_i)\prod_{j=1}^M B_{\beta_j}^{\mu_{j-1},\mu_j}(t_j) \rg_{\mathbb{H}} = \sum_{|\n|+|\q|=r}\lambda_{\n,\q}(\gamma)\mathcal{P}_{\n}\mathcal{Q}_{\q} = 0.
\end{align*}
Since there is no perturbation term, there is no constraint on $\gamma$. This finishes the proof of Theorem \ref{theo boundary BPZ} with the degenerate insertion $B^{+}_{-(r-1)\chi}(t)$. The case with $B^{-}_{-(r-1)\chi}(t)$ is exactly the same, and the case of BPZ equation with the degenerate insertion $V_{-(r-1)\chi}(z)$ can also be proved in the same manner. This allows to conclude the proof for Theorem \ref{theo boundary BPZ}.

\section{BPZ equations for the integrand}\label{section integrand}

Here we provide an elementary proof to show that the integrand of real Coulomb gas integrals satisfy BPZ equations. This result is the key element to show that real Coulomb gas integrals satisfy BPZ equations.

\begin{prop}
For $r \in \mathbb{N}^*$, the following differential equation holds:
\begin{equation}
\mathcal{D}_{r} f(t,\mathbf{t})=0,
\end{equation}
where
\begin{equation*}
f(t,\mathbf{t}):=f^{(0)}_{-(r-1)\chi,\alp}(t,\mathbf{t})= \prod_{1\le i< j \le N} (t_j-t_i)^{-\frac{\alpha_i\alpha_j}{2}} \prod_{i=1}^N (t_i-t)^{\frac{(r-1)\chi \alpha_i}{2}}
\end{equation*}
and recall that $L_{-n}$ for real variables are defined by
	\begin{equation}
	L_{-1}=\partial_{t}, \quad L_{-n}=\sum_{l=1}^N \left(-\frac{\partial_{t_l}}{(t_l-t)^{n-1}}+\frac{\Delta_l(n-1)}{(t_l-t)^n}\right) \quad n\ge 2.
	\end{equation}
$\Delta_l:=\frac{\alpha_l}{2}(Q-\frac{\alpha_l}{2})$ is the conformal weight. 
\end{prop}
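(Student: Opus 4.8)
The plan is to treat $f$ as the free-field correlator $\prod_{0\le i<j\le N}(t_j-t_i)^{-\alpha_i\alpha_j/2}$ with the conventions $t_0=t$ and $\alpha_0=-(r-1)\chi$, and to run the algebraic machinery of Section \ref{section recursive formulas} in the simplified regime where there are no screening charges, so that neither the quantities $Q_{\q}$ nor the perturbations $\mathfrak{R}_{\delta}$ occur. First I would compute the action of a single operator by logarithmic differentiation. Since $\partial_{t_l}\log f=-\sum_{j\neq l}\frac{\alpha_l\alpha_j}{2(t_l-t_j)}$ (the sum running over $0\le j\le N$, $j\neq l$), the contribution of $j=0$ produces the term $\alpha_0 P_n=-(r-1)\chi\,P_n$, while the remaining double sum is symmetrized by the partial-fraction identity \eqref{multisum}. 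The only delicate point is that the terms $\sum_l\frac{\alpha_l^2}{(t_l-t)^n}$ generated by differentiating $f$ cancel exactly against the corresponding terms hidden in $\Delta_l=\frac{\alpha_l}{2}(Q-\frac{\alpha_l}{2})$, leaving
\[
L_{-n}f=\Big(((n-1)Q-(r-1)\chi)P_n-\sum_{i=1}^{n-1}P_iP_{n-i}\Big)f,
\]
with $P_m=\sum_{l=1}^N\frac{\alpha_l}{2(t_l-t)^m}$. This is precisely the $\q=0$ specialization of Proposition \ref{prop L_-nQ} with all $Q_{\q}$-terms and all $\mathfrak{R}_{\delta}$ absent, which is exactly what makes the statement amenable to an elementary treatment.

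Next I would record the action of $L_{-n}$ on a general product $P_{\n}f$. As $L_{-n}$ is first order plus multiplication, Leibniz splits it into a raising part acting on the prefactor, via $\partial_{t_l}P_{n_i}=-\frac{n_i\alpha_l}{2(t_l-t)^{n_i+1}}$, giving $\sum_i n_i\frac{P_{n_i+n}}{P_{n_i}}$, together with multiplication by the weight-$n$ polynomial found above. Iterating, every word $L_{-n_1}\cdots L_{-n_k}$ with $\sum_i n_i=r$ maps $f$ to $f$ times a weight-$r$ polynomial in the $P_m$, and the coefficients of that polynomial depend only on $Q$, $\chi$ and $r$ — not on $N$, the positions, or the charges $\alpha_l$. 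Hence $\mathcal{D}_{r}f=\big(\sum_{|\n|=r}c_{\n}(\gamma)\,P_{\n}\big)f$ for universal constants $c_{\n}(\gamma)$, and the lemma becomes the purely combinatorial assertion that every $c_{\n}(\gamma)$ vanishes; equivalently, realizing $L_{-n}$ as the operator $R_n+v_n$ on the polynomial algebra $\mathbb{C}[P_1,P_2,\dots]$ (with $R_n$ the raising derivation and $v_n$ the multiplier above), one must show $\mathcal{D}_{r}\cdot 1=0$.

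To establish this vanishing I would argue by induction on $r$, following the reorganization already used for $r=3$ in \eqref{equation D3}: using the commutation relations \eqref{commutation} to peel off a factor $L_{-1}$ and to isolate the top operator $L_{-r}$, one reduces the bulk of $\mathcal{D}_{r}$ to an expression handled by the inductive hypothesis, leaving only the coefficient of $P_r$ and a handful of boundary contributions to check by hand. As the explicit $r=2$ computation already shows, the whole cancellation is driven by the single algebraic relation $\chi^2-Q\chi+1=0$, i.e. $Q=\chi+\chi^{-1}$, which holds for both admissible values $\chi=\frac{\gamma}{2}$ and $\chi=\frac{2}{\gamma}$ because these have product $1$ and sum $Q$ and are therefore the two roots of $x^2-Qx+1$. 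I expect the main obstacle to be precisely the bookkeeping of the Beno\^it--Saint-Aubin weights $\frac{(\chi^2)^{r-k}}{\prod_{j=1}^{k-1}s_j(r-s_j)}$, with $s_j=n_1+\dots+n_j$, under the reordering induced by \eqref{commutation}: one must show that the contributions to each fixed monomial $P_{\n}$ telescope and that the residual coefficient at every level is a multiple of $\chi^2-Q\chi+1$. All remaining steps are routine differentiation and resummation, with no analytic subtleties since no regularization is needed here.
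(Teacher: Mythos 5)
Your first two paragraphs are correct and coincide with the paper's own first step: the formula
$L_{-n}P_{\n}f=\bigl(\sum_i n_i\frac{P_{n_i+n}}{P_{n_i}}-\sum_{i=1}^{n-1}P_iP_{n-i}+((n-1)Q-(r-1)\chi)P_n\bigr)P_{\n}f$
and the resulting reduction to showing that the universal coefficients in $\mathcal{D}_r f=\bigl(\sum_{|\n|=r}\lambda_{\n}(\gamma)P_{\n}\bigr)f$ all vanish. The gap is in your last paragraph, which is where the entire content of the proposition lives. Two concrete problems. First, your induction on $r$ has no well-posed inductive hypothesis: the function $f=f^{(0)}_{-(r-1)\chi,\alp}$ itself depends on $r$ through the exponent $\frac{(r-1)\chi\alpha_i}{2}$ (equivalently through $\alpha_0=-(r-1)\chi$), so the statement $\mathcal{D}_{r-1}f^{(0)}_{-(r-2)\chi,\alp}=0$ says nothing about how the length-$(r-1)$ subwords of $\mathcal{D}_r$ act on the present $f$; moreover the Beno\^it--Saint-Aubin weights $\frac{(\chi^2)^{r-k}}{\prod_j s_j(r-s_j)}$ do not factor through any decomposition of the form $L_{-1}\mathcal{D}_{r-1}+\cdots$ that you exhibit (the $r=3$ rewriting \eqref{equation D3} is an ad hoc identity, not an instance of a general recursion you have stated). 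Second, the sentence ``one must show that the contributions to each fixed monomial $P_{\n}$ telescope and that the residual coefficient at every level is a multiple of $\chi^2-Q\chi+1$'' is a restatement of the problem, not an argument: for general $r$ you provide no mechanism that produces the telescoping.

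The paper supplies exactly this missing mechanism, and it is worth recording because nothing in your proposal substitutes for it. One passes to the quotient of $\mathbb{C}[\alp]$ by the relations $\alpha_i^2=0$ for all $i$; this loses no information because, with $N\ge r$, each $\lambda_{\n}(\gamma)$ is recovered from the multilinear part by linear independence of the functions $\alpha_{i_1}\cdots\alpha_{i_m}\big/\prod_j(t_{i_j}-t)^{n_{i_j}}$ over distinct indices. Under this nilpotency the weight $\Delta_l$ degenerates to $\frac{Q\alpha_l}{2}$, all same-index cross terms in $\sum P_iP_{n-i}$ die, and one can run an induction \emph{on the number $K$ of operators already expanded} (peeling $L_{-n_1},\dots,L_{-n_K}$ off one end of each word), rather than on $r$. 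The cancellation between the order-$k$ and order-$(k+1)$ contributions to a fixed multilinear monomial is then driven by the explicit identity
\begin{equation*}
\sum_{L=1}^{K}\frac{n_K'\prod_{j=L}^{K-1}\bigl(\sum_{i=1}^{j} n_{i}\bigr)}{\prod_{j=L-1}^{K-1}\bigl(\sum_{i=1}^{j} n_{i}+n_K'\bigr)}=1,
\end{equation*}
applied to the BSA weights, proved by a short induction on $K$. Without this quotient trick (or an equivalent device for organizing the cancellations), the combinatorial heart of the proof is absent from your proposal.
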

\begin{proof}
First let us transform it into a combinatorial problem as what we have done in section \ref{section recursive formulas}. Let $P_{n}(t,\mathbf{t}) = \sum_{l=1}^N \frac{\alpha_l}{2(t_l-t)^n}$, we can show easily
\begin{equation}\label{equation Lf}
L_{-n}P_{\n}f=  \left(\sum_{i} n_i \frac{P_{n_i+n}}{P_{n_i}}-\sum_{i=1}^{n-1}P_iP_{n-i}+ ( (n-1)Q - (r-1)\chi )P_n\right) P_{\n}f
\end{equation}
As a consequence, $\mathcal{D}_r f = \left(\sum_{|\n|=r}\lambda_{\n}(\gamma)P_{\n}\right)f$, where $\lambda_{\n}(\gamma)$ corresponds to $\lambda_{\n,\q}(\gamma)$ with $\q=0$ (see \eqref{equation lambda}). Thus we will need to show that all the coefficients $\lambda_{\n}(\gamma)$ are zero.

Without loss of generality, we can restrict ourselves to the case $N\ge r$. Interestingly, we can largely simplify the problem if we take another point of view by treating $P_{\n}(t,\mathbf{t})$ as a polynomial of $(t,\mathbf{t})$ with values in the algebra $\mathbb{C}[\alp]$. If we quotient by the relation
\begin{equation}\label{equation quotient}
\forall 1\le i \le N,\, \alpha_{i}^2 = 0,
\end{equation}
then we can write
 $$P_{\n}=\prod_{i=1}^m P_{n_i} = \sum_{1 \le i_1 < \dots  i_m \le N} \frac{N!}{2^m(N-m)!}\frac{\alpha_{i_1} \dots \alpha_{i_m}}{(t_{i_1}-t)^{n_{i_1}}\dots (t_{i_m}-t)^{n_{i_m}}},$$
and
$$\mathcal{D}_r f = \left(\sum_{m=1}^r \sum_{\substack{(n_1,\,\dots,\,n_k)\in \mathbb{N^*}^k \\
		n_1+\dots+n_k=r}}\lambda_{\n}(\gamma)\sum_{1 \le i_1 < \dots  i_m \le N} \frac{N!}{2^m(N-m)!}\frac{\alpha_{i_1} \dots \alpha_{i_m}}{(t_{i_1}-t)^{n_{i_1}}\dots (t_{i_m}-t)^{n_{i_m}}}\right)f.$$
Thus if we can show $\mathcal{D}_r f =0$ under the quotient relation \eqref{equation quotient}, then by linear independence of functions $$(\alp,t,\mathbf{t})\mapsto \frac{\alpha_{i_1} \dots \alpha_{i_m}}{(t_{i_1}-t)^{n_{i_1}}\dots (t_{i_m}-t)^{n_{i_m}}},$$
we have that $\lambda_{\n}(\gamma)=0$ for all $|\n|=r$.

Now we prove $\mathcal{D}_r f =0$ under the condition \eqref{equation quotient}. In this setting, the operators $L_{-n}$ can be rewritten as 
\begin{equation}
L_{-n}:=\sum_{l=1}^N \left(-\frac{\partial_{t_l}}{(t_l-t)^{n-1}}+\frac{Q(n-1)\alpha_l}{2(t_l-t)^n}\right).
\end{equation} 
By first developing $L_{-n_1}f$ with the formula \eqref{equation Lf}, we have
\begin{align}\label{equation 3.6}
\mathcal{D}_r f =& \sum_{l_1=1}^N \sum_{k=1}^r \sum_{n_1+\dots+n_k=r} \frac{(\chi^2)^{r-k}}{\prod_{j=1}^{k-1}(\sum_{i=1}^j n_{i})(r - \sum_{i=1}^j n_{i})} L_{-n_k}\dots L_{-n_{2}}\nonumber\\
&\Bigg[ \left(-\frac{(r-n_1)\chi}{2}+\frac{(n_1-1)}{2\chi} \right)\frac{\alpha_{l_1}}{(t_{l_1}-t)^{n_1}} -\sum_{l_2:l_2\neq l_1}\sum_{i=1}^{n_1-1} \frac{\alpha_{l_1}\alpha_{l_2}}{4}\frac{1}{(t_{l_1}-t)^{i}(t_{l_2}-t)^{n_1-i}}\Bigg]f.
\end{align}

Certain terms with order $k$ and $k+1$ (here the order means the number of $n_i$) cancel among themselves, for example, for fixed $(n_1,\dots,n_{k})$ and $l_1$, we consider the following term with order $k+1$:
\begin{equation}\label{equation 3.7}
\sum_{n_1'+n_1''=n_1}  \frac{(\chi^2)^{r-k}}{\prod_{j=1}^{k-1}(\sum_{i=1}^j n_{i})(r - \sum_{i=1}^j n_{i})} \frac{1}{\chi^2 n_1'(r-n_1')} L_{-n_k}\dots L_{-n_2} L_{-n_1''}(-\frac{(r-n_1')\chi}{2} )\frac{\alpha_{l_1}}{(t_{l_1}-t)^{n_1'}} f.
\end{equation}
If we extract the term that depends on $t_{l_1}$ in the operator $L_{-n_1''}$, we have
\begin{align*}
\left(-\frac{1}{(t_{l_1}-t)^{n_1''-1}}\partial_{t_{l_1}}+\frac{Q(n_1''-1)\alpha_{l_1}}{2(t_{l_1}-t)^{n_1''}} \right)(-\frac{(r-n_1')\chi}{2} )\frac{\alpha_{l_1}}{(t_{l_1}-t)^{n_1'}} f=-\frac{n_1'(r-n_1')\chi}{2} \frac{\alpha_{l_1}}{(t_{l_1}-t)^{n_1}} f,
\end{align*}
In this equation we have eliminated all the terms that contain $\alpha_{l_1}^2$. For example, it is not hard to see that $\frac{\alpha_{l_1}}{(t_{l_1}-t)^{n_1'+n_1''-1}} \partial_{t_{l_1}} f = 0$. The previous calculus shows that if we extract the term $-\frac{1}{(t_{l_1}-t)^{n_1''-1}}\partial_{t_{l_1}}+\frac{Q(n_1''-1)\alpha_{l_1}}{2(t_{l_1}-t)^{n_1''}}$ from $L_{-n_1''}$ in \eqref{equation 3.7}, we can simplify as follows to obtain a term of order $k$:
\begin{align*}
&-\sum_{n_1'+n_1''=n_1}  \frac{(\chi^2)^{r-k}}{\prod_{j=1}^{k-1}(\sum_{i=1}^j n_{i})(r - \sum_{i=1}^j n_{i})}  L_{-n_k}\dots L_{-n_2} L_{-n_1''}\frac{1}{2\chi } \frac{\alpha_{l_1}}{(t_{l_1}-t)^{n_1'}} f\\
=& -\frac{(\chi^2)^{r-k}}{\prod_{j=1}^{k-1}(\sum_{i=1}^j n_{i})(r - \sum_{i=1}^j n_{i})} L_{-n_k}\dots L_{-n_2}\frac{(n_1-1)}{2\chi}\frac{\alpha_{l_1}}{(t_{l_1}-t)^{n_1}} f.
\end{align*}
The last line that we extract from \eqref{equation 3.7} cancels a term of order $k$ in \eqref{equation 3.6}. Thus, after such cancellations,
\begin{align}\label{equation 3.8}
\mathcal{D}_r f =& \sum_{l_1=1}^N \sum_{k=1}^r \sum_{n_1+\dots+n_k=r} \frac{(\chi^2)^{r-k}}{\prod_{j=1}^{k-1}(\sum_{i=1}^j n_{i})(r - \sum_{i=1}^j n_{i})} \Bigg[ (-\frac{(r-n_1)\chi}{2})L_{-n_k}\dots L^{\lg \cancel{l_1} \rg}_{-n_2}\frac{\alpha_{l_1}}{(t_{l_1}-t)^{n_1}}\nonumber\\
& -L_{-n_k}\dots L_{-n_2}\sum_{l_2:l_2 \neq l_1}\sum_{i=1}^{n_1-1} \frac{\alpha_{l_1}\alpha_{l_2}}{4}\frac{1}{(t_{l_1}-t)^{i}(t_{l_2}-t)^{n_1-i}}\Bigg]f
\end{align}
where $L^{\lg \cancel{l} \rg}_{-n} = \sum_{l'\neq l} \left(-\frac{\partial_{t_{l'}}}{(t_{l'}-t)^{n-1}}+\frac{Q(n-1)\alpha_{l'}}{2(t_{l'}-t)^n}\right)$.

Again by fixing $(n_1,\dots,n_{k})$, we consider the following term of order $k+1$:
\begin{small}
\begin{align*}
&\sum_{n_1'+n_1''=n_1} \frac{(\chi^2)^{r-k}}{\prod_{j=1}^{k-1}(\sum_{i=1}^j n_{i})(r - \sum_{i=1}^j n_{i})}  \frac{1}{\chi^2 n_1'(r-n_1')}(-\frac{(r-n_1')\chi}{2}) L_{-n_k}\dots L_{-n_2}L^{\lg \cancel{l_1} \rg}_{-n_1''}\frac{\alpha_{l_1}}{(t_{l_1}-t)^{n_1'}}f \\
=& -\sum_{l_2:l_2\neq l_1}\sum_{n_1'+n_1''=n_1} \frac{(\chi^2)^{r-k}}{\prod_{j=1}^{k-1}(\sum_{i=1}^j n_{i})(r - \sum_{i=1}^j n_{i})}\frac{1}{2\chi n_1'} L_{-n_k}\dots L_{-n_2}\\
&\Bigg[(-\frac{(r-n_1'')\chi}{2}+\frac{(n_1''-1)}{2\chi})\frac{\alpha_{l_1}\alpha_{l_2}}{(t_{l_1}-t)^{n_1'}(t_{l_2}-t)^{n_1''}} -\sum_{l_3:l_3 \notin \{l_1,l_2\}}\sum_{i=1}^{n_1''-1} \frac{\alpha_{l_1}\alpha_{l_2} \alpha_{l_3}}{4}\frac{1}{(t_{l_1}-t)^{n_1'}(t_{l_2}-t)^{i}(t_{l_3}-t)^{n_1''-i}}\Bigg]f,
\end{align*}
\end{small}
where in the equality we interchange $L^{\lg \cancel{l_1} \rg}_{-n_1''}$ and $\frac{\alpha_{l_1}}{(t_{l_1}-t)^{n_1'}}$ and then apply \eqref{equation Lf} to calculate $L^{\lg \cancel{l_1} \rg}_{-n_1''} f$. The terms with $\alpha_{l_1}^2$ or $\alpha_{l_2}^2$ were eliminated.
Then we get
\begin{align}\label{equation 3.9}
\mathcal{D}_r f =&\sum_{l_1\neq l_2} \sum_{k=2}^r \sum_{n_1+\dots+n_k=r} \frac{(\chi^2)^{r-k}}{\prod_{j=1}^{k-1}(\sum_{i=1}^j n_{i})(r - \sum_{i=1}^j n_{i})} (-\frac{(r-n_1)\chi}{2})L_{-n_k}\dots L_{-n_3}\nonumber\\
&\Bigg[(-\frac{(r-n_{1}-n_{2})\chi}{2}+\frac{(n_{2}-1)}{2\chi})\frac{\alpha_{l_1}\alpha_{l_2}}{(t_{l_1}-t)^{n_1}(t_{l_2}-t)^{n_2}} \nonumber\\
&-\sum_{l_3:l_3 \notin \{l_1,l_2\}}\sum_{i=1}^{n_2-1} \frac{\alpha_{l_1}\alpha_{l_2} \alpha_{l_3}}{4}\frac{1}{(t_{l_1}-t)^{n_1}(t_{l_2}-t)^{i}(t_{l_3}-t)^{n_2-i}}\Bigg]f
\end{align}
Comparing the expression above to the expression \eqref{equation 3.6}, we can proceed with a recurrence. Suppose that we arrive at the following expression, with $1\le K \le r-1$:
\begin{align}\label{A12}
\mathcal{D}_r f =&\sum_{\substack{l_1,\dots,l_K: \\ \forall i\neq j, l_i \neq l_j}} \sum_{k=K}^r \sum_{n_1+\dots+n_k=r} \frac{(\chi^2)^{r-k}}{\prod_{j=1}^{k-1}(\sum_{i=1}^j n_{i})(r - \sum_{i=1}^j n_{i})} \prod_{j=1}^{K-1}(-\frac{(r-\sum_{i=1}^j n_{i})\chi}{2}) L_{-n_k}\dots L_{-n_{K+1}} \nonumber\\
& \Bigg[(-\frac{(r-\sum_{i=1}^{K} n_{i})\chi}{2}+\frac{(n_{K}-1)}{2\chi})\prod_{j=1}^K \frac{\alpha_{l_j}}{(t_{l_j}-t)^{n_{j}}} \nonumber\\
&-\sum_{l_{K+1}:l_{K+1} \notin \{l_1,\dots,l_K\}}\sum_{i=1}^{n_K-1} \left(\prod_{j=1}^{K-1} \frac{\alpha_{l_j}}{(t_{l_j}-t)^{n_{j}}} \right)  \frac{\alpha_{l_K} \alpha_{l_{K+1}}}{4(t_{l_K}-t)^{i}(t_{l_{K+1}}-t)^{n_K-i}}\Bigg]f.
\end{align}

We will repeat what we have done from \eqref{equation 3.6} to \eqref{equation 3.9} in this general setting. For fixed tuple of $(n_1,\dots,n_k, l_1, \dots , l_K)$, consider the following configurations with $1 \le L \le K $  \[(n_1,\dots, n_{L-1},n_{K}',n_{L},\dots,n_{K-1},n_{K}'',n_{K+1},\dots,n_k)\]
Remark that if $L=K$, the decomposition is simply $(n_1,\dots,n_{K-1},n_K',n_{K}'',n_{K+1},\dots,n_k)$. The new configuration has $k+1$ terms and we can find some similar cancellations as previously by investigating the following term of order $k+1$:
\begin{small}
\begin{align*}
&\sum_{L=1}^{K}\sum_{n_K'+n_K''=n_K} \frac{(\chi^2)^{r-k}}{\prod_{j=1}^{k-1}(\sum_{i=1}^j n_{i})(r - \sum_{i=1}^j n_{i})} \frac{1}{\chi^2}\frac{\prod_{j=L}^{K-1}(\sum_{i=1}^j n_{i})(r - \sum_{i=1}^j n_{i})}{\prod_{j=L-1}^{K-1}(\sum_{i=1}^j n_{i}+n_K')(r - \sum_{i=1}^j n_{i}-n_K')}\\
&\quad \prod_{j=1}^{L-1}(-\frac{(r-\sum_{i=1}^j n_{i})\chi}{2})\prod_{j=L-1}^{K-2}(-\frac{(r-\sum_{i=1}^{j} n_{i}-n_K')\chi}{2})
L_{-n_k}\dots L_{-n_{K+1}}\\
&\quad (-\frac{(r-\sum_{i=1}^{K-1} n_{i}-n_{K}')\chi}{2})\left(-\frac{\partial_{t_{l_L}}}{(t_{l_L}-t)^{n_K''-1}}+\frac{\alpha_{l_L}}{(t_{l_L}-t)^{n_K''}} \right)
\sum_{\substack{l_1,\dots,l_K: \\ \forall i\neq j, l_i \neq l_j}}  \left(\prod_{\substack{j=1\\j \neq L}}^{K-1} \frac{\alpha_{l_j}}{(t_{l_j}-t)^{n_{j}}}\right) \frac{\alpha_{l_L}}{(t_{l_L}-t)^{n_{K}'}}f\end{align*}
\end{small}
Remark that this corresponds to the term extracted from \eqref{equation 3.7}. After some simplifications, this equals
\begin{align*}
& -\sum_{n_K'+n_K''=n_K}  \frac{(\chi^2)^{r-k}}{\prod_{j=1}^{k-1}(\sum_{i=1}^j n_{i})\prod_{j=K}^{k-1}(r - \sum_{i=1}^j n_{i})}\frac{1}{2\chi}(-\frac{\chi}{2})^{K-1}\left( \sum_{L=1}^{K}\frac{n_K'\prod_{j=L}^{K-1}(\sum_{i=1}^j n_{i})}{\prod_{j=L-1}^{K-1}(\sum_{i=1}^j n_{i}+n_K')} \right)   \\
&\qquad L_{-n_k}\dots L_{-n_{K+1}}\sum_{\substack{l_1,\dots,l_K: \\ \forall i\neq j, l_i \neq l_j}} \left( \prod_{j=1}^{K} \frac{\alpha_{l_j}}{(t_{l_j}-t)^{n_{j}}} \right) f.
\end{align*}
We claim that 
\begin{equation}
 \sum_{L=1}^{K}\frac{n_K'\prod_{j=L}^{K-1}(\sum_{i=1}^j n_{i})}{\prod_{j=L-1}^{K-1}(\sum_{i=1}^j n_{i}+n_K')}=1,
\end{equation}
which is not difficult to prove by induction on $K$: the case $K=1$ is trivial. Suppose that the identity holds for $K=K_0$, by induction we consider $K=K_0+1$:
\begin{equation}
\sum_{L=1}^{K}\frac{n_K'\prod_{j=L}^{K-1}(\sum_{i=1}^j n_{i})}{\prod_{j=L-1}^{K-1}(\sum_{i=1}^j n_{i}+n_K')}=\frac{\sum_{i=1}^{K_0} n_{i}}{\sum_{i=1}^{K_0} n_{i}+n_{K_0+1}'}+\frac{n_{K_0+1}'}{\sum_{i=1}^{K_0} n_{i}+n_{K_0+1}'}=1,
\end{equation}
where we separate the sum in to $\sum_{L=1}^{K_0}+\sum_{L=K_0+1}$. Therefore we can further simplify the expression above for the term of order $k+1$, which equals now
\begin{small}
\begin{align*}
-\sum_{\substack{l_1,\dots,l_K: \\ \forall i\neq j, l_i \neq l_j}} \frac{(\chi^2)^{r-k}}{\prod_{j=1}^{k-1}(\sum_{i=1}^j n_{i})(r - \sum_{i=1}^j n_{i})} \prod_{j=1}^{K-1}(-\frac{(r-\sum_{i=1}^j n_{i})\chi}{2})L_{-n_k}\dots L_{-n_{K+1}}  \frac{(n_{K}-1)}{2 \chi} \left(\prod_{j=1}^K \frac{\alpha_{l_j}}{(t_{l_j}-t)^{n_{j}}}\right) f.
\end{align*}
\end{small}
The last line cancels a term of order $k$ in \eqref{A12}. Therefore, after canceling the terms, we arrive at an expression which is the analogue of \eqref{equation 3.8}
\begin{align}
\mathcal{D}_r f =&\sum_{\substack{l_1,\dots,l_K: \\ \forall i\neq j, l_i \neq l_j}} \sum_{k=K}^r \sum_{n_1+\dots+n_k=r} \frac{(\chi^2)^{r-k}}{\prod_{j=1}^{k-1}(\sum_{i=1}^j n_{i})(r - \sum_{i=1}^j n_{i})} \prod_{j=1}^{K-1}(-\frac{(r-\sum_{i=1}^j n_{i})\chi}{2})\nonumber\\
& \Bigg[L_{-n_k}\dots L_{-n_{K+1}}^{\lg \cancel{l_1},\dots,\cancel{l_K} \rg}(-\frac{(r-\sum_{i=1}^{K} n_{i})\chi}{2}) \left(\prod_{j=1}^K \frac{\alpha_{l_j}}{(t_{l_j}-t)^{n_{j}}}\right) -L_{-n_k}\dots L_{-n_{K+1}}\nonumber\\
&\sum_{l_{K+1}:l_{K+1} \notin \{l_1,\dots,l_K\}}\sum_{i=1}^{n_K-1} \left(\prod_{j=1}^{K-1} \frac{\alpha_{l_j}}{(t_{l_j}-t)^{n_{j}}} \right) \frac{\alpha_{l_K} \alpha_{l_{K+1}}}{4(t_{l_K}-t)^{i}(t_{l_{K+1}}-t)^{n_K-i}}\Bigg] f
\end{align}
The last step is to develop the operator $L_{-n_{K+1}}^{\lg \cancel{l_1},\dots,\cancel{l_K} \rg}$ exactly as what we did to \eqref{equation 3.8} and we will obtain
\begin{align}
\mathcal{D}_r f =&\sum_{\substack{l_1,\dots,l_{K+1}: \\ \forall i\neq j, l_i \neq l_j}} \sum_{k={K+1}}^r \sum_{n_1+\dots+n_k=r} \frac{(\chi^2)^{r-k}}{\prod_{j=1}^{k-1}(\sum_{i=1}^j n_{i})(r - \sum_{i=1}^j n_{i})} \prod_{j=1}^{K}(-\frac{(r-\sum_{i=1}^j n_{i})\chi}{2}) \nonumber\\
& L_{-n_k}\dots L_{-n_{K+2}}\Bigg[(-\frac{(r-\sum_{i=1}^{K+1} n_{i})\chi}{2}+\frac{(n_{K+1}-1)}{2 \chi}) \left(\prod_{j=1}^K \frac{\alpha_{l_j}}{(t_{l_j}-t)^{n_{j}}}\right) \nonumber\\
&-\sum_{l_{K+2}:l_{K+2} \notin \{l_1,\dots,l_{K+1}\}}\sum_{i=1}^{n_{K+1}-1} \left( \prod_{j=1}^{K} \frac{\alpha_{l_j}}{(t_{l_j}-t)^{n_{j}}} \right) \frac{\alpha_{l_{K+1}} \alpha_{l_{K+2}}}{4(t_{l_{K+1}}-t)^{i}(t_{l_{K+2}}-t)^{n_{K+1}-i}}\Bigg]f
\end{align}
This allows us to go from $K$ to $K+1$ in the statement \eqref{A12}. When $K$ grows to $r$, as $n_i=1$ for all $1\le i \le r$ and $\sum_{i=1}^{r} n_{i}=r$, we obtain
\begin{equation}
\mathcal{D}_r f = 0
\end{equation}
under the condition \eqref{equation quotient}. By discussions at the beginning of the proof, this allows to conclude that $\lambda_{\mathbf{n}}(\gamma)=0$ for all $|\mathbf{n}|=r$, hence $\mathcal{D}_r f = 0$ in the general setting.
\end{proof}

\appendix

\section{Proof of the derivative rule}\label{section proof derivative}

Let us first recall the Gaussian integration by parts formula:

\begin{lem}[Gaussian integration by parts]\label{ipp gaussien discrete}
Let $(Y_1,Y_2)\in \mathbb{R}\times \mathbb{R}^d, d\ge 1$ a centered Gaussian vector, and $\phi \in \mathcal{C}^{\infty}(\mathbb{R}^d)$ a function that decays faster than any polynomials at infinity. Then,
	\begin{equation}
	\E[Y_1 \phi(Y_2)]=\E[Y_1Y_2]\E[\nabla \phi(Y_2)]
	\end{equation}
\end{lem}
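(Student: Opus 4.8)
The plan is to reduce the statement to the one-dimensional Gaussian integration by parts identity via a linear change of variables, so that no hypothesis on the rank of the covariance is needed.

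First I would represent the data through independent standard Gaussians. Since $(Y_1,Y_2)$ is centered and jointly Gaussian, let $m$ be the rank of $\mathrm{Cov}(Y_2)$ and write $Y_2 = A g$, where $g=(g_1,\dots,g_m)$ is a standard Gaussian vector in $\mathbb{R}^m$ and $A$ is a $d\times m$ matrix with $A A^{\mathsf T} = \mathrm{Cov}(Y_2)$; note that $A$ is then injective, since $\mathrm{rank}(A)=\mathrm{rank}(AA^{\mathsf T})=m$. Projecting $Y_1$ onto the span of $g_1,\dots,g_m$ in $L^2$, I can write $Y_1 = b^{\mathsf T} g + Z$ with $b\in\mathbb{R}^m$ and $Z$ a centered Gaussian uncorrelated with — hence, by joint Gaussianity, independent of — $g$. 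Because $Z$ is independent of $Y_2=Ag$ and $\E[Z]=0$, the term $\E[Z\,\phi(Y_2)] = \E[Z]\,\E[\phi(Y_2)]$ vanishes, leaving
\begin{equation*}
\E[Y_1\,\phi(Y_2)] = \E\big[(b^{\mathsf T} g)\,\phi(Ag)\big] = \sum_{j=1}^m b_j\,\E\big[g_j\,\phi(Ag)\big].
\end{equation*}

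The core step is the scalar identity $\E[g_j\,\psi(g)] = \E[\partial_{g_j}\psi(g)]$ for a standard Gaussian $g$ and a smooth, fast-decaying $\psi$. It rests on the elementary observation $g_j\,e^{-|g|^2/2} = -\partial_{g_j} e^{-|g|^2/2}$: writing the integral against the density and integrating by parts in the $j$-th variable, the boundary contributions vanish by the decay hypothesis, leaving only $\E[\partial_{g_j}\psi(g)]$. I would apply this with $\psi(g) = \phi(Ag)$, whose fast decay is inherited from that of $\phi$ precisely because $A$ is injective (so $|Ag|\ge c|g|$). The chain rule $\partial_{g_j}\psi(g) = \sum_{k=1}^d A_{kj}\,(\partial_k\phi)(Ag)$ then gives
\begin{equation*}
\E[Y_1\,\phi(Y_2)] = \sum_{j=1}^m b_j \sum_{k=1}^d A_{kj}\,\E[(\partial_k\phi)(Y_2)] = \sum_{k=1}^d (Ab)_k\,\E[(\partial_k\phi)(Y_2)].
\end{equation*}

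To finish I would only need the bookkeeping identity $Ab = \E[Y_1 Y_2]$, which is immediate since $\E[Y_1 Y_2] = \E[(b^{\mathsf T} g + Z)(Ag)] = A\,\E[g g^{\mathsf T}]\,b = Ab$, using $\E[gg^{\mathsf T}] = I_m$ and the independence of $Z$ and $g$. Substituting yields exactly $\E[Y_1\phi(Y_2)] = \E[Y_1 Y_2]\cdot\E[\nabla\phi(Y_2)]$, the claimed formula, the product on the right being read as the Euclidean inner product of the two vectors of $\mathbb{R}^d$. The only genuine subtlety is the possible degeneracy of $\mathrm{Cov}(Y_2)$; the reduction to the standard, hence nondegenerate, Gaussian $g$ dissolves this at once, so the sole analytic point requiring care is the vanishing of the boundary terms in the scalar integration by parts, guaranteed by the assumption that $\phi$ decays faster than any polynomial.
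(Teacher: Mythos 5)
Your proof is correct, and it takes a genuinely different route from the paper's. The paper computes the joint moment generating function $\E[e^{\lambda Y_1+\mu\cdot Y_2}]$, differentiates in $\lambda$ at $\lambda=0$ to obtain the identity for the special test functions $\phi(y)=e^{\mu\cdot y}$, and then invokes a density argument to pass to general $\phi$; that last step is left implicit and would require approximating $\phi$ by linear combinations of exponentials in a topology strong enough to control both $\E[Y_1\phi(Y_2)]$ and $\E[\nabla\phi(Y_2)]$. You instead reduce to a standard Gaussian $g$ via $Y_2=Ag$ with $AA^{\mathsf T}=\mathrm{Cov}(Y_2)$, split off the part of $Y_1$ independent of $g$, and perform a literal integration by parts against the Gaussian density, recombining via the chain rule and the identity $Ab=\E[Y_1Y_2]$. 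What your approach buys is that the analytic content is isolated in one transparent step (vanishing of boundary terms, which your decay hypothesis handles, with the injectivity of $A$ correctly invoked to transfer the decay of $\phi$ to $\psi(g)=\phi(Ag)$), and degenerate covariances are handled cleanly; the cost is the linear-algebra bookkeeping. One small point worth making explicit: the vector $g$ should be constructed inside the Gaussian space spanned by the components of $Y_2$ (e.g.\ by Gram--Schmidt in $L^2$), so that $(Y_1,g)$ is automatically jointly Gaussian and the residual $Z$ in your projection is genuinely independent of $g$ rather than merely uncorrelated. With that said, both proofs share the unavoidable implicit assumption that $\nabla\phi$ is integrable against the law of $Y_2$, which is needed for the right-hand side of the lemma to make sense at all.
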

\begin{proof}
Let $(\lambda,\mu) \in \mathbb{R}\times \mathbb{R}^d$, we calculate
	\begin{equation*}
	\E[e^{\lambda Y_1+\mu\cdot Y_2}]=e^{\frac{\lambda^2}{2}\E[Y_1^2]+\lambda \mu \cdot \E[Y_1Y_2]+\frac{1}{2}{}^t\mu \textnormal{Var}(Y_2) \mu}
	\end{equation*}
Taking the derivative with respect to $\lambda$ and evaluate at $\lambda =0$, we obtain:
\[\E[Y_1e^{\mu\cdot Y_2}]=\mu \cdot \E[Y_1Y_2] e^{\frac{1}{2}{}^t\mu \textnormal{Var}(Y_2) \mu}\]
This proves the forumla for the function $\phi(y)=e^{\mu \cdot y}$. We then conclude with an argument of density.
\end{proof}

To calculate derivatives of the correlation functions, we will need a "continuous" version of Gaussian integration by parts, where $Y_2$ is now of infinite dimension:

\begin{lem}
Let $\epsilon > 0$ and $f$ a smooth test function with compact support. Denote $X(f) = (X,f)$, we have 
	\begin{equation}
	\begin{split}
	\lg X(f) \prod_{l=0}^N V_{\alpha_l,\epsilon}(z_l)\rg_{\delta} =& \sum_{l=0}^N \alpha_l \E[X(f) X_{\epsilon}(z_l)]\lg \prod_{l=0}^N V_{\alpha_l,\epsilon}(z_l)\rg_{\delta}\\
	&-\mu \gamma \intc \theta_{\delta}(y-z_0 )\E[X(f) X_{\epsilon}(y)]\lg V_{\gamma,\epsilon}(y) \prod_{l=0}^N V_{\alpha_l,\epsilon}(z_l)\rg_{\delta} d^2y
	\end{split}
	\end{equation}
\end{lem}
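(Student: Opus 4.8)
The plan is to insert $X(f)$ inside the expectation defining $\lg \prod_{l=0}^N V_{\alpha_l,\epsilon}(z_l)\rg_{\delta}$ and to recognize the outcome as a functional Gaussian integration by parts, i.e.\ the infinite-dimensional analogue of Lemma \ref{ipp gaussien discrete}. For each fixed value of the zero-mode $c$, the quantity inside $\E[\cdots]$ is a smooth functional $\phi$ of the Gaussian field $X$, depending on $X$ only through the point evaluations $X_\epsilon(z_l)$, $0\le l\le N$, and through the field $X_\epsilon(x)$ appearing in the exponent of the regularized GMC term. I would apply the functional Gaussian integration by parts
\[
\E[X(f)\,\phi(X)] = \E\big[ D_{Cf}\phi(X)\big],
\]
where $C$ is the covariance operator of $X$ and $D_h$ denotes the directional derivative of $\phi$ in the Cameron--Martin direction $h$. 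The point is that $D_{Cf}X_\epsilon(z) = (Cf)_\epsilon(z) = \E[X(f)X_\epsilon(z)]$, so each differentiation direction produces exactly the covariance $\E[X(f)X_\epsilon(\cdot)]$ appearing in the statement.

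To make this rigorous I would reduce to the finite-dimensional Lemma \ref{ipp gaussien discrete} by discretizing the GMC integral: replace $\intc \theta_\delta(x-z_0) e^{\gamma X_\epsilon(x)-\frac{\gamma^2}{2}\E[X_\epsilon(x)^2]}\hg_\epsilon(x)\,d^2x$ by a Riemann sum over a finite grid $x_1,\dots,x_K$. The integrand then becomes a genuine function $\phi$ of the finite centered Gaussian vector $Y_2 = \big(X_\epsilon(z_0),\dots,X_\epsilon(z_N),X_\epsilon(x_1),\dots,X_\epsilon(x_K)\big)$, to which Lemma \ref{ipp gaussien discrete} applies with $Y_1 = X(f)$. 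Here $\phi$ is built from the exponential factors $e^{\alpha_l X_\epsilon(z_l)}$ and a bounded damping factor, so it has at most exponential growth, which is precisely the class covered by the moment-generating-function identity used in the proof of Lemma \ref{ipp gaussien discrete}. Passing $K\to\infty$ is justified by dominated convergence: the GMC exponential is bounded by $1$ since its exponent is nonpositive, the factors $e^{\alpha_l X_\epsilon(z_l)}$ have all Gaussian moments, and $y\mapsto \E[X(f)X_\epsilon(y)]$ is bounded and continuous because $f$ has compact support.

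It then remains to identify the two families of gradient contributions. Differentiating each vertex factor $e^{\alpha_l(X_\epsilon(z_l)+c)-\frac{\alpha_l^2}{2}\E[X_\epsilon(z_l)^2]}$ in the direction $Cf$ brings down $\alpha_l\,\E[X(f)X_\epsilon(z_l)]$, and after reintegrating over $c$ these reassemble into $\sum_{l=0}^N \alpha_l\,\E[X(f)X_\epsilon(z_l)]\lg \prod_{l=0}^N V_{\alpha_l,\epsilon}(z_l)\rg_{\delta}$. Differentiating the GMC exponential in the direction $Cf$ produces the factor
\[
-\mu\gamma\intc \theta_\delta(y-z_0)\,\E[X(f)X_\epsilon(y)]\, e^{\gamma(X_\epsilon(y)+c)-\frac{\gamma^2}{2}\E[X_\epsilon(y)^2]}\hg_\epsilon(y)\,d^2y
\]
times the original functional; since $\Delta_\gamma=\frac{\gamma}{2}(Q-\frac{\gamma}{2})=1$, the weight $g_\epsilon(y)^{\Delta_\gamma}$ of a $\gamma$-insertion coincides with the metric factor $\hg_\epsilon(y)$, so the extra weight $e^{\gamma(X_\epsilon(y)+c)-\frac{\gamma^2}{2}\E[X_\epsilon(y)^2]}\hg_\epsilon(y)$, together with the prefactor adjustment, is exactly what converts $\lg \prod_{l} V_{\alpha_l,\epsilon}(z_l)\rg_{\delta}$ into $\lg V_{\gamma,\epsilon}(y)\prod_{l} V_{\alpha_l,\epsilon}(z_l)\rg_{\delta}$. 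Collecting gives the claimed identity.

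The main obstacle I expect is the rigorous passage between the finite-dimensional and the functional integration by parts, and in particular the interchange of the $c$-integral, the expectation, and the grid limit $K\to\infty$. The required uniform integrability is supplied by the pointwise bound $e^{-\mu e^{\gamma c}(\cdots)}\le 1$ on the GMC exponential and by the integrability at infinity of $y\mapsto \sup_\epsilon \lg V_{\gamma,\epsilon}(y)\prod_l V_{\alpha_l,\epsilon}(z_l)\rg$ recalled after Lemma \ref{lem derivation G}; the bookkeeping of the deterministic prefactors when one extra $\gamma$-insertion is created is routine but must be checked against the conventions of \eqref{correlation limit}.
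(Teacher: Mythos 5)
Your proposal is correct and follows essentially the same route as the paper, which proves this lemma precisely by discretizing the GMC integral and applying the finite-dimensional Gaussian integration by parts of Lemma \ref{ipp gaussien discrete}, then passing to the limit in the grid. The only minor imprecision is your claim that $y\mapsto\E[X(f)X_{\epsilon}(y)]$ is bounded: because of the $\ln\hg$ terms in the covariance it grows logarithmically at infinity when $\intc f\neq 0$, but this does not affect the dominated-convergence step since the correlation with the extra $\gamma$-insertion decays polynomially.
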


This result can be obtained from the previous Gaussian integration by parts formula by discretizing the Gaussian multiplicative chaos measure. Consider a function $f$ such that $\intc f = 0$. By definition, we have for $\epsilon > 0$ and $x \in \mathbb{C}$,
\begin{equation*}
 \E[X(f) X_{\epsilon}(x)] = f*C_{\epsilon}(x) -\frac{1}{4}\intc \ln \hg(y)f(y)d^2y
\end{equation*}
where $C_{\epsilon}=\ln \frac{1}{|z|}*\eta_{\epsilon}$. Recall that $\eta_{\epsilon}=\frac{1}{\epsilon^2}\eta(\frac{|x|^2}{\epsilon^2})$ with $\eta$ supported in $[\frac{1}{2},1]$. Then by the Gaussian integration by parts formula,
\begin{align*}
&\lg (X + \frac{Q}{2} \ln \hg)(f)\prod_{l=0}^N V_{\alpha_l,\epsilon}(z_l)\rg_{\delta} \\
=& \sum_{l=0}^N \alpha_l \,f*C_{\epsilon}(z_l)\lg \prod_{l=0}^N V_{\alpha_l,\epsilon}(z_l)\rg_{\delta}-\mu \gamma \intc\theta_{\delta}(y-z_0 )  f*C_{\epsilon}(y) \lg V_{\gamma,\epsilon}(y)\prod_{l=0}^N V_{\alpha_l,\epsilon}(z_l)\rg_{\delta} d^2y\\
&+\frac{1}{4}\intc  \ln \hg(y)f(y)d^2y \Big( (2Q-\sum_{l=0}^N \alpha_l)\lg \prod_{l=0}^N V_{\alpha_l,\epsilon}(z_l)\rg_{\delta}+\mu \gamma \intc \theta_{\delta}(y'-z_0 )\lg V_{\gamma,\epsilon}(y')\prod_{l=0}^N V_{\alpha_l,\epsilon}(z_l)\rg_{\delta}d^2y'\Big)\\
=& \sum_{l=0}^N \alpha_l \,f*C_{\epsilon}(z_l)\lg \prod_{l=0}^N V_{\alpha_l,\epsilon}(z_l)\rg_{\delta}-\mu \gamma \intc\theta_{\delta}(y-z_0 )  f*C_{\epsilon}(y) \lg V_{\gamma,\epsilon}(y)\prod_{l=0}^N V_{\alpha_l,\epsilon}(z_l)\rg_{\delta} d^2y,
\end{align*}
where in the last equality we used the KPZ identity \eqref{equation KPZ} to help the cancellation. Now we take $f= \partial_z\eta_{\epsilon}(z_i-\cdot)$. Note that for $1\le i \le N$,
\begin{align*}
\partial_{z_i} \left(g_{\epsilon}(z_i)^{\Delta_{\alpha_i}} e^{\alpha_i X_{\epsilon}(z_i) - \frac{\alpha_i^2}{2} \E[X_{\epsilon}(z_i)^2]} \right) &= \alpha_i \partial_{z_i}(X_{\epsilon}(z_i) + \frac{Q}{2} \ln \hg_{\epsilon}(z_i))\, g_{\epsilon}(z_i)^{\Delta_{\alpha_i}} e^{\alpha_i X_{\epsilon}(z_i) - \frac{\alpha_i^2}{2} \E[X_{\epsilon}(z_i)^2]}\\
&=\alpha_i (X + \frac{Q}{2} \ln \hg)(f)\, g_{\epsilon}(z_i)^{\Delta_{\alpha_i}} e^{\alpha_i X_{\epsilon}(z_i) - \frac{\alpha_i^2}{2} \E[X_{\epsilon}(z_i)^2]}.
\end{align*}
Hence for $1\le i \le N$,
\begin{align*}
\partial_{z_i}\lg \prod_{l=0}^N V_{\alpha_l,\epsilon}(z_l)\rg_{\delta} = &\sum_{\substack{j=0\\j\neq i}}^N  \frac{\alpha_i\alpha_j}{2(z_j-z_i)_{\epsilon}}\lg \prod_{l=1}^N V_{\alpha_l,\epsilon}(z_l)\rg_{\delta}-\frac{\mu \gamma \alpha_i}{2} \intc \frac{\theta_{\delta}(y-z_0)}{(y-z_i)_{\epsilon}}\lg V_{\gamma,\epsilon}(y)\prod_{l=0}^N V_{\alpha_l,\epsilon}(z_l)\rg_{\delta}d^2y.
\end{align*}
When $i = 0$, there is an additional term coming from the derivative of the regularization $\theta_{\delta}(y-z_0)$, which gives
\begin{align*}
\mu \intc \partial_z \theta_{\delta}(y-z_0)\lg V_{\gamma,\epsilon}(y)\prod_{l=0}^N V_{\alpha_l,\epsilon}(z_l)\rg_{\delta} d^2y,
\end{align*}
this concludes the proof for the derivative rule \ref{lem derivation G}.


\begin{thebibliography}{9}

\bibitem{belavin1984infinite}
Belavin A. A., Polyakov A. M., Zamolodchikov A. B. (1984). Infinite conformal symmetry in two-dimensional quantum field theory. \emph{Nuclear Physics B}, 241(2), 333-380.

\bibitem{Benoit1988}
Benoit L., Saint-Aubin Y. (1988). Degenerate conformal field theories and explicit expressions for some null vectors. \emph{Physics Letters B}, 215(3), 517-522.

\bibitem{Berestycki2017}
Berestycki, N. (2017). An elementary approach to Gaussian multiplicative chaos. Electronic communications in Probability, 22.

\bibitem{David2016}
David F., Kupiainen A., Rhodes R., Vargas V. (2016). Liouville quantum gravity on the Riemann sphere. \emph{Communications in Mathematical Physics}, 342(3), 869-907.

\bibitem{Dorn1994}
Dorn H., Otto H. J. (1994). Two-and three-point functions in Liouville theory. \emph{Nuclear Physics B}, 429(2), 375-388.

\bibitem{Dotsenko1984conformal}
Dotsenko V. S., Fateev V. A. (1984). Conformal algebra and multipoint correlation functions in 2D statistical models. \emph{Nuclear Physics B}, 240(3), 312-348.

\bibitem{Dotsenko1985four-point}
Dotsenko V. S., Fateev V. A. (1985). Four-point correlation functions and the operator algebra in the 2D conformal invariant theories with the central charge $C\le 1$. \emph{Nuclear Physics B}, 251, 691-734.

\bibitem{Dubedat2015sle}
Dub\'edat J. (2015). SLE and Virasoro representations: fusion. \emph{Communications in Mathematical Physics}, 336(2), 761-809.

\bibitem{Fateev2007coulomb}
Fateev V. A., Litvinov A. V. (2007). Coulomb integrals in Liouville theory and Liouville gravity. \emph{JETP letters}, 84(10), 531-536.

\bibitem{Fateev2008multipoint}
Fateev V. A., Litvinov A. V. (2008). Multipoint correlation functions in Liouville field theory and minimal Liouville gravity. \emph{Theoretical and Mathematical Physics}, 154(3), 454-472.

\bibitem{Fateev2000}
Fateev V., Zamolodchikov A.,  Zamolodchikov A. (2000). Boundary Liouville field theory I. Boundary state and boundary two-point function. arXiv preprint hep-th/0001012.

\bibitem{2015Huang}
Huang Y., Rhodes R., Vargas V. (2018). Liouville Quantum Gravity on the unit disk. \emph{Annales de l'Institut Henri Poincar\'e, Probabilit\'es et Statistiques}, Vol. 54, No. 3, pp. 1694-1730.

\bibitem{2018Huang}
Huang Y. (2018). Path integral approach to analytic continuation of Liouville theory: the pencil region. arXiv preprint arXiv:1809.08650.

\bibitem{Kahane1985}
Kahane, J. P. (1985). Sur le chaos multiplicatif. \emph{Ann. Sci. Math. Qu\'ebec}, 9(2), 105-150.

\bibitem{Knizhnik1988}
Knizhnik V. G., Polyakov A. M., Zamolodchikov A. B. (1988). Fractal structure of 2d quantum gravity. \emph{Modern Physics Letters A}, 3(08), 819-826.

\bibitem{Kupiainen2015local}
Kupiainen A., Rhodes R., Vargas V. (2019). Local conformal structure of Liouville quantum gravity. \emph{Communications in Mathematical Physics}, 371(3), 1005-1069.

\bibitem{Kupiainen2017DOZZ}
Kupiainen A., Rhodes R., Vargas V. (2020). Integrability of Liouville theory: proof of the DOZZ Formula. \emph{Annals of Mathematics}, 191(1), 81-166.

\bibitem{Kytola2014conformally}
Kytölä K., Peltola E. (2014). Conformally covariant boundary correlation functions with a quantum group. arXiv preprint arXiv:1408.1384.

\bibitem{Oikarinen2018Smoothness}
Oikarinen J. (2019). Smoothness of Correlation Functions in Liouville Conformal Field Theory. \emph{Annales Henri Poincar\'e}, Vol. 20, No. 7, pp. 2377-2406.

\bibitem{Ponsot2002}
Ponsot B.,  Teschner J. (2002). Boundary Liouville field theory: boundary three-point function. \emph{Nuclear Physics B}, 622(1-2), 309-327.

\bibitem{Remy2017}
Remy G. (2019). The Fyodorov–Bouchaud formula and Liouville conformal field theory. \emph{Duke Mathematical Journal}.

\bibitem{Remy2018}
Remy G., Zhu T. (2018). The distribution of Gaussian multiplicative chaos on the unit interval. arXiv preprint arXiv:1804.02942.

\bibitem{Seiberg1990}
Seiberg N. (1990). Notes on quantum Liouville theory and quantum gravity. \emph{Progress of Theoretical Physics Supplement}, 102, 319-349.

\bibitem{Vargas2017}
Vargas V. (2017). Lecture notes on Liouville theory and the DOZZ formula. arXiv preprint arXiv:1712.00829.

\bibitem{Zamolodchikov1995}
Zamolodchikov A., Zamolodchikov A. (1996). Conformal bootstrap in Liouville field theory. \emph{Nuclear Physics B}, 477(2), 577-605.


\end{thebibliography}
\end{document}